\newcommand{\xmark}{\ding{55}}%
\newcommand{\C}{ {\mbox{\tiny (C)}} }
\newcommand{\PD}{ {\mbox{\tiny (PD)}} }
\newcommand{\be}{\begin{equation}}
\newcommand{\ee}{\end{equation}}
\newcommand{\ba}{\begin{eqnarray}}
\newcommand{\ea}{\end{eqnarray}}
\newcommand{\beq}{\begin{equation}}
\newcommand{\eeq}{\end{equation}}
\newcommand{\beqa}{\begin{eqnarray}}
\newcommand{\eeqa}{\end{eqnarray}}
\newcommand{\im}{\mathrm{i}}
\newcommand{\dd}{{\rm{d}}}
\newtheorem{statement}{Lemma}
\begin{document}


\title{Hidden symmetries and separability structures of Ovcharenko--Podolsk{\'y} and conformal-to-Carter spacetimes}%

\author{Finnian Gray}

\email{finnian.gray@univie.ac.at}
\affiliation{University of Vienna, Faculty of Physics, Boltzmanngasse~5, 1090 Vienna, Austria}

\author{David Kubiz\v n\'ak}

\email{david.kubiznak@matfyz.cuni.cz}

\affiliation{Institute of Theoretical Physics, Faculty of Mathematics and Physics,
Charles University, Prague, V Hole{\v s}ovi{\v c}k{\' a}ch 2, 180 00 Prague 8, Czech Republic}

\author{Hryhorii Ovcharenko}

\email{hryhorii.ovcharenko@matfyz.cuni.cz}

\affiliation{Institute of Theoretical Physics, Faculty of Mathematics and Physics,
Charles University, Prague, V Hole{\v s}ovi{\v c}k{\' a}ch 2, 180 00 Prague 8, Czech Republic}

\author{Ji{\v r}{\'i} Podolsk{\' y}}

\email{jiri.Podolsk\'y@matfyz.cuni.cz}

\affiliation{Institute of Theoretical Physics, Faculty of Mathematics and Physics,
Charles University, Prague, V Hole{\v s}ovi{\v c}k{\' a}ch 2, 180 00 Prague 8, Czech Republic}

\date{November 26, 2025}

\begin{abstract}
Recently, a remarkable new class of spacetimes describing black holes immersed in a non-aligned electromagnetic field has been found. 
While still of type D, this class goes beyond the famous Pleba\'nski--Demia\'nski family. Here we demonstrate that the whole class admits a hidden symmetry encoded in the non-degenerate conformal Killing--Yano 2-form. Interestingly, as a direct consequence of non-alignment of the electromagnetic field and contrary to the Pleba\'nski--Demia\'nski class (where the field is aligned), such a symmetry no longer generates the full ``tower of symmetries". 
Despite this, it enables one to separate variables in massless Hamilton--Jacobi, conformal wave, and massless Dirac equations, as well as allows one to tackle massless vector and tensor perturbations. These results provide a useful mathematical tool for discussing numerous (astrophysical) applications to be described by these metrics. Moreover, as we shall show, the novel spacetimes provide an interesting test-ground for studying the recently defined Penrose charges whose existence is intrinsically related to hidden rather than explicit symmetries.
\end{abstract}

 \maketitle

\section{Introduction}

Killing tensors and Killing--Yano forms are noteworthy {\em hidden symmetries} of black hole spacetimes in general relativity. 
Discovered first by studying the geodesic motion \cite{Carter:1968rr}, they underlie many of the fundamental properties of rotating Kerr geometry \cite{Kerr:1963ud}, and more generally of {\em Kerr--NUT--AdS spacetimes} in four \cite{carter1968new, Carter:1968ks} and higher \cite{Chen:2006xh} dimensions. 
In particular, they stand behind the separability of Hamilton--Jacobi, Klein--Gordon, Dirac, and (massive) vector equations in these spacetimes, determine the special algebraic type of the solution, are related to its associated Kerr--Schild form, and lead {\em uniquely} to the (off-shell) Kerr--NUT--AdS geometry,%
\footnote{Perhaps the `cleanest'' (and well motivated) derivation of the Kerr spacetime is to assume the existence of a non-degenerate Killing--Yano 2-form. 
This fixes the geometry to the off-shell Kerr--NUT--AdS form. 
By further imposing the vacuum Einstein equations together with demanding regularity on the axes then uniquely leads to the Kerr geometry.
Similar construction remains valid also in higher dimensions  \cite{Krtous:2008tb}.} 
see the review \cite{Frolov:2017kze} and references therein for more details.
In this paper we shall primarily focus on four dimensions, and call the corresponding Kerr--NUT--AdS spacetimes (potentially also  with an aligned electromagnetic field) in honour of their discoverer the {\em Carter spacetimes.}

Interestingly, in four dimensions there exists even a larger family of black hole spacetimes, encoded in the so called {\em Pleba\'nski--Demia\'nski (PD) class} \cite{Plebanski:1976gy}. This is the most general type D solution of Einstein--Maxwell equations with an {\em aligned} electromagnetic field.  
Apart from rotating charged black holes, possibly endowed with a cosmological constant and NUT parameter, there is an additional parameter allowing black holes to uniformly accelerate, being pulled by unbalanced cosmic strings.
Surprisingly, the corresponding metric is conformally related to the off-shell Carter  spacetime, inheriting some of its  hidden symmetries. 
Namely, a {\em weaker structure} of the so-called conformal Killing--Yano forms and conformal Killing tensors is now present, e.g. \cite{Kubiznak:2007kh}, allowing for  integrability of null geodesics, and separability of conformal test field equations in these spacetimes. 

Very recently, a novel remarkable class of type D spacetimes with {\em non-aligned} electromagnetic fields was discovered by {\em Ovcharenko and Podolsk\'y (OP)} \cite{Ovcharenko:2025cpm}.  
While the full class still remains to be explored and interpreted, it gives rise, as a particular special case, to a new rotating black hole spacetime emersed in the Bertotti--Robinson magnetic universe---the so called {\em Kerr--BR spacetime}  \cite{Podolsky:2025tle}. Such a solution, which describes a  rotating black hole in a magnetic environment, is of great potential astrophysical interest and has been a subject of very active recent investigations \cite{Zeng:2025olq, Wang:2025vsx, Zeng:2025tji, Wang:2025bjf,Astorino:2025lih, Ali:2025beh, Vachher:2025jsq, Zhang:2025ole, Rueda:2025lgq, Liu:2025wwq, Ahmed:2025ril, Ortaggio:2025sip}.

The aim of this paper is to further investigate the {\em properties} of the new OP spacetime family, focusing mainly on its mathematical structure, hidden symmetries, and integrability properties. 
First, we shall show that (perhaps not surprisingly and  similarly to the PD case) the spacetime admits a hidden symmetry of the conformal Killing--Yano 2-form, and associated with it conformal Killing tensors. 
Contrary to the PD spacetime, however, in the presence of non-aligned electromagnetic field there is no clear path to derive from this object the existence of explicit symmetries. 
This observation motivates us to investigate even more general {\em conformal-to-Carter} geometries, and to probe under which conditions their hidden symmetry (inherited from  Carter's geometry) implies the existence of explicit symmetries. 

Second, we shall investigate the separability of various physical test field equations in the OP spacetimes. In particular, we will demonstrate the separability of the massless Hamilton--Jacobi, conformal wave, massless Dirac, and massless vector and tensor perturbations in these geometries. 
Some of these results can be directly traced back to the existence of hidden symmetries and from them derived associated symmetry operators. 

Finally, we shall investigate some special sub-cases of these general results, focusing mainly on the non-twisting and Kerr--BR spacetimes. 
The latter will be shown to provide a first ever example of a spacetime with {\em conserved charges derived completely from the hidden symmetry} of conformal Killing--Yano tensors, as recently proposed in \cite{Hull:2025ivk}, {where the corresponding current (see \eqref{Current1form} below) is no longer generated from a Killing vector field}.

The paper is structured as follows. 
In Sec.~\ref{Sec:HS} we recall the basic definitions and properties of various hidden symmetries. 
The two  well known type D classes of spacetimes with aligned electromagnetic fields, namely the Carter and PD spacetimes, are reviewed---together with their hidden symmetries---in Sec.~\ref{Sec:Carter&PD}. 
The novel OP spacetime is introduced in Sec.~\ref{Sec:OP}, and its symmetries are then discussed. 
Even more general conformal-to-Carter spacetimes, are studied in Sec.~\ref{Sec:Tower}, with a particular emphasis on the derivation of explicit symmetries from its hidden symmetry. 
Various separability structures are studied in Sec.~\ref{Sec:Separability}. 
The non-twisting case of the OP spacetime is analyzed in Sec.~\ref{Sec:Nontwisting}, while Sec.~\ref{Sec:SpecialCases} discusses some other interesting special cases, including the Kerr--BR and Schwarzschild--BR metrics, and their Penrose charges. We conclude in Sec.~\ref{Sec:Conclusions}. In Appendix~\ref{App:onshell}, we gather the explicit form of the metric functions for the on-shell Carter, PD, and OP spacetimes.   Appendix~\ref{App:B} contains the coordinate transformations which link various forms of the geometrically preferred metrics discussed in the main text.

\section{Hidden symmetries}\label{Sec:HS}

In this section, we shall briefly overview the basic properties of objects responsible for hidden symmetries in general relativity. 
We start with {\em conformal Killing--Yano} (CKY) tensors and their special cases. 
A CKY tensor is a $p$-form $h$ whose covariant derivative has a vanishing harmonic part, that is, it can be split into the {\em exterior derivative} and {\em divergence} parts as follows  \cite{Kashiwada, Tachibana}: 
\begin{equation}\label{CKY}
\nabla_{\!\mu}h_{\alpha_1\dots \alpha_p}=
\nabla_{\![\mu}h_{\alpha_1\dots \alpha_p]}+
\frac{p}{D-p+1}\,g_{\mu[\alpha_1\!}\nabla_{\!|\kappa|}h^{\kappa}_{\ \alpha_2\dots\alpha_p]}\,,
\end{equation}
where $D$ stands for the number of spacetime dimensions.
The word {\em conformal} refers to the fact that such objects behave ``nicely'' under conformal transformations. 
Namely, considering the Weyl transformation 
\be\label{Weyl} 
g\to g'=\frac{1}{\Omega^2}\, g\,,
\ee 
the appropriately rescaled $p$-form  $h'$, obtained as, 
\be\label{conformalProp} 
h \to h' =\frac{1}{\Omega^{p+1}}\,h\,,
\ee 
remains a CKY tensor of the metric $g'$. 
It is largely this property, that is responsible for the results discussed in this paper.

Another useful property is that the CKY equation \eqref{CKY} is invariant under the {\em Hodge duality}. 
Namely, the exterior derivative part transforms into the divergence part and vice versa. 
This in particular implies that the dual $\star h$ is a CKY tensor whenever $h$ is. 
However, one can do even more.
There are two special subclasses of CKY tensors of particular interest:
i) {\em Killing--Yano} (KY) tensors \cite{Yano} are CKYs with zero divergence part in \eqref{CKY}, and ii) {\em closed conformal Killing--Yano} (CCKY) tensors are CKYs with vanishing exterior derivative part in \eqref{CKY}. 
As obvious from the above discussion, these subclasses transform one into another under Hodge duality.

From now on, let us set $D=4$ and concentrate on the case of CKY 2-forms, relevant for black hole spacetimes in four dimensions. 
The above defining equation \eqref{CKY} then significantly simplifies and reads 
\begin{equation}\label{CKY4D}
\nabla_{\!\mu}h_{\alpha\beta}=
\nabla_{\![\mu}h_{\alpha\beta]}+
{2}\, g_{\mu[\alpha} \xi_{\beta]}\;,
\end{equation}
where we have defined the vector ${\xi}$ as the normalized divergence of $h$, namely:
\begin{equation}\label{xi}
\xi_{\alpha}=\frac{1}{3}\nabla_{\!\kappa}\,h^{\kappa}_{\ \alpha}\;.
\end{equation}
It can be shown, e.g. \cite{Jezierski_2006}, that such a vector obeys the following equation:
\be \label{xiKVeq}
\nabla_{(\alpha}\,\xi_{\beta)}=\frac{1}{2}h_{(\alpha|\gamma|}R^\gamma{}_{\beta)}\,,
\ee 
where $R_{\alpha\beta}$ is the Ricci tensor. 
Thus, whenever the spacetime is an Einstein space, such $\xi$ is necessarily a Killing vector. However, in more general spacetimes this need not be the case. 
We shall return to this question in connection with conformal-to-Carter metrics in Sec.~\ref{Sec:Tower}.
Similarly, one can also define another ``dual'' vector $\zeta$, corresponding to the dual CKY 2-form $k=\star h$, namely 
\begin{equation}\label{zeta}
\zeta_{\alpha}=\frac{1}{3}\nabla_{\!\kappa}\,k^{\kappa}_{\ \alpha}\;.
\end{equation}
If $h$ is a KY, CCKY tensor then $\xi$, $\zeta$ vanishes, respectively.

As shown recently in \cite{Hull:2025ivk} (see also \cite{Lindstrom:2021dpm,Lindstrom:2022qjx}, CKY forms give rise to conserved higher-form charges. 
In particular, having a CKY 2-form \eqref{CKY4D} in 4D, we can construct a ``{\em primary current}', given by 
\be 
J_{\mu\nu}^{h}=-\frac{1}{2}\Bigl(R_{\mu\nu \alpha\beta}\,h^{\alpha\beta}+4R^\alpha{}_{[\mu}h_{\nu]\alpha}+R\,h_{\mu\nu}\Bigr)+\frac{\Lambda}{3} h_{\mu\nu}\,,
\ee 
obeying 
\be \label{eq: current relation}
\nabla^\mu  J^h_{\mu\nu}=J^h_\nu\,,
\ee 
with the ``{\em secondary current'}
$J_\mu$ given by 
\be \label{Current1form}
J^h_\mu=(G_{\mu\nu}+\Lambda g_{\mu\nu})\,\xi^\nu\,,
\ee 
where $\xi$ is the vector field \eqref{xi} and $G_{\mu\nu}$ is the Einstein tensor. The terms proportional to the cosmological constant $\Lambda$ are necessary for finite results in the case when the spacetime is asymptotically (anti-)de Sitter.

Although $\xi$, in general, need not be a Killing vector, the resultant current $J^h_\mu$ is nevertheless conserved:
\be 
\nabla_\mu J_h^\mu=0\,.
\ee 
This can be used to define {\em gravitational  ``Penrose'' charges}
\begin{equation}\label{eq: Penrose Charge}
    {q_{\mbox{\tiny P}}[h]}=\frac{1}{8\pi}\int_{S^2} {(\star {J}^h)_{\mu\nu}}\,\dd S^{\mu\nu}\,,
\end{equation}
see~\cite{Hull:2025ivk}.
Here $S^2$ is a $2$-dimensional surface which e.g. in the  Boyer--Lindquist-type coordinates is taken to be a sphere of constant radius.
Notice, that from \eqref{eq: current relation} this charge is not necessarily conserved unless $J^{h}_\mu$ vanishes. However, for the spacetimes we consider it turns out that the radial component does vanish, which is enough for the desired conservation. 

In Sec.~\ref{Sec:SpecialCases} we shall give explicit  examples of these charges for the two CKY tensors, namely $h$ and ${k=\star h}$ in the Kerr--BR and  Schwarzschild--BR spacetimes  (in which case indeed $\xi$ and $\zeta$ are not  Killing vectors).

Finally, having a CKY tensor, its square 
\begin{equation}\label{cKT}
K^{\alpha\beta}=h^{\alpha\kappa}{h^{\beta}}_{\kappa}\;,
\end{equation} 
is a {\em conformal Killing tensor}, obeying 
\begin{equation}
\nabla_{\!(\alpha} K_{\beta\gamma)}=g_{(\alpha\beta}\,K_{\gamma)}\;,
\end{equation}
where 
\begin{equation}
K_{\alpha}=\frac{1}{D+2}(2\nabla_{\!\kappa}K^{\kappa}_{\ \alpha}+\nabla_{\!\alpha}K^{\kappa}_{\ \kappa})\,.
\end{equation}
When $h$ in \eqref{cKT} is a KY tensor, then $K_\alpha$ vanishes, and we recover the standard Killing tensor. Such (conformal) Killing tensors give rise to quadratic in momenta constants of motion for (null) geodesics. 
More specifically, for a (null) geodesic with tangent vector $v^\mu=\frac{dx^\mu}{d\lambda}$, where $\lambda$ is an ``affine parameter'' for which $v^\nu\nabla_\nu v^\mu=0$, we obtain the following generalized {\em Carter's constant} of motion:
\be \label{Carter1}
C=K^{\mu\nu} v_\mu v_\nu\,.
\ee 
We refer the interested reader to the review \cite{Frolov:2017kze} and the references therein for more details {on hidden symmetries and their applications in black hole physics.}

\section{Two classes of type D spacetimes with aligned electromagnetic fields}\label{Sec:Carter&PD}

As mentioned in the introduction, there are two related classes of type D spacetimes with aligned electromagnetic fields. 
In this section, we briefly review these classes and discuss their (hidden) symmetries, following largely \cite{Kubiznak:2007kh}.

\subsection{Carter spacetime}
Perhaps best known is the Carter's (often referred to as the Kerr--NUT--AdS) spacetime. 
The metric takes the following form:
\ba
    g^{\mbox{\tiny (C)}}&=&-\dfrac{Q}{\rho^2}(\dd \eta-p^2\dd\sigma)^2+\dfrac{\rho^2}{Q}\dd q^2  \nonumber\\
    &&+\dfrac{P}{\rho^2}(\dd\eta+q^2\dd \sigma)^2 +\dfrac{\rho^2}{P}\dd p^2\,,
    \label{Carter}
\ea
where $\rho^2=q^2+p^2$, and the vector potential reads 
\be\label{ACarter}
A^{\mbox{\tiny (C)}}=-\frac{1}{q^2+p^2}\Bigl[{\rm e}\,q\,(\dd \eta -p^2 \dd \sigma)+{\rm g}\,p\,(\dd\eta+q^2 \dd \sigma)\Bigr]\,.
\ee 
Here  ${\rm e}$ and ${\rm g}$ are the electric, magnetic charges, respectively. 
In order to recover an on-shell solution of the Einstein--Maxwell equations with cosmological constant, the two metric functions
\be\label{off-shell} 
Q=Q(q)\,,\quad P=P(p)\,,
\ee
assume a specific form of quartic polynomials, see Appendix~\ref{App:onshell}. 
However, as is well known, e.g., \cite{Kubiznak:2007kh}, the existence of hidden symmetries of this metric, and the associated separability properties, remain valid {\em off-shell}, for general functions $Q$ and $P$ of the form \eqref{off-shell}---these functions hence need not be concretely specified. 
Actually, this will be the case of all (related) metrics discussed in this paper.

The off-shell Carter's spacetime is known to admit the {\em principal tensor}. 
This is a non-degenerate CCKY 2-form $h^{\mbox{\tiny (C)}}$ obeying 
\be\label{PKY} 
\nabla_\alpha h^{\mbox{\tiny (C)}}_{\beta\gamma}=2 g_{\alpha[\beta}\,\xi_{\gamma]}\,,\quad \xi_\alpha^{\mbox{\tiny (C)}}=\frac{1}{3}\nabla_\gamma \,h^{\mbox{\tiny (C)}}{}^\gamma{}_\alpha\,.
\ee 
Note that the exterior part in this case vanishes and so locally $ h^{\mbox{\tiny (C)}}$ can be expressed as the exterior derivative of a one-form $b$, 
\be 
h^{\mbox{\tiny (C)}}=\dd b\,,\label{cKY_rel}
\ee 
where 
\be 
2b=(p^2-q^2)\dd\eta+p^2q^2\dd\sigma\,,
\ee 
or more explicitly, 
\be \label{PKYexplicit}
h^{\mbox{\tiny (C)}} = 
  -q\,\dd q \wedge (\dd \eta-p^2\dd \sigma)
  +p\,\dd p \wedge (\dd\eta+q^2\dd \sigma)\,.
\ee 
The associated vector field $\xi^{\mbox{\tiny (C)}}$ given by \eqref{PKY} is Killing, and reads 
\be 
\xi^{\mbox{\tiny (C)}}=\partial_\eta\,,
\ee 
while the dual vector obtained form ${k^{\mbox{\tiny (C)}}=\star h^{\mbox{\tiny (C)}}}$
\be\label{zetaC} 
\zeta_{\alpha}^{\mbox{\tiny (C)}}=\frac{1}{3}\nabla_\gamma\, k^{\mbox{\tiny (C)}}{}^\gamma{}_\alpha\,,
\ee 
automatically vanishes, $\zeta^{\mbox{\tiny (C)}}=0$, on account of 
\ba \label{KYCarter}
k^{\mbox{\tiny (C)}}= p\,\dd q \wedge (\dd \eta-p^2\dd \sigma)
   +q\,\dd p\wedge (\dd \eta+q^2 \dd \sigma)\qquad 
\ea 
being a KY tensor.

The two 2-forms then generate the associated (conformal) Killing tensors. 
Namely, $K_{(h)}^{\mbox{\tiny (C)}}{}^{\alpha\beta}=h^{\mbox{\tiny (C)}}{}^{\alpha\gamma}\,h^{\mbox{\tiny (C)}}{}^\beta{}_\gamma$ is a conformal Killing tensor and  $K^{\mbox{\tiny (C)}}_{(k)}{}^{\alpha\beta}=k^{\mbox{\tiny (C)}}{}^{\alpha\gamma}\,k^{\mbox{\tiny (C)}}{}^\beta{}_\gamma$ is a Killing tensor. 
Explicitly, they have the form 
\ba
K_{(h)}^{\mbox{\tiny (C)}}&=&\,\,\frac{q^2Q}{\rho^2}(\dd\eta-p^2 \dd \sigma)^2
   -\frac{q^2 \rho^2}{Q}\dd q^2\nonumber\\ 
&&\!+\frac{p^2P}{\rho^2}(\dd\eta+q^2\dd\sigma)^2
      +\frac{p^2\rho^2}{P}\dd p^2\,,\label{cKT_h}
\ea  
and 
\ba
K_{(k)}^{\mbox{\tiny (C)}}&=&\,\,\frac{p^2Q}{\rho^2}(\dd\eta-p^2 \dd \sigma)^2
   -\frac{p^2 \rho^2}{Q}\dd q^2       \nonumber\\ 
 &&\!+\frac{q^2P}{\rho^2}(\dd\eta+q^2\dd\sigma)^2
   +\frac{q^2\rho^2}{P}\dd p^2\,.\label{cKT_k}
\ea  
Thanks to the Hodge relation between $h$ and $k$, the two are related as follows: 
\ba 
K_{(k)}^{\mbox{\tiny (C)}}&=&K_{(h)}^{\mbox{\tiny (C)}}-\frac{1}{2}h_{\alpha\beta}h^{\alpha\beta} g^{\mbox{\tiny (C)}}\nonumber\\
&=&K_{(h)}^{\mbox{\tiny (C)}}-
(p^2-q^2)g^{\mbox{\tiny (C)}}\,.\label{cKTs_rel}
\ea

Moreover, despite $\zeta$ vanishing, the hidden symmetry also gives rise to the second Killing vector of the spacetime by the following construction:
\be\label{chi} 
\chi^{\mbox{\tiny (C)}}{}^\alpha=K_{(k)}^{\mbox{\tiny (C)}}{}^{\alpha \beta} \xi^{\mbox{\tiny (C)}}_\beta\,.
\ee 
Specifically, one finds
\be \label{chiC}
\chi^{\mbox{\tiny (C)}}=\partial_\sigma\,. 
\ee 
The fact that, apart from the Killing tensor \eqref{cKT_k}, the principal tensor $h^{\mbox{\tiny (C)}}$ also generates the whole ``{\em Killing tower}'' of isometries of $g^{\mbox{\tiny (C)}}$
through \eqref{PKY} and \eqref{chi} is quite remarkable. 
The situation is even more interesting in higher dimensions, where  the above construction of explicit and hidden symmetries from the PKY tensor generalizes to Kerr--NUT--AdS spacetimes in any number of dimensions, see \cite{Krtous:2006qy} for details; the associated separability properties are summarized in \cite{Frolov:2017kze}.

For later purposes, let us finally list two important formulas for the determinant and the Ricci scalar of the (off-shell) Carter spacetime, namely
\ba
\sqrt{-\det g^{\mbox{\tiny (C)}}}&=&\rho^2=q^2+p^2\,,\nonumber\\
R^{\mbox{\tiny (C)}}&=&-\frac{Q''+\ddot P}{\rho^2}\,,
\ea 
where we have denoted ${'= \frac{d}{dq}}$ and ${\,\dot{} =\frac{d}{dp}}$.

\subsection{Pleba\'nski--Demia\'nski spacetime} 

The {\em Pleba\'nski--Demia\'nski} (PD) solution \cite{Plebanski:1976gy} is the most general type D solution of Einstein--Maxwell equations with cosmological constant and \emph{aligned}%
\footnote{\label{FN1}Here aligned can also be understood as the fact that the eigenvectors of the field tensor are the same as for the principal tensor.}
electromagnetic field. 
The solution takes the following form:
\be\label{PD} 
g^{\mbox{\tiny (PD)}}=\frac{1}{\Omega_{\mbox{\tiny (PD)}}^2}\,g^{\mbox{\tiny (C)}}\,,\quad 
A^{\mbox{\tiny (PD)}}=A^{\mbox{\tiny (C)}}\,.
\ee
Here $g^{\mbox{\tiny (C)}}$ is the Carter's metric \eqref{Carter}, the conformal factor $\Omega_{\mbox{\tiny (PD)}}$ takes the following particular  form:
\be\label{OmegaPD} 
\Omega_{\mbox{\tiny (PD)}}=1-pq\,,
\ee 
and the vector potential $A^{\mbox{\tiny (PD)}}$ remains formally identical to $A^{\mbox{\tiny (C)}}$, given by  \eqref{ACarter}.
In order this to be indeed a solution of vacuum Einstein--Maxwell equations with cosmological constant, the metric functions $P$ and $Q$ must take a specific form, described in Appendix~\ref{App:onshell}.
However, as mentioned above, the existence of hidden symmetries and their implications remain valid for the off-shell PD spacetime \eqref{PD}, with arbitrary functions $Q$ and $P$ \eqref{off-shell}.

Following \cite{Kubiznak:2007kh}, let us review hidden symmetries of the PD spacetime. 
The spacetime no longer possesses the PKY tensor---only a weaker structure of a CKY 2-form is present---essentially following from the conformal property \eqref{Weyl} and \eqref{conformalProp}. 
Namely, we have the following CKY 2-form:
\be \label{hPD}
h^{\mbox{\tiny (PD)}}=\frac{1}{\Omega_{\mbox{\tiny (PD)}}^3}\, h^{\mbox{\tiny (C)}}\,,
\ee 
obeying \eqref{CKY4D} with $\xi^{\mbox{\tiny (PD)}}$ given by \eqref{xi}. The latter turns out to be a  Killing vector and reads 
\be 
\xi^{\mbox{\tiny (PD)}}=\partial_\eta\,.
\ee 
Moreover, the Hodge dual of $h$ (with respect to the metric $g^{\mbox{\tiny (PD)}}$) is a dual CKY 2-form and reads 
\be 
k^{\mbox{\tiny (PD)}}=\star h^{\mbox{\tiny (PD)}}=\frac{1}{\Omega^3_{\mbox{\tiny (PD)}}} \,k^{\mbox{\tiny (C)}}\,.
\ee 
It gives rise to the dual Killing vector $\zeta$ via \eqref{zeta}, which is now non-trivial and reads 
\be\label{zetaPD} 
\zeta^{\mbox{\tiny (PD)}}=\partial_\sigma\,.
\ee 

The above CKY 2-forms ``square to'' the corresponding conformal Killing tensors via \eqref{cKT}. 
Namely, we find the following for covariant indices:
\be 
K^{\mbox{\tiny (PD)}}_{(h)\alpha\beta}=\frac{1}{\Omega_{\mbox{\tiny (PD)}}^4}\,K^{\mbox{\tiny (C)}}_{(h)\alpha\beta}\,,
\quad 
K^{\mbox{\tiny (PD)}}_{(k)\alpha\beta}=\frac{1}{\Omega_{\mbox{\tiny (PD)}}^4}\,K^{\mbox{\tiny (C)}}_{(k)\alpha\beta}\,,
\ee 
in terms of the (conformal) Killing tensors of the Carter spacetime \eqref{cKT_h} and \eqref{cKT_k}. 
Moreover, they are all the same when written as contravariant objects, that is,  
\begin{align}\label{eq: CKT h explicit}
    K^{\mbox{\tiny (PD)}}_{(h)} &=
     K^{\mbox{\tiny (C)}}_{(h)}
    \nonumber
    \\
    &= \ q^2\Big[\,\frac{1}{\rho^2Q}(q^2\partial_\eta-\partial_\sigma)^2-\frac{Q}{\rho^2}\partial_q^2\Big]
    \nonumber\\
    &\ \,+p^2\Big[\frac{1}{\rho^2P}(p^2\partial_\eta+\partial_\sigma)^2+\frac{P}{\rho^2}\partial_p^2\Big],
\end{align}
and
\begin{align}\label{eq: CKT k explicit}
    K^{\mbox{\tiny (PD)}}_{(k)}&=K^{\mbox{\tiny (C)}}_{(k)}
   \nonumber\\
   &=\  p^2\Big[\,\frac{1}{\rho^2Q}(q^2\partial_\eta-\partial_\sigma)^2-\frac{Q}{\rho^2}\partial_q^2\Big]
    \nonumber\\
    &\ \,+q^2\Big[\frac{1}{\rho^2P}(p^2\partial_\eta+\partial_\sigma)^2+\frac{P}{\rho^2}\partial_p^2\Big].
\end{align}
Note that there is no contradiction with the previous expressions because the indices are raised and lowered with the different metrics $g^{\PD}$ and $g^{\C}$.

The two obey 
\be \label{Qk-PD}
K^{\mbox{\tiny (PD)}}_{(k)}=K^{\mbox{\tiny (PD)}}_{(h)}-\frac{p^2-q^2}{\Omega_{\mbox{\tiny (PD)}}^2}\,g^{\mbox{\tiny (PD)}}\,.
\ee 
Finally, the construction {\`a} la  \eqref{chi} now yields 
\be \label{chiPD}
\chi^{\mbox{\tiny ({PD})}}{}^\alpha=K_{(k)}^{\mbox{\tiny (PD)}}{}^{\alpha \beta} \xi^{\mbox{\tiny (PD)}}_\beta=\frac{1}{\Omega^2_{\mbox{\tiny (PD)}}}(\partial_\sigma)^\alpha\,, 
\ee 
which is no longer a (conformal) Killing vector.
Thus, we conclude that also in the PD case both isometries of the spacetime follow from the existence of the hidden symmetry, although the construction is different from that in the Carter spacetime. Namely, the ``Killing tensor construction'' \eqref{chi} is replaced by the ``dual vector construction'' \eqref{zetaPD}.

\section{Novel family of black hole spacetimes and its hidden symetries}
\label{Sec:OP}

Having reviewed the two well-known cases, let us now turn to the novel family of type D metrics with non-aligned electromagnetic fields, namely the {\em Ovcharenko and Podolsk\'y (OP)} general family of spacetimes \cite{Podolsky:2025tle, Ovcharenko:2025cpm} and discuss their (hidden) symmetries.

\subsection{Ovcharenko--Podolsk\'y spacetime}

The new family of spacetimes can be written in the following general form \cite{Podolsky:2025tle, Ovcharenko:2025cpm}:
\be\label{OP} 
g^{\mbox{\tiny (OP)}}=\frac{1}{\Omega_{\mbox{\tiny (OP)}}^2}\,g^{\mbox{\tiny (C)}}\,,
\ee
where, again,  $g^{\mbox{\tiny (C)}}$ is the Carter metric \eqref{Carter}. 
The conformal factor $\Omega_{\mbox{\tiny (OP)}}$ takes the form
\begin{align}\label{OmegaOP}
\Omega^2_{\mbox{\tiny (OP)}}=&\ 1 - 2\,p\,q + c_{10}\,p +c_{01}\,q +c_{02}\,(q^2-p^2) \nonumber\\
    &+c_{12}\,p\,q^2 +c_{21}\,p^2q+c_{22}\,p^2q^2\,,
\end{align}
where $\{ c_{10}, c_{01},c_{02},c_{12}, c_{21}, c_{22}\}$ are 6 real parameters. 
The geometry is supported by the vector potential
\ba\label{A_OP}
A^{\mbox{\tiny (OP)}}&=&\frac{1}{q^2+p^2}\Bigl[
    ({\rm g}\,q-{\rm e}\,p)\,\Omega_{,p}^{\mbox{\tiny (OP)}}(\dd\eta-p^2 \dd \sigma) \nonumber\\
&& \qquad\quad + ({\rm e}\,q+{\rm g}\, p)\,\Omega^{\mbox{\tiny (OP)}}_{,q}(\dd \eta +q^2 \dd \sigma)
\Bigr]\nonumber\\
&& \qquad\quad -{\rm e}\,\Omega^{\mbox{\tiny (OP)}} \dd \sigma\,,
\ea 
describing the electromagnetic field with non-aligned components. This field is  characterized by two charge parameters  ${\rm e}$ and~${\rm g}$, related to the complex parameter $\bar c'$ presented in \cite{Ovcharenko:2025cpm} by $1/{\bar c'}=2({\rm e}+\im\, {\rm g})$. In order to satisfy the system of Einstein--Maxwell field equations, the metric functions $Q$ and $P$ take the form of quartic polynomials (see Appendix~\ref{App:onshell}). 
However, as in the aligned electromagnetic case, the existence of hidden symmetries is {\em independent of the explicit form} of $Q$ and $P$, and we can thus consider the general off-shell case \eqref{off-shell}.

\subsection{Hidden symmetries}

Similar to the PD spacetime, the OP spacetime does not admit the full PKY tensor, and only a weaker structure of the CKY 2-form remains present. 
Namely, it takes the following explicit form:
\be \label{CKY-OP-h}
h^{\mbox{\tiny (OP)}}=\frac{1}{\Omega_{\mbox{\tiny (OP)}}^3} \,h^{\mbox{\tiny (C)}}\,,
\ee 
with $h^{\mbox{\tiny (C)}}$ given in \eqref{PKYexplicit}. Its dual is also a CKY 2-form and reads 
\be \label{CKY-OP-k}
k^{\mbox{\tiny (OP)}}=\star h^{\mbox{\tiny (OP)}}=\frac{1}{\Omega^3_{\mbox{\tiny (OP)}}} \,k^{\mbox{\tiny (C)}}\,.
\ee 
Together they give rise to the associated covariant conformal Killing tensors 
\be 
K^{\mbox{\tiny (OP)}}_{(h)}=\frac{1}{\Omega_{\mbox{\tiny (OP)}}^4}\,K^{\mbox{\tiny (C)}}_{(h)}\,,
\quad 
K^{\mbox{\tiny (OP)}}_{(k)}=\frac{1}{\Omega_{\mbox{\tiny (OP)}}^4}\,K^{\mbox{\tiny (C)}}_{(k)}\,,
\ee 
related by 
\be 
K^{\mbox{\tiny (OP)}}_{(k)}=K^{\mbox{\tiny (OP)}}_{(h)}-\frac{p^2-q^2}{\Omega_{\mbox{\tiny (OP)}}^2}\,g^{\mbox{\tiny (OP)}}\,.
\ee 
These expressions are fully analogous to \eqref{hPD} and\eqref{Qk-PD} for the PD solution, and their contravariant expressions are identical to \eqref{eq: CKT h explicit} and \eqref{eq: CKT k explicit} respectively.

Interestingly, and contrary to the Carter and PD spacetimes, the above hidden symmetry \emph{no longer} seems to imply the existence of a {\em Killing tower of symmetries}. 
More specifically,  although the two isometries, $\partial_\eta$ and $\partial_\sigma$, still exist in the spacetime $g^{\mbox{\tiny (OP)}}$, we currently do not know how to ``generate'' them from the existence of the above two CKY 2-forms. 
This is closely related to the particular form of the conformal factor $\Omega_{\mbox{\tiny (OP)}}$ given by \eqref{OmegaOP} which, in its turn, corresponds to the ``non-alignment'' of the electromagnetic field. 
To see this in more detail, let us now consider even more general ``conformal-to-Carter spacetimes'' and study the corresponding candidates on Killing vectors.

\renewcommand{\arraystretch}{1.5}

\begin{center}
\setlength{\tabcolsep}{6pt}
\begin{table*}[t!]

\begin{tabular}{ |c||c|c|c|p{2.5cm}|  }
 \hline
 \multicolumn{5}{|c|}{$\text{Killing vector of the form}\quad a\,\partial_\eta+b\,\partial_\sigma$ } \\
 \hline
 Vector & ${\,a\neq0\neq b\,}$  &$a=0\,, b\neq 0$& $a\neq0\,, b=0$ & ${\quad a=0=b}$\\
 \hline
 \hline
 $\xi=\frac{1}{3}\nabla\cdot h$  & \xmark    &\xmark&   $\Omega=a+c\,pq$ \ (PD or Carter)& $\Omega=pq$ \   (Carter) \\
 $\zeta=\frac{1}{3}\nabla\cdot k$  & \xmark    & $\Omega=c+b\,pq$ \ (PD or Carter)&  \xmark  & $\Omega=c$  \quad (Carter) \\
 $\eta=K_k\cdot \xi$  & \xmark    & $\Omega=\dfrac{pq}{c+b\, pq }$\ (N or Carter)&   \xmark & $\Omega=pq$ \ (Carter) \\
  $\hat{\eta}=K_h\cdot \zeta$  & \xmark    &\xmark&   $\Omega=\dfrac{pq}{a+c\,pq} $ \ (N or Carter)& $\Omega=c$ \quad (Carter) \\
$\iota=K_k\cdot \zeta$  & \xmark    &\xmark&   \xmark & $\Omega=c$ \quad (Carter)\\
$\hat{\iota}=K_h\cdot \xi$  & \xmark    &\xmark&  \xmark & $\Omega=c$ \quad (Carter)\\
 \hline
\end{tabular}
\caption{
{\bf Construction of the Killing vectors for the conformal-to-Carter spacetimes,} $g=\Omega^{-2}g^\C$. 
This table summarizes various conformal factors $\Omega$ (and the associated spacetimes) that give rise to Killing vector fields from the existence of a hidden symmetry of the CKY 2-form ${h=\Omega^{-3}h^{\C}}$ and its dual ${k=\star h=\Omega^{-3}k^{\C}}$.
Here, $a$ and $b$ are the constants determining the specific Killing vector of the form ${\,a\,\partial_\eta+b\,\partial_\sigma\,}$, while $c$ is an arbitrary constant of integration.
The symbol \xmark\  indicates that the solution does not exist.
Since the OP spacetimes  (including the Kerr--BR and Schwarzschild--BR black holes) are not in this table, this proves the claim in Sec.~\ref{Sec:OP} that although they admit two Killing vectors ${\partial_\eta}$ and ${\partial_\sigma}$, these cannot be generated (in a straightforward manner, at least) from their two CKY 2-forms \eqref{CKY-OP-h} and \eqref{CKY-OP-k}. 
Especially interesting is the novel (N) conformal factor \eqref{novel}, where the full set of isometries can be generated, yet in a completely different fashion to what happens in the Carter and Pleba\'nski--Demia\'nski (PD) spacetimes.
}
\label{Tab: conform factors}
\end{table*}
\end{center}

\section{Conformal-to-Carter spacetimes and the Killing tower}\label{Sec:Tower}

Let us consider the following {\em more genaral off-shell spacetime}
\be
g=\frac{1}{\Omega^2}\,g^{\mbox{\tiny (C)}}\,,\label{off-shell-PD}
\ee 
where $g^{\mbox{\tiny (C)}}$ is the Carter off-shell spacetime \eqref{Carter} and $\Omega$ is an ``arbitrary'' conformal factor which preserves the two isometries of Carter's spacetime, $\partial_\eta$ and $\partial_\sigma$. 
That is, we have   
\be \label{ctCoffshell}
\Omega=\Omega(p,q)
\,,\quad P=P(p)\,,\quad Q=Q(q)\,.
\ee 
Obviously, we recover the Carter, PD, and OP spacetimes by choosing $\Omega=1, \Omega_{\mbox{\tiny (PD)}}$, and $\Omega_{\mbox{\tiny (OP)}}$ given by (\ref{OmegaPD}) and (\ref{OmegaOP}), respectively. 
As we shall see, the separability properties discussed in the subsequent section remain valid {\em irrespective of the specific choice of\,} $\Omega$, and derive from the existence of hidden symmetries inherited from the (conformally related) Carter ``seed'' spacetime (\ref{Carter}), {as well as the fact that the metric determinant and the Ricci scalar are
\be 
\sqrt{-\det g}=\frac{1}{\Omega^4} \sqrt{-\det g^{\mbox{\tiny (C)}}}=
\frac{\rho^2}{\Omega^4}\,,
\ee
and 
\be 
R=\Omega^{2}\bigl(R^{\mbox{\tiny (C)}}-6\,\Omega\,\Box^{\mbox{\tiny (C)}} \Omega^{-1}\bigr)\,.
\ee 
Note that these are slightly different expressions from conventional expressions (e.g. ~\cite[Appendix D]{Wald1984general}) because of the inverse scaling in $g$.
}

\subsection{Hidden symmetries of the conformal spacetime}

The general conformal metric (\ref{off-shell-PD}) admits the following CKY 2-forms: 
\be 
h=\frac{1}{\Omega^3}\, h^{\mbox{\tiny (C)}}\,,\quad 
k=\star h=\frac{1}{\Omega^3} \,k^{\mbox{\tiny (C)}}\,,
\ee 
where $h^{\mbox{\tiny (C)}}$ is the principal tensor of Carter's spacetime \eqref{PKYexplicit} and $k^{\mbox{\tiny (C)}}$ is the dual KY tensor \eqref{KYCarter}. 
The two give rise to the associated covariant conformal Killing tensors
\be \label{eq: CKTs off shell}
K_{(h)}=\frac{1}{\Omega^4}\,K^{\mbox{\tiny (C)}}_{(h)}\,,
\quad 
K_{(k)}=\frac{1}{\Omega^4}\,K^{\mbox{\tiny (C)}}_{(k)}\,,
\ee 
related by 
\be \label{eq: CKT rel}
K_{(k)}=K_{(h)}-\frac{p^2-q^2}{\Omega^2}\,g\,.
\ee 
The contravariant expressions for $K_{(h)}$ and $K_{(k)}$ are also given by the right hand sides of \eqref{eq: CKT h explicit} and \eqref{eq: CKT k explicit} respectively.

The above spacetime can be endowed with a {\em test aligned} electromagnetic field, obeying 
\be 
\nabla_\mu F^{\mu\nu}=0\,,\quad F=\dd A\,,
\ee 
provided we choose 
\be 
A=A^{\mbox{\tiny (C)}}\,.
\ee 
Note, however, that it may not always be possible to backreact this field on the geometry, by choosing the explicit expressions for the functions $Q$ and $P$.

\subsection{Candidates on Killing vectors}

Let us now try to generate the {\em Killing tower} of isometries from the above hidden symmetry. 
To start with, let us consider the associated with $h$ vector $\xi$, \eqref{xi}, namely 
\be 
\xi^\alpha=\frac{1}{3}\nabla_\beta h^{\beta\alpha}\,.
\ee 
As noted in Sec.~\ref{Sec:HS},  such a vector is guaranteed to satisfy the following equation:  
\be 
\nabla_{(\alpha}\xi_{\beta)}=\frac{1}{2}h_{(\alpha|\gamma|}R^\gamma{}_{\beta)}\,.
\ee 
Obviously, in an Einstein space the R.H.S. vanishes, and $\xi$ is a Killing vector. 
More interestingly, the R.H.S. vanishes in the PD spacetime, where the Ricci tensor is determined from the aligned electromagnetic field and the corresponding conformal factor takes the specific form $\Omega^{\mbox{\tiny (PD)}}$, given by \eqref{OmegaPD}. 
As mentioned in Footnote \ref{FN1}, this is because the principal tensor is also aligned with these null directions, so for such spacetimes these will anti-commute as matrices, see e.g. the discussion in \cite{Frolov:2017bdq}.

In the off-shell conformal spacetime, however, this is not the case, and we find
\be 
\xi=\Bigl(\Omega-\frac{p^3\, \Omega_{,p}+q^3\,\Omega_{,q}}{\rho^2}\Bigr)\partial_\eta
+\frac{q\,\Omega_{,q}-p\,\Omega_{,p}}{\rho^2}\,\partial_\sigma\,.
\ee 
Demanding that this is a Killing vector, i.e., of the form 
\be\label{KVwant} 
\mbox{KV}=a\,\partial_\eta+b\,\partial_\sigma\,,
\ee 
for some constants $a$ and $b$, yields in general an inconsistent system of two PDEs for the conformal factor $\Omega$. 
Namely, solving the two equations separately yields
\be 
\Omega=a+pq\, F\Bigl(\frac{p^2-q^2}{p^2q^2}\Bigr)\,,\quad 
\Omega=\frac{b}{2}(q^2-p^2)+\tilde F(pq)\,,\label{Om_integr}
\ee
with $F$ and $\tilde F$ arbitrary functions. 
Obviously, for both non-trivial $a$ and $b$ there is no solution. 

When ${b=0}$, the solution reads
\be
{\Omega=a+c\,pq}\,,
\ee 
where ${c=\mbox{const.}}$ is an integration constant, seemingly generalizing the PD case. 
However, it can be shown, see {\bf Lemma 1} in Appendix~\ref{App:B},  that when both $a$ and $c$ are non-trivial, one can rescale the coordinates so that such $\Omega$ can be brought into the PD form, \eqref{OmegaPD}.
When $a=0$ and $b\neq 0$,  we find that the system is inconsistent. 
Finally, when $a=0=b$, we find (up to a constant scale that can be dropped)
\be 
\Omega=pq\,.
\ee 
Although $\xi$ is not a Killing vector as it vanishes in this case, it means that with this choice of conformal factor, the above $h$ is actually a KY tensor (obtained from the PKY tensor $h^{\mbox{\tiny (C)}}$ by the appropriate conformal transformation) and its dual is the PKY 2-form. 
Thus, by the uniqueness result \cite{Krtous:2008tb}, the spacetime has to be again Carter's spacetime, just written in some different coordinate system. 
That this is indeed the case is shown by {\bf Lemma 2} in  Appendix~\ref{App:B}, see also \cite{Kubiznak:2007kh}.

Similarly, we may ask when the dual vector $\zeta$, \eqref{zeta},
\be 
\zeta^{\alpha}=\frac{1}{3}\nabla_{\!\beta}k^{\beta\alpha}\;,
\ee
is a Killing vector. 
We find
\be 
\zeta=\frac{pq\,(p\,\Omega_{,q}-q\,\Omega_{,p})}{\rho^2}\partial_\eta+\frac{q\,\Omega_{,p}+p\,\Omega_{,q}}{\rho^2}\partial_\sigma\,.
\ee 
Equating this to \eqref{KVwant} separately yields
\be 
\Omega=\frac{a}{2}\frac{p^2-q^2}{pq}+F(pq)\,,\quad \Omega=b\, pq+\tilde F(pq)\,.
\ee 
Obviously, for $a$ and $b$ non-trivial, the system is again inconsistent. 
When $b=0$ and $a\neq 0$ it remains inconsistent. 
When $a=0$ and $b$ is non-trivial, we find ${\Omega=c+b\,pq}$, where ${c=\hbox{const.}}$ is an integration constant. 
So, upon the due coordinate transformation, we are back to Carter's spacetime (for ${c=0}$) or the Pleba\'nski--Demia\'nski case ${(c\neq 0)}$.
Finally, when both ${a=0=b}$, we recover a constant conformal factor, returning back to the Carter class of spacetimes.%
\footnote{Note that the above results show that it is impossible to generate a new PKY tensor from the original Carter's PKY tensor  by conformal transformation---a fact related to the uniqueness theorem proved in \cite{Krtous:2008tb}.}

Let us next consider a vector field 
\be 
\eta^\alpha=K^{\alpha \beta}_{(k)}\,\xi_\beta\,,
\ee 
which would yield the second Killing vector, $\partial_\sigma$, in the Carter case. 
In our case, however, we find 
\be 
\eta=-\frac{p^2q^2(p\,\Omega_{,p}-q\,\Omega_{,q})}{\Omega^2\rho^2}\,\partial_\eta+\Bigl(\frac{1}{\Omega}-\frac{pq^2\Omega_{,p}+qp^2\Omega_{,q}}{\Omega^2\rho^2}\Bigr)\partial_\sigma\,.
\ee
Equating this with a Killing vector of the form \eqref{KVwant}, the system of PDEs is only consistent when $a=0$ and $b\neq 0$, in which case we have 
\be \label{Omega-eta}
\Omega=\frac{pq}{c+b\, pq}\,,
\ee 
where $c$ is an integration constant. 
This is a rather interesting conformal factor---we shall briefly return to the corresponding spacetime at the end of this section. 
Finally, when ${a=0=b}$, in which case no Killing vector is generated, we have  $\Omega=pq$ and return to Carter's case.

Similarly, we may consider the following vector:
\be 
\hat \eta^\alpha=K^{\alpha\beta}_{(h)}\,
\zeta_\beta\,.
\ee 
Explicitly, we find: 
\be 
\hat \eta=\frac{pq\,(p^3\Omega_{,p}+q^3\Omega_{,q})}{\rho^2\Omega^2}\,\partial_\eta+
\frac{pq\,(p\,\Omega_{,p}-q\,\Omega_{,q})}{\rho^2\Omega^2}\,\partial_\sigma\,.
\ee 
Compared to the desired form of the Killing vector, \eqref{KVwant}, we find that the solution is only consistent when $b=0$. When $a\neq 0$, we have 
\be \label{Omega-hateta}
\Omega=\frac{pq}{a+c\,pq}\,,
\ee 
with $c$ an integration constant. We also find ${\Omega=\hbox{const.}}$ when both ${a=0=b}$.

Turning next to
\be 
\iota^\alpha=K^{\alpha\beta}_{(k)}\,\zeta_\beta\,,
\ee 
we find: 
\be 
\iota=\frac{p^2q^2(q\,\Omega_{,p}+p\,\Omega_{,q})}{\rho^2\Omega^2}\,\partial_\eta+
\frac{q^3\Omega_{,p}-p^3\Omega_{,q}}{\rho^2\Omega^2}\,\partial_\sigma\,.
\ee 
It only yields the Killing vector of the form \eqref{KVwant} when both ${a=0=b}$, in which case ${\Omega=\hbox{const.}}$ and the vector vanishes.

Let us finally consider
\be 
\hat \iota^\alpha=K^{\alpha\beta}_{(h)}\,\xi_\beta\,.
\ee 
Then we find 
\ba 
\hat \iota&=&\frac{\Omega(p^4-q^4)+q^5\Omega_{,q}-p^5\Omega_{,p}}{\rho^2\Omega^2}\,\partial_\eta\nonumber\\
&&+\frac{\rho^2\Omega-q^3\Omega_{,q}-p^3\Omega_{,p}}{\rho^2\Omega^2}\,\partial_\sigma\,,
\ea 
which is of the form \eqref{KVwant} only when ${a=0=b}$, in which case ${\Omega=pq}$ (Carter's spacetime) and the vector vanishes. 

The above calculations are summarized in Table~\ref{Tab: conform factors}. 
From here it is obvious  that for $\Omega$ quadratic in $p$ and $q$, as in \eqref{OmegaOP},  
none of the above vector fields are Killing vectors. 
Thus, the existence of a non-aligned electromagnetic field in the OP spacetimes, encoded in the complicated form of the conformal factor $\Omega^{\mbox{\tiny (OP)}}$, given by \eqref{OmegaOP}, seems to spoil the Killing tower property which exists for both the Carter and PD spacetimes. 
Whether an appropriate generalization of such a construction can be found for the spacetimes with non-aligned electromagnetic field remains to be seen in the future.

\subsection{Novel spacetime}

We have discovered that for the conformal factor 
\be\label{novel} 
\Omega=\frac{pq}{B+A\,pq}\,,
\ee 
where $A$ and $B$ are constants, we {\em have the full Killing tower}. 
Specifically, we find the Killing vectors
\be 
\hat \eta=B\,\partial_\eta\,,\quad \eta=A\,\partial_\sigma\,,
\ee 
while the other vector fields are (for ${A\neq0\neq B}$) not Killing. 
This is thus a very different construction from both the Carter one and the PD one.
It can be shown, see {\bf Lemma 3} in Appendix~\ref{App:B}, that when both $A$ and $B$ are non-trivial, the above conformal factor can be brought to the following canonical form:   
\be 
\Omega=\frac{pq}{1-pq}\,.
\ee 
While we have checked that by choosing $Q=Q(q)$ and $P=P(p)$ such a spacetime cannot be a vacuum solution of Einstein equations, it may (perhaps) describe some interesting solution with matter that admits a hidden symmetry of the CKY 2-form together with the full tower of Killing vector fields derived from it. 
Such a {\em novel spacetime} (N) is worth investigating in the future; for now, we turn to the separability of test particles and fields in conformal-to-Carter spacetimes.

\section{Separability structures}\label{Sec:Separability}

{In this section we shall demonstrate separability of various test field or particle equations in the general off-shell conformal-to-Carter spacetimes \eqref{off-shell-PD}, \eqref{ctCoffshell}. 
We note that similar observations, though with different motivations and a different form of conformal factor, were  made for example in \cite{Hamilton:2022sgf}. 
See also \cite{Acevedo:2025zrq} for a generalization beyond the Carter spacetime. 
}

\subsection{{Massless} Hamilton--Jacobi equation}

The above {hidden symmetries} can be used for the separation of variables in the massless Hamilton--Jacobi equation
\begin{align}
    g^{\mu\nu}\dfrac{\partial S}{\partial x^{\mu}}\dfrac{\partial S}{\partial x^{\nu}}=0\,,\label{HJ_eq_1}
\end{align}
where $S$ is Hamilton's principal function.%
\footnote{In principle, one might also consider a modified ``conformal Hamilton--Jacobi equation''
\be 
 g^{\mu\nu}\dfrac{\partial S}{\partial x^{\mu}}\dfrac{\partial S}{\partial x^{\nu}}+\frac{1}{6} R=0\,,
\ee 
considered in \cite{Gray:2021wzf}.
However, such an equation neither is truly conformally invariant, nor does it separate in our spacetime, that is why we do not consider it in our paper. 
The reason as to why such an equation does not separate can be traced to the presence of an {\em ${\cal R}$-factor}  for the conformal wave equation, see \eqref{eq: R sep Ansatz} below. 
}

First, the existence of the 2 Killing vectors $\partial_{\eta}$ and $\partial_{\sigma}$ ensures that the following quantities: 
\begin{align}\label{eq: Killing consts}
    (\partial_\eta)^{\mu}\dfrac{\partial S}{\partial x^{\mu}}=-E\,,~~~(\partial_\sigma)^{\mu}\dfrac{\partial S}{\partial x^{\mu}}=L\,,
\end{align}
are conserved. 
Moreover, the conformal Killing tensors \eqref{eq: CKTs off shell} of the off shell spacetime \eqref{off-shell-PD} provide two additional constants of motion.
However, due to the relation \eqref{eq: CKT rel} and the massless Hamilton--Jacobi equation \eqref{HJ_eq_1}, they are not independent. 
Therefore, we consider only {\em Carter's constant} \eqref{Carter1} associated with $h$, i.e.
\begin{equation}\label{Cexplicit}
 C=K_{(h)}^{\mu\nu}\,\dfrac{\partial S}{\partial x^{\mu}}\dfrac{\partial S}{\partial x^{\nu}}\,.
\end{equation}

The fact that these constants of motion are at most quadratic in momentum, and in involution with respect to the Poisson bracket (see \cite{Frolov:2017kze} and \cite{Gray2025} for details on the integrability), allow for an additive separation of variables for Hamilton's principal function, namely
\begin{align}\label{separation-ansatz}
S= W_\eta(\eta)+W_\sigma(\sigma)+W_q(q)+W_p(p)\,.
\end{align}
These functions  $W_\mu(x^\mu)$  depend on the one particular coordinate $x^\mu$ ($\mu$ here is just a label not an (abstract) index), as well as, (possibly all of) the constants of motion $\{E,L,C\}$.
Plugging this into \eqref{eq: Killing consts} one immediately obtains
\begin{equation}
    W_\eta=-E \,\eta\,,\qquad W_\sigma=L \,\sigma\,.
\end{equation}

On the other hand, for $W_q$ and $W_p$, making use of {\eqref{Cexplicit} and the explicit expression} \eqref{eq: CKT h explicit}, we have
\begin{align}
    \dfrac{1}{\rho^2}\Big[&\ q^2\Big(\dfrac{(Eq^2+L)^2}{Q}-Q(W_q')^2\Big)\nonumber\\
    +&\ p^2\Big(\dfrac{(Ep^2-L)^2}{P}+P(\dot{W}_p)^2\Big)\Big]=C\,,\label{Cart_const}
\end{align}
which can be rearranged to give
\begin{align}
  P\,(\dot{W}_p)^2&=C-\dfrac{(Ep^2-L)^2}{P}\,,\label{eq: W_p}
  \\
  Q\,(W'_q)^2&=\dfrac{(Eq^2+L)^2}{Q}-C\,.\label{eq: W_q}
\end{align}
These are two first-order decoupled ODEs for $\dot{W}_p$ and $W'_q$.
One can check that \eqref{eq: W_p} and \eqref{eq: W_q} imply the Hamilton--Jacobi equation is satisfied. 
That is, using the form of \eqref{off-shell-PD} we obtain
\begin{align}
 &g^{\mu\nu}\dfrac{\partial S}{\partial x^{\mu}}\dfrac{\partial S}{\partial x^{\nu}}
 &\nonumber\\
  &=\dfrac{(Ep^2-L)^2}{P}-\dfrac{(Eq^2+L)^2}{Q}+Q\,(W'_q)^2+P\,(\dot{W}_p)^2
    \nonumber\\
    &=C-C=0\,.\label{HJ_eq_2}
\end{align}

{The obtained expressions} can be used to find the tangent vector to null geodesics (momentum), upon using  
\be \label{eq: velocity momentum HJ relation}
\dfrac{\partial S}{\partial x^{\mu}}=p_{\mu}=g_{\mu\nu}\frac{dx^\mu}{d\lambda}\,,
\ee
where $\lambda$ is an affine parameter. 
In particular, this yields ${p_{\eta}=-E\,,\,p_{\sigma}=L}$, and one obtains the following  expressions for the geodesic 4-velocity:
\begin{align}
    \frac{\dd\eta}{\dd\lambda}&=\dfrac{\Omega^2}{\rho^2}\Big[\dfrac{q^2(Eq^2+L)}{Q}-\dfrac{p^2(Ep^2-L)}{P}\Big]\,,\nonumber\\
    \frac{\dd\sigma}{\dd\lambda}&=\dfrac{\Omega^2}{\rho^2}\Big[-\dfrac{Eq^2+L}{Q}-\dfrac{Ep^2-L}{P}\Big]\,,\nonumber\\
    \frac{\dd q}{\dd\lambda}&={\pm}\dfrac{\Omega^2}{\rho^2}\sqrt{(Eq^2+L)^2-C Q}\,,\\
    \frac{\dd p}{\dd\lambda}&={\pm}\dfrac{\Omega^2}{\rho^2}\sqrt{CP-(Ep^2-L)^2}\,,\nonumber
\end{align}
where the signs in front of the square roots are independent. 
These expressions can be further decoupled by introducing generalized `Mino time'~\cite{Mino:2003yg}. 
The above expressions can also be used to calculate the black hole shadows~\cite{Perlick:2021aok} in the appropriate on shell spacetimes, e.g. \cite{Wang:2025vsx, Zeng:2025tji, Ali:2025beh, Vachher:2025jsq}.

\subsection{Conformal scalar field equation}

{Let us next turn to the scalar field equation.}
For the metric \eqref{off-shell-PD} the usual massless Klein--Gordon equation $g^{\mu\nu}\nabla_{\mu}\nabla_{\nu}\psi=0$ {does not separate off-shell, when} the Ricci scalar $R\neq 0$.
However, the conformal Klein--Gordon equation
\begin{align}
    \Box_R\,\psi=\Big(g^{\mu\nu}\nabla_{\mu}\nabla_{\nu}-\dfrac{1}{6}R\Big)\psi=0\,,\label{cKG_eq}
\end{align}
is ${\cal R}$-separable --- the scalar field admits a multiplicative separation of variables up to some arbitrary function of the coordinates ${\cal R}$ --- i.e. 
\be \label{eq: R sep Ansatz}
\psi={\cal R}(x)X_\eta(\eta)X_{q}(q)X_{p}(p)X_{\sigma}(\sigma)
\,,
\ee 
where,  if the function $\cal R$ is chosen correctly, each of the functions $X_i$ satisfy a decoupled ODE.
The ${\cal R}$-separability follows from the underlying symmetry operators presented in \cite{Gray:2020rtr,Gray:2021wzf} for the conformal-to-Carter spacetime \eqref{off-shell-PD}.

Generically, a conformal symmetry of operator of $\Box_R$ is an operator $\cal O$, which satisfies 
\be
\Box_R\big( {\cal O}(\psi)\big) =\widetilde{\cal O}\big( \Box_R(\psi)\big) \,,
\ee
for all members of the class of conformally related metrics $[g]=\{\omega^{2} g\,|\,\omega\in C^\infty(M)\}$ and fields $[\psi]=\{\omega^{-1}\psi\,|\, \omega \in C^\infty(M)\}$, and for some related operator $\widetilde{\cal O}$. 
These will preserve the kernel of the conformal wave equation, hence the solution space.

The classification of such operators was carried out in full generality \cite{Michel_2014}, and in particular, for the off-shell Carter metric these operators can be written in terms of a Killing tensor $K$ and Killing vectors $l$~\cite{Gray:2021wzf,Michel_2014}:
\begin{align}\label{eq: L op}
    {\cal L}(\psi)&=l^\alpha\nabla_\alpha\psi+\frac{1}{4}\psi\,\nabla_{\alpha}l^\alpha\,,\\
    \label{eq: K op}
    {\cal K}(\psi)&=\nabla_{\alpha}( K^{\alpha\beta}\,\nabla_{\beta}\psi)
        -\frac{1}{20}\psi\,\Box K^\alpha{}_\alpha
        \nonumber\\
        &+\left(\frac{3}{10}R_{\alpha\beta}\,K^{\alpha\beta}+\frac{1}{30}R\, K^\alpha{}_\alpha+f\right)\psi\,,
\end{align}
where the {scalar function $f$} must satisfy the following geometric condition\cite{Michel_2014}:
\begin{equation}\label{eq:Obs}
\nabla_\alpha f=-\frac{2}{15}\Bigl(C^\beta{}_{\gamma\delta\alpha}\nabla_{\beta}K^{\gamma\delta}-3K^{\gamma\delta}\nabla_\beta C^\beta{}_{\gamma\delta\alpha} \Bigr)\,,
\end{equation}
where $C^\alpha{}_{\beta\gamma\delta}$ is the Weyl tensor. 
This is met for the off-shell Carter metric and the explicit expression can be found in \cite{Gray:2021wzf}. 
If $f=0$ then the symmetry operators can be written in terms of the conformal Killing vectors and conformal Killing tensors however, because this is not true for the Carter metric, the full Killing tensors are required for such operators and the separability follows by transforming the conformal wave equation into the Carter spacetime by rescaling by the conformal factor $\Omega$ -- this provides the function ${\cal R}(x)=\Omega(x)$.

We demonstrate this explicitly below. 
Namely, if one substitutes  an ansatz of the form
\begin{align}
    \psi=\Omega\, \psi_0\, e^{\im (\omega \,\eta+n\,\sigma)}\,.
\end{align}
where $\psi_0=\psi_0(p,q)$, {and $\omega$ and $n$ are constants, not necessarily related to the physical frequencies and angular momenta (this follows from the fact that $\eta$ and $\sigma$ are not necessarily the proper time and angular coordinates, as discussed in \cite{Ovcharenko:2025cpm})}. 
Then \eqref{cKG_eq} becomes
\begin{align}
    (Q \psi_0')'&+(P \dot{\psi_0})^{\dot{}}+\dfrac{\psi_0}{6}(Q''+\ddot{P})+\nonumber\\
    &+\psi_0\dfrac{(\omega q^2-n)^2}{Q}-\psi_0 \dfrac{(\omega p^2+n)^2}{P}=0\,.\label{KG_simpl}
\end{align}

Obviously, this equation is separable, and by substitution $\psi_0=R_0(q)X_0(p)$, 
one obtains
\begin{align}
    &(P \dot{X}_0)^{\dot{}}\,+\Big[\,\dfrac{\ddot{P}}{6}\,-\dfrac{(\omega p^2+n)^2}{P}+\lambda\Big]X_0=0\,,
    \nonumber
    \\
    &(Q R'_0)'+\Big[\dfrac{Q''}{6}+\dfrac{(\omega q^2-n)^2}{Q}-\lambda\Big]R_0=0\,,\label{rad_teuk_0}
\end{align}
where, as before,  ${'= \frac{d}{dq}}$ and ${\,\dot{} =\frac{d}{dp}}$.

{These equations are already in the separated Teukolsky-like form. 
However, for computational purposes, a description of the matter field using the effective potential is sometimes required (for example, for determination of the quasinormal modes using the WKB method, see \cite{Konoplya2011}). 
Below we provide a construction of the effective potential for eqs. (\ref{rad_teuk_0}).} 

Namely, if we introduce new coordinates 
\begin{align}
    \dd \bar{q}=\dfrac{\dd q}{Q}\,,\quad \dd \bar{p}=\dfrac{\dd p}{P}\,,
\end{align}
these equations become
\begin{align}
    f_{,\bar{q}\bar{q}}+f\, V_Q=0\,,\quad
    f_{,\bar{p}\bar{p}}+f\,V_P=0\,,
\end{align}
where
\begin{align}
    V_Q&=\ \ (\omega q^2-n)^2-\lambda Q+\dfrac{1}{6}\Big(\dfrac{Q_{\bar{,q}}}{Q}\Big)_{,\bar{q}}\,,\nonumber\\
    V_P&=-(\omega p^2+n)^2+\lambda P+\dfrac{1}{6}\Big(\dfrac{P_{\bar{,p}}}{P}\Big)_{,\bar{p}}\,.
\end{align}
These are two decoupled ODEs which may be solved numerically for the given metric of choice of the conformal class.
We now turn to the next lowest spin field, namely spin-1/2.

\subsection{Massless Dirac equation}
Let us now turn to a spin 1/2 particle, described by the Dirac equation. Since \eqref{off-shell-PD} has a non-constant conformal factor, it appears that the massive Dirac equation is not separable. 
For this reason we turn to the {\em massless} Dirac equation, which reads  
\be \label{masslessDirac}
\gamma^a\,\nabla_a \psi=0\,. 
\ee 
Here, $\gamma^a$ are the gamma matrices, obeying $\{\gamma^a,\gamma^b\}=2g^{ab}$, and $\nabla_a$ stands for the spinorial derivative  
\be 
\nabla_a=\partial_a+\frac{1}{4}\omega_{abc}\gamma^b\gamma^c\,,
\ee 
where $\partial_a=e_a\cdot \partial$ is a derivative in the direction $e_a$, $\omega_{abc}$ are the standard spin coefficients with respect to tetrad frame $e^a$, and the 1-forms of the curvature $\omega^b{}_c=e^a\omega_a{}^b{}_c$ obey Cartan's equation ${\dd e^a+\omega^a{}_b \wedge e^b = 0}$.

Importantly, the massless Dirac equation \eqref{masslessDirac} is Weyl invariant. 
More specifically, the following conformal transformation
\be 
g\to \frac{1}{\Omega^2}\,g\,,\quad \psi \to \frac{1}{\sqrt{\Omega}}\,\psi\,,
\ee 
leaves it unchanged. 
This means that instead of considering the massive Dirac equation in the spacetime \eqref{off-shell-PD}, we can focus on the massless Dirac equation in the of-shell Carter spacetime \eqref{Carter}. 
However, such an equation is known to be ${\cal R}$-separable, with the separation constant given by the eigenvalue of the symmetry operator (commuted with the Dirac operator) constructed from the PKY tensor of the Carter spacetime, e.g. \cite{McLenaghan1979, Frolov:2017kze}.

Namely, following {the notation of} \cite{Collas2019}, let us choose the following frame for the Carter spacetime:
\begin{align}\label{orthogonal-tetrad}
    {e}_0&=\dfrac{1}{\sqrt{Q \rho^2}}\,(q^2\partial_{\eta}-\partial_{\sigma})\,,\quad
    {e}_1 =\sqrt{\dfrac{ Q}{\rho^2}}\,\partial_q\,,\nonumber\\
    {e}_2&=\sqrt{\dfrac{P}{\rho^2}}\,\partial_p\,,\quad
    {e}_3 =\dfrac{1}{\sqrt{P \rho^2}}\,(p^2\partial_{\eta}+\partial_{\sigma})\,,
\end{align}
together with the Weyl representation for the gamma matrices,
\begin{align}
    \gamma^a=\begin{pmatrix}
        0 & \sigma^a\\
        \bar{\sigma}^a & 0
    \end{pmatrix}\, ,\qquad\sigma^a=(1,\sigma^k)\,,\qquad\bar{\sigma}^a=(1,-\sigma^k)\,,
\end{align}
where $\sigma^k$ are the Pauli matrices. 
Then we find that the massless Dirac equation in the conformal-to-Carter spacetime \eqref{off-shell-PD} with the following ansatz for the spinor ${\psi=\frac{1}{\sqrt{\Omega}}\,\psi^\C}$,
\begin{align}
    \psi=\dfrac{1}{2\sqrt{\Omega}}\begin{pmatrix}
        \sqrt{q+\im\, p}\begin{pmatrix}
            \psi_-^L-\psi_{+}^L\\
        \psi_-^L+\psi_{+}^L
        \end{pmatrix}\\
        \,\\[-5mm]
        \sqrt{q-\im\, p}\begin{pmatrix}\psi_{+}^{R}-\psi_{-}^{R}\\
        \psi_{+}^R+\psi_{-}^R\end{pmatrix}
    \end{pmatrix}e^{\im (\omega \eta+n \sigma)}\,,\label{psi_ansatz}
\end{align}
is separable, namely
\begin{align}
    \psi_+^\sigma=R_+^\sigma X_+^\sigma\,,\qquad \psi_-^\sigma=R_-^\sigma X_-^\sigma\,,
\end{align}
(here, and further on, the upper index $\sigma$ stands for ``$L$'' or ``$R$''). 
Specifically, these radial and angular functions obey the following set of ODEs:
\begin{align}
    &\sqrt{Q}\,D_+^q R_+^\sigma=\lambda_1 R_-^\sigma\,,\quad 
    \sqrt{Q}\,D_-^qR_-^{\sigma}=\lambda_2 R_+^{\sigma}\,,\nonumber\\
    &\im\sqrt{P}\,D_-^p X_-^{\sigma}=\lambda_1 X_+^{\sigma}\,,\quad 
    \im\sqrt{P}\,D_+^p X_+^{\sigma}=\lambda_2 X_-^\sigma\,.
\end{align}
In these expressions, $\lambda_{1}$ and $\lambda_{2}$ are two independent {\em separation constants}, and we have introduced the following differential operators: 
\begin{align}\label{DiracODEs1}
    D_{\pm}^q=\partial_q\pm \im\, \dfrac{q^2\omega-n}{Q},~~~D_{\pm}^p=\partial_p\pm \sigma\,\dfrac{p^2\omega+n}{P}\,,
\end{align}
where $\sigma=-1$ if the spinor is left, while $\sigma=1$ if the spinor is right. 

By combining \eqref{DiracODEs1}, we can obtain the following decoupled equations:
\ba
\sqrt{P}\,D_-^p(\sqrt{P}\,D_+^p X_+^{\sigma})&=&-\lambda X_+^{\sigma}\,,\nonumber\\
\sqrt{P}\,D_+^p(\sqrt{P}\,D_-^p X_-^{\sigma})&=&-\lambda X_-^{\sigma}\,,\nonumber\\
\sqrt{Q}\,D_-^q(\sqrt{Q}\,D_+^q R_+^{\sigma})&=&\lambda R_+^{\sigma}\,,\nonumber\\
\sqrt{Q}\,D_+^q(\sqrt{Q}\,D_-^q R_-^{\sigma})&=&\lambda R_-^{\sigma}\,,
\ea
where we have denoted $\lambda=\lambda_1\lambda_2$. 
Notice that the two separation constants $\lambda_1$ and $\lambda_2$ have been combined into one, so that the resulting separated equations feature only one separation constant --- like in the massive case~\cite{Chandrasekhar1998,Frolov:2017bdq}.
Moreover, upon the following substitution: $X_{\pm}\to P^{1/4}X_{\pm}$,  $R_{\pm}\to \sqrt{Q}\,R_{\pm}$, one obtains that these equations become
\ba
(P\dot{X}_{\pm})^{\dot{}}
+\!\Big[\dfrac{\ddot{P}}{4}\pm 2\sigma\omega\, p-\dfrac{(n+\omega p^2\pm \frac{\sigma}{4}\dot{P})^2}{P}+\lambda\Big]X_{\pm}&=&0\,,\nonumber\\
 \frac{1}{\sqrt{Q}}(Q^\frac{3}{2}R'_{\pm})'+\!
    \Big[\dfrac{Q''}{2}\pm 2 \im \omega\, q+K\dfrac{K\mp\frac{\im}{2} Q'}{Q}-\lambda\Big]R_{\pm}&=&0\,,\nonumber\\
    &&\label{rad_teuk_1/2}
\ea
where ${K=\omega q^2-n}$.

The equations (\ref{rad_teuk_1/2}) can be used to find $\lambda$ and $\omega$. However, the radial equation is complex, and for practical reasons, one may be interested in finding a real effective potential (see \cite{Chandrasekhar1998}). 
For this, let us analyze the radial equations that are given by
\ba
    \sqrt{Q}\,\Big(\dfrac{\dd}{\dd q}+\im\, \dfrac{q^2 \omega-n}{Q}\Big)R_+^{\sigma}&=&\lambda_1 R_-^{\sigma}\,,\nonumber\\
    \sqrt{Q}\,\Big(\dfrac{\dd}{\dd q}-\im\, \dfrac{q^2 \omega-n}{Q}\Big)R_-^{\sigma}&=&\lambda_2 R_+^{\sigma}\,.
\ea

If we introduce a new coordinate $q_*$ such that
\begin{align}
    \dfrac{\dd}{\dd q}=\dfrac{q^2-n/\omega}{Q}\dfrac{\dd}{\dd q_*}\,,
\end{align}
and the new functions:
\begin{align}
    Z_{\pm}^{\sigma}=\sqrt{\frac{\lambda_2}{\lambda}}R_+^{\sigma} \mp \sqrt{\frac{\lambda_1}{\lambda}}R_-^{\sigma}\,,
\end{align}
then the corresponding equations become 
\begin{align}
    \Big(\dfrac{\dd}{\dd q_*}\pm \sqrt{\lambda} \dfrac{\sqrt{Q}}{q^2-n/\omega}\Big)Z_{\pm}^{\sigma}=-\im\, \epsilon Z_{\mp}^{\sigma}
\end{align}
By combining these equations, one then obtains that the functions $Z_{\pm}$ can be described by the following equations:
\begin{align}
    \Big(\dfrac{\dd^2}{\dd q_*^2}+\omega^2\Big)Z_{\pm}=V_{\pm}Z_{\pm}\,,
\end{align}
where
\begin{align}
    V_{\pm}=\lambda \dfrac{Q}{(q^2-n/\omega)^2}\pm \sqrt{\lambda Q}\,\dfrac{\dd}{\dd q_*}\Big(\dfrac{\sqrt{Q}}{q^2-n/\omega}\Big)\,.
\end{align}
Analogous construction can also be done for the angular equation; however, we do not provide an explicit calculation for it.

\subsection{Maxwell equation}\label{sec_Maxwell}

Next on our spin ladder spins are the test ``Maxwell equations'':%
\footnote{Note that similar to the charged Pleba\'nski--Demia\'nski case, here we have to assume that the massless vector field $B_\mu$ is different from the underlying Maxwell field $A_\mu$. 
Otherwise, one has to deal with a coupled system of Einstein--Maxwell equations, which likely does not separate. 
}
\be 
\nabla_\mu H^{\mu\nu}=0\,,\quad H_{\mu\nu}=2\,\nabla_{[\mu} B_{\nu]}\,.
\ee

As is well known, such equations are also conformally invariant, namely, the following transformations
\be 
g\to \Omega^2 g\,,\quad B_\mu\to B_\mu\,,
\ee 
leaves them invariant. 
This means that instead of studying the Maxwell equations on the spacetime \eqref{off-shell-PD}, we can equally study them on the Weyl rescaled spacetime \eqref{Carter}. 
Since this is simply Carter's off-shell spacetime we can employ the technique of separation {developed by Lunin, and Frolov--Krtou{\v s}--Kubiz\v{n}\'ak (LFKK)  \cite{Lunin:2017drx, Frolov:2018pys}. 
Such an approach is distinct} from the Teukolsky separation \cite{Teukolsky1973,TorresdelCastillo1988}. 
The analogous separation can be also done for the off-shell Pleba\'nski--Demia\'nski spacetime (for the result see Sec. \ref{sec_Teuk_eq} with $s=1$), but in this subsection we choose a distinct approach that directly uses the CKY tensors, constructed previously.  

Namely, we can make the following ansatz for the vector potential:
\be 
B^\alpha=P^{\alpha\beta}\,\nabla_\beta Z\,,
\ee 
where $P^{\alpha\beta}$ is the ``polarization tensor'' obeying 
\be 
(g_{\alpha\beta}+\im\,\mu \, h_{\alpha\beta})P^{\beta\gamma}=\delta^\gamma_\alpha\,,
\ee 
$\mu$ is a separation constant, and $ h_{\alpha\beta}$ is the corresponding principal Killing--Yano tensor of the Carter off-shell spacetime. 
We can then set 
\be 
Z=R_1(q)X_1(p)\,{\rm e}^{\im (\omega \eta + n \sigma)}\,.
\ee 
and following the steps in \cite{Frolov:2018pys}, which importantly involves imposing the Lorenz gauge {in the Carter spacetime}%
\footnote{Note that the Lorenz gauge condition is not conformally invariant. 
That is why although the resulting vector potential is a solution of Maxwell equations in the conformally related spacetime it no longer obeys the Lorenz gauge condition, but rather
\be
\nabla'_\mu B'^\mu=-2\,B'{}^\mu\,\nabla_\mu\log\Omega\,.
\ee
},
\be 
\nabla_\mu B^\mu=0\,,
\ee 
to show that the Maxwell equations are satisfied, provided the following separated equations are satisfied:
\ba 
\Bigl(\frac{P \dot X_1 }{z_p}\Bigr)^{\dot{}}-\Big[\frac{(\omega z_p+\nu)^2}{\mu^4 P z_p}+\frac{2-z_p}{\mu z_p^2}\nu\Big]X_1&=&0
\,,\nonumber\\
\Bigl(\frac{Q R_1'}{z_q}\Bigr)'+\Big[\frac{(\omega z_q+\nu)^2}{\mu^4 Q z_q}+\frac{2-z_q}{\mu z_q^2}\nu\Big]\,R_1&=&0\,.
\ea 
Here,
\be 
\nu=-\mu^2n-\omega\,,\quad z_q=1+\mu^2q^2\,,\quad z_p=1-\mu^2p^2\,.
\ee 
Moreover, it can be shown \cite{Cynthia} that the above separability can be linked to the existence of recently constructed symmetry operators \cite{Gray:2024kad}.  

\subsection{Gravitational perturbations}

{The last field whose separability we would like to investigate is the gravitational perturbations. 
In our further analysis, we will assume the corresponding source terms can be omitted. 
While this assumption may seem unnatural, it was used to obtain the quasinormal modes of Kerr--Newman black holes (see \cite{Kokkotas1993,Berti2005}). 
Furthermore, as was obtained in \cite{saha2025}, variation of the matter field for the Kerr--Newman black hole does not change the frequencies of the quasinormal modes significantly and becomes important only near extremality. 
Thus, keeping in mind that this ``freezing" assumption is not fully correct for a coupled electromagnetic and gravitational field, we will use it throughout this subsection as a reasonable approximation. 
Moreover, the coupled set of equations may not be separable, see \cite{Dias2022,Dias2015}.}

There are various ways to describe pure gravitational perturbations. 
One consists of considering perturbations of the metric ${g_{\mu\nu}=g_{\mu\nu}^{(0)}+\epsilon\,g_{\mu\nu}^{(1)}+{\cal O}(\epsilon)}$; by expanding field equations in $\epsilon$ one obtains linear equations for the perturbations. 
The problem for our work is that the resulting equations are not conformally invariant  and, moreover, are gauge dependent.

Another way is to work in the Newman--Penrose formalism and consider perturbations not of the metric, but of the Weyl and optical scalars, namely assuming ${\Psi_i=\Psi_i^{(0)}+\epsilon\, \Psi_i^{(1)}+{\cal O}(\epsilon)}$. 
We will use this method because for rotating metrics it was shown to be much simpler then the first one. 

{Let us consider the background metric to be the off-shell conformal-to-Carter metric \eqref{off-shell-PD}.
This has several important algebraic properties and the corresponding field equations thus simplify significantly. 
In particular, it is of type D, and the corresponding PNDs are geodesic and shear-free, see Eqs.~(7) and~(13) in~\cite{Ovcharenko:2025cpm},
\begin{align}
    \Psi_0^{(0)}=\Psi_1^{(0)}=0=&\ \Psi_3^{(0)}=\Psi_4^{(0)}\,,
    \nonumber\\
    \kappa^{(0)}=\nu^{(0)}=0=&\ \sigma^{(0)}=\lambda^{(0)}\,.
\end{align}
}
Because such conditions are quite strong --- and are the same as for the the on-shell PD spacetime ---  one can conveniently employ the results for the separability of the gravitational perturbations in the PD spacetime obtained previously by Dudley and Finley \cite{dudley1977}. 

Interestingly, to obtain the separable equations, instead of directly using the scalars $\Psi_0^{(1)}$ and $\Psi_4^{(1)}$ it is necessary to introduce new functions $\psi_{\pm 2}$ such that
\begin{align}
 \Psi_0^{(1)}&={\rm e}^{\im (\omega \eta + n \sigma)}\dfrac{\Omega^3}{(q+\im p)^2}\,\psi_{-2}\,,   \nonumber\\
 \Psi_4^{(1)}&={\rm e}^{\im (\omega \eta + n \sigma)}\dfrac{\Omega^3}{(q+\im p)^2}\,\psi_2\,.
\end{align}
Then the equations for gravitation perturbations become independent of the conformal factor, separable, and by making a factorization ${\psi_{\pm2}=Q\,R_{\pm2}\,X_{\pm2}}$, one gets
\begin{align}
(P\dot{X}_{\pm2})^{\dot{}}+\Big[\dfrac{3}{2}\ddot{P}\pm 8\omega \,p-\dfrac{(\omega\, p^2+n\pm \dot{P})^2}{P}+\lambda\Big]X_{\pm 2}&=0\nonumber\,,\\
\dfrac{(Q^3R_{\pm 2}')'}{Q^2}+
    \Big[\dfrac{5}{2}Q''\pm 8 \im \omega\,q+K\,\dfrac{K\mp2\im\, Q'}{Q}-\lambda\Big]R_{\pm2}&=0\,,\label{rad_teuk_2}
\end{align}
where $\lambda$ is the separation constant and ${K=\omega\, q^2-n}$ (to obtain these equations, we have used a frame presented in Eq.~(6) in \cite{Ovcharenko:2025cpm}).

The important observation here is that the {\em equation for gravitational perturbations is conformally invariant} even though generally gravitational perturbations do not obey such a property (unlike the conformal Klein--Gordon or massless Dirac equation). 
This specific property is related to the special algebraic properties of the metric \eqref{PD}, namely that the spacetime is of algebraic type D, and the corresponding PNDs are geodesic and shear-free. See ~\cite{Araneda:2018ezs} for a discussion of the underlying conformal invariance of the Teukolsky equation.

In summary, the equations \eqref{rad_teuk_2} hold for the off-shell {conformal-to-Carter metric \eqref{off-shell-PD}.} 
Although this observation is quite straightforward from the construction in \cite{dudley1977}, previously there were no other solutions within the off-shell PD class other than the on-shell PD solution. 
Currently, given that the new class of OP metrics has been found, this observation gains its importance, as the results of \cite{dudley1977} can be directly translated to this case (with the properly specified on-shell functions $Q$, $P$ and $\Omega$, see Appendix~\ref{App:onshell}).

\subsection{Teukolsky-like equation}\label{sec_Teuk_eq}

In the previous sections, we have analyzed field equations for various fields with various spins. 
It can be seen that all the resulting separate equations are very similar, and hence we can combine them into two simple general equations. 
Namely, the angular equations \eqref{rad_teuk_0}, \eqref{rad_teuk_1/2} and \eqref{rad_teuk_2} can be unified into the equation
\begin{widetext}
\begin{align}
    \Bigg[\dfrac{\dd}{\dd p}\Big(P \dfrac{\dd}{\dd p}\Big)+\dfrac{1}{6}(2s^2+1)\ddot{P}\pm 4 s\, \omega\, p-\dfrac{(\omega \,p^2+n\pm\frac{s}{2}\dot{P})^2}{P}+\lambda\Bigg]X_{\pm s}=0\,,\label{pol_teuk_gen}
\end{align}
where the plus and minus sign represent the right-handed and left-handed polarizations respectively, and $s$ is the spin of the particle. 
That is, ${s=0}$ corresponds to the Klein--Gordon field, ${s=1/2}$ to the massless Dirac field, and ${s=2}$ to the gravitational perturbations. 
Similarly, the radial equations \eqref{rad_teuk_0}, \eqref{rad_teuk_1/2} and \eqref{rad_teuk_2} can be combined into
\begin{align}
    \Bigg[Q^{-s}\dfrac{\dd}{\dd q}\Big(Q^{s+1}\dfrac{\dd}{\dd q}\Big)+\dfrac{1}{6}(2s+1)(s+1)\,Q''\pm 4\im s\, \omega\, q+K\,\dfrac{K\mp\im s\, Q'}{Q}-\lambda\Bigg]R_{\pm s}=0\,,\label{rad_teuk_gen}
\end{align}
where ${K=\omega\,q^2-n}$.
\end{widetext}

Note that the analogous equations were obtained in \cite{dudley1977}, where the authors considered the PD spacetime. 
Our novel result is that the same equation also holds for the off-shell {conformal-to-Carter} spacetime \eqref{off-shell-PD}. 

Notice also that the set of equations \eqref{pol_teuk_gen}--\eqref{rad_teuk_gen} was obtained by generalizing the equations for ${s=0}$, ${s=1/2}$, ${s=2}$. 
The case of $s=1$, considered in Sec \ref{sec_Maxwell}, does not fit into this result because we have used a different approach, namely the LFKK ansatz directly on the field. 
However, the Teukolsky-like approach can be also applied to yield the general set of equations \eqref{pol_teuk_gen}--\eqref{rad_teuk_gen} for ${s=1}$.

\section{Non-twisting case}\label{Sec:Nontwisting}

Although the OP spacetime in the form \eqref{OP} and the more general geometry \eqref{off-shell-PD} represent quite simple metrics that describes a large family of type~D exact solutions, including black holes, it does not allow for directly obtaining the {\em non-twisting} limit. 
This is because the optical scalars, related to the principal null directions (PNDs) of the Weyl tensor (namely $\rho_{sc}$ and $\mu_{sc}$), always have a non-zero imaginary part (see Eq.~(7) in \cite{Ovcharenko:2025cpm}), namely
\begin{align}
    \rho_{sc}=\mu_{sc}=\dfrac{1}{2}\sqrt{\dfrac{Q\Omega^2}{2\rho^2}}\Big[\Big(\ln\dfrac{\Omega^2}{\rho^2}\Big)_{,q}+\im (\ln \rho^2)_{,p}\Big]\,.
\end{align}
As the imaginary part is related to the twist of the corresponding two null PND congruences, this means that the metric \eqref{off-shell-PD}  describes only twisting geometries. 
However, physically, non-twisting spacetimes attract strong attention (including various extensions of the famous Schwarzschild black hole), and it is thus worth considering them also here.

\subsection{No-twisting metric and its hidden symmetries}

There are several ways to consider the non-twisting subcase. 
One approach is to perform a set of transformations of coordinates and parameters, introducing the twist parameter explicitly and then taking the zero-twist limit. 
This was explicitly performed in Sec.~V of \cite{Ovcharenko:2025cpm}.
However, such procedure would be unnecessarily complicated for the construction we have presented above. 
We thus adopt here an alternative approach, namely to start {\em directly} with the {\em non-twisting version of the off-shell metric} \eqref{off-shell-PD}, which reads
\begin{align}
    \dd s^2=\dfrac{1}{\Omega^2}\Big[-\mathcal{Q}\,\dd \eta^2+\dfrac{\dd q^2}{\mathcal{Q}}+q^2\Big(\dfrac{\dd p^2}{P}+P\,\dd\sigma^2\Big)\Big]\,.\label{metr_non_twist}
\end{align}

Employing the field equations, together with specifying the corresponding vector potential, one can obtain various non-twisting solutions, for example the Schwarzschild--BR (see Sec. \ref{sec_Schw_BR}) or the non-twisting C-metric (see  Sec. \ref{sec: PD}). 
To keep our results here as general as possible, we will not stick to any explicit spacetime, and we will investigate symmetry properties of the off--shell geometry \eqref{metr_non_twist}. 

As before, in what follows, we shall thus consider the {\em general off-shell spacetime} of the form \eqref{metr_non_twist} with
\be 
\mathcal{Q}=\mathcal{Q}(q)\,,\quad P=P(p)\,, \quad \Omega=\Omega(p,q)\,.
\ee 
One can easily check that in this case the optical scalars $\rho_{sc}$ and $\mu_{sc}$ have zero imaginary part, which means that the class of metrics \eqref{metr_non_twist} is indeed non-twisting.

Such a class of geometries is clearly equipped with two Killing vectors, $\partial_\eta$ and $\partial_\sigma$. 
In addition, it admits a CKY 2-form $h$  (this time degenerate) given by 
\be\label{hdeg} 
h=\dfrac{q}{\Omega^3}\,\dd\eta\wedge\dd q\,,
\ee 
i.e.
\be\label{hdeg-b} 
h=\frac{1}{\Omega^3}\,\dd b\,,\quad \hbox{where}\quad 
2b=-q^2\dd \eta\,.
\ee 
Despite being degenerate, this tensor remains non-trivial, and provides an additional independent symmetry allowing for separability of various test field equations, see below.
Its dual ${k=\star h}$ is also a CKY 2-form that is explicitly given by
\begin{align}
    k=\dfrac{q^3}{\Omega^3}\,\dd p\wedge \dd \sigma\,.
\end{align}
The corresponding conformal Killing tensors are
\begin{align}
    Q_{(h)}&=\dfrac{q^2}{\Omega^4}\Big(\mathcal{Q}\,\dd \eta^2-\dfrac{\dd q^2}{\mathcal{Q}}\Big)\,,\\
    Q_{(k)}&=\dfrac{q^4}{\Omega^4}\Big( P\,\dd \sigma^2 + \dfrac{\dd p^2}{P}\Big)\,.
\end{align}

Similar to the twisting case, we may ask ourselves for which conformal factors $\Omega$ one can deduce the existence of the two Killing vectors, namely $\partial_\eta$ and $\partial_\sigma$, from the above hidden symmetry. 
The procedure of investigation essentially follows that in Sec.~\ref{Sec:Tower}, and {and the results are} summarized in Table~\ref{Table:2}. 
It follows from this table that there are two interesting conformal factors, namely the PD conformal factor  
\be 
\Omega=1-pq\,,
\ee 
and the novel conformal factor 
\be \label{Noveldeg}
\Omega=\frac{q^2}{1-pq}\,,
\ee 
for which we recover both isometries of this spacetime from its hidden symmetry.%
\footnote{Here, for convenience, we have ``normalized'' the Killing vector constants $a$ and $b$ to $\pm 1$.} 
The latter seems quite interesting and worth investigating in the future.

\renewcommand{\arraystretch}{1.5}

\begin{center}
\setlength{\tabcolsep}{6pt}
\begin{table*}[t!]

\begin{tabular}{ |c||c|c|c|p{2.5cm}| }
 \hline
 \multicolumn{5}{|c|}{$\text{Non-twisting case:}\  \text{Killing vector of the form}\quad a\,\partial_\eta+b\,\partial_\sigma$ } \\
 \hline
 Vector & ${\,a\neq0\neq b\,}$  &$a=0\,, b\neq 0$& $a\neq0\,, b=0$ & ${\quad a=0=b}$\\
 \hline
 \hline
 $\xi=\frac{1}{3}\nabla\cdot h$  & \xmark    &\xmark&   $\Omega=a+q\,F(p)$& $\Omega=q\,F(p)$ \\
 $\zeta=\frac{1}{3}\nabla\cdot k$  & \xmark    & $\Omega=F(q)+b\,pq$&  \xmark  & $\Omega=F(q)$  \\
 $\eta=K_k\cdot \xi$  & \xmark    & \xmark &   \xmark & $\Omega(p,q)$\ \text{arbitrary} \\
  $\hat{\eta}=K_h\cdot \zeta$  & \xmark    &\xmark&   \xmark& $\Omega(p,q)$\ \text{arbitrary} \\
$\iota=K_k\cdot \zeta$  & \xmark    &   $\displaystyle\Omega=\frac{q}{q\,F(q)-b\,p}$ & \xmark & $\Omega=F(q)$\\
$\hat{\iota}=K_h\cdot \xi$  & \xmark    &\xmark& $\displaystyle\Omega=\frac{q^2}{a+q\,F(p)}$ & $\Omega=q\,F(p)$ \\
 \hline
\end{tabular}
\caption{
{\bf Construction of the Killing vectors for the non-twisting case.} 
This table summarizes various conformal factors $\Omega$, and the associated spacetimes $g$, \eqref{metr_non_twist}, that give rise to Killing vector fields from the existence of a degenerate hidden symmetry of the CKY 2-form $h$, \eqref{hdeg}, and its dual  $k=\star h$.
Here, $a$ and $b$ are the constants determining the specific Killing vector of the form ${\,a\,\partial_\eta+b\,\partial_\sigma\,}$, and $F(\,\cdot\,)$ is an arbitrary function of it argument. 
As before,  the symbol \xmark\  indicates that the solution does not exist.  
Since the non-twisting OP spacetime \eqref{metr_non_twist} is not in this table, this proves that although it admits two Killing vectors ${\partial_\eta}$ and ${\partial_\sigma}$, these cannot be generated (in a straightforward manner, at least) from their two CKY 2-forms \eqref{CKY-OP-h} and \eqref{CKY-OP-k}. 
}
\label{Table:2}
\end{table*}
\end{center}

\subsection{Separability properties}

Now, let us also briefly discuss how the test field or particle  equations simplify for the non-twisting case. 
{The Hamilton--Jacobi equation (\ref{HJ_eq_2})  in this case becomes
\begin{align}
    \dfrac{L^2}{P}-\dfrac{E^2q^2}{\mathcal{Q}}+q^2\mathcal{Q}\,(W')^2+P\,(\dot{W})^2=0\,.
\end{align}
Contraction of the conformal Killing tensors with the 4-momentum gives an analogue of the equation \eqref{Cart_const},
\begin{align}
    q^2\Big(\dfrac{E^2}{\mathcal{Q}}-\mathcal{Q}\,(W')^2\Big)=C\,.
\end{align}
Thus, the equation of geodesic motion becomes separable, and we obtain
\begin{align}
    P\,(\dot{W})^2=C-\dfrac{L^2}{P}\,,\quad
    q^2\mathcal{Q}\,(W')^2=\dfrac{E^2q^2}{\mathcal{Q}}-C\,.
\end{align}
The corresponding 4-velocities are thus given by
\begin{align}
    \frac{\dd\eta}{\dd\lambda}&=\dfrac{\Omega^2}{\mathcal{Q}}E\,,\quad \frac{\dd q}{\dd\lambda}=\dfrac{\Omega^2}{q}\sqrt{E^2q^2-C\mathcal{Q}}\,,\nonumber\\
    \frac{\dd\sigma}{\dd\lambda}&=\dfrac{\Omega^2}{P}\dfrac{L}{q^2}\,,\quad \frac{\dd p}{\dd\lambda}=\dfrac{\Omega^2}{q^2}\sqrt{CP-L^2}\,.
\end{align}

It can also be shown that the analogues of the Teukolsky-like equations (\ref{pol_teuk_gen})--(\ref{rad_teuk_gen}) are 
\begin{align}
    &\Bigg[\dfrac{\dd}{\dd p}\Big(P \dfrac{\dd}{\dd p}\Big)+\dfrac{1}{6}(2s^2+1)\ddot{P}-\dfrac{(n\pm\frac{s}{2}\dot{P})^2}{P}+\lambda\Bigg]X_{\pm s}=0\,,\\
    &\Bigg[Q^{-s}\dfrac{\dd}{\dd q}\Big(Q^{s+1}\dfrac{\dd}{\dd q}\Big)+\dfrac{1}{6}(2s+1)(s+1)\,Q''\nonumber\\
    &\hspace{28mm}\pm 4\im s\, \omega\, q+K\,\dfrac{K\mp\im s\, Q'}{Q}-\lambda\Bigg]R_{\pm s}=0\,,
\end{align}
where now ${K=\omega\, q^2}$, $Q=q^2 \mathcal{Q}$.

\section{Two new special cases}
\label{Sec:SpecialCases}
Finally, it is worth discussing explicitly two important subcases of the OP spacetimes, namely the Kerr-BR and the Schwarzschild--BR metrics, which correspond to (rotating) black holes in equilibrium with a background magnetic field. 
These are {\em exact solutions} to the Einstein--Maxwell equations. 
We shall provide explicit forms of their hidden symmetries, as well as calculate the Penrose charges mentioned above, providing new physically motivated examples of these non-trivial charges.

\subsection{Kerr--BR black holes}

\subsubsection{Solution}
The Kerr--BR black hole metric \cite{Podolsky:2025tle, Ovcharenko:2025cpm} has the form 
\begin{align}
    \dd s^2=&\frac{1}{\Omega^2}\Bigl[-\frac{Q}{\rho^2}(\dd t-a\sin^2\theta\, \dd\varphi)^2+\frac{\rho^2}{Q}\dd r^2+\frac{\rho^2}{\tilde{P}}\dd\theta^2\nonumber\\
    &\qquad+\frac{\tilde{P}}{\rho^2}\sin^2\theta\bigl(a \,\dd t-(r^2+a^2) \dd\varphi\bigr)^2\Bigr]\,,\label{metr_Kerr--BR}
\end{align}
where ${\rho^2=r^2+a^2\cos^2\theta}$.
The metric functions are explicitly
\begin{align}
\tilde{P} & = 1 + B^2 \Big(m^2\,\dfrac{I_2}{I_1^2} - a^2 \Big)\cos^2\theta\,,  \nonumber\\
{Q} & = \,\big(1+B^2r^2\big)\, \Delta\,,\nonumber\\[2mm]
\Omega^2 & = \,\big(1+B^2r^2\big) - B^2 \Delta \cos^2\theta\,,\label{Omega}
\end{align}
with
\begin{align}
\Delta &= \Big(1-B^2m^2\,\dfrac{I_2}{I_1^2}\Big) r^2-2m\,\dfrac{I_2}{I_1}\,r + a^2
\,,\nonumber\\
    I_1 &= 1-\tfrac{1}{2} B^2a^2  \,,\qquad
    I_2 = 1-B^2a^2   \,.\label{I1I2}
\end{align}

As discussed in \cite{Podolsky:2025tle}, the axes of \eqref{metr_Kerr--BR} are not regular unless the periodicity of the angle $\varphi$ is $[0,2\pi C)$, where $C$ is given by
\begin{equation}
    C=\frac{1}{\tilde P(0)}=\frac{1}{\tilde P(\pi)}=\left[ 1+B^2\left(m^2\frac{I_2}{I_1^2}-a^2\right)\right]^{-1}\,.
\end{equation}

The corresponding 4-potential, solving the Einstein--Maxwell field equations for the metric \eqref{metr_Kerr--BR}, can be obtained by taking the real counterpart of Eq.~(2.8) in \cite{Podolsky:2025tle} and is given by
\begin{align}
A=&\dfrac{1}{B\,(r^2+a^2 \cos^2\theta)}\times\nonumber\\
  &\Big[\ \Omega_{,r}\big(r\cos\gamma+a \cos\theta \sin\gamma\big)\big(a\,\dd t-(r^2+a^2)\dd \varphi\big)\nonumber\\&
  +\dfrac{\Omega_{,\theta}}{\sin\theta}\big(a\cos\theta \cos \gamma-r \sin \gamma\big)\big(\dd t-a \sin^2\theta\, \dd \varphi\big)\Big]\nonumber\\
  &+\cos\gamma\, \dfrac{\Omega-1}{B}\,\dd \varphi\,,\label{4pot_Kerr-BR}
\end{align}
where $\gamma$ is the duality rotation parameter (mixing the electric and magnetic components of the Maxwell field)} and $\Omega$ is the conformal factor \eqref{Omega}.

The Kerr--BR metric (\ref{metr_Kerr--BR}) is obtained from \eqref{OP} by the specific coordinate transformation, namely
\begin{align}
    t=\eta-a^2\sigma\,,\quad
    \varphi=-a\sigma\,,\quad
    a\cos\theta=p\,,\quad
    r=q\, ,\label{KerrBR_transf_to_gen}
\end{align}
{with $\Omega^2, Q$ given by \eqref{Omega}, and} ${P(p)=(a^2-p^2)\tilde P(p)}$.

The corresponding CKY tensors for the Kerr--BR black holes take the following form:
\begin{align}
    h&=\frac{1}{\Omega^3}\left( -r\,\dd r\wedge[\dd t-a\sin^2\theta \dd\varphi]
    \right.
    \nonumber\\
    &\qquad\quad\left.- a\cos\theta\sin\theta\, \dd\theta\wedge[a\dd t-(r^2+a^2)\dd \varphi]\right),\label{eq: h BL}
    \\
    k&=\frac{1}{\Omega^3}\left( -a\cos\theta\, \dd r\wedge[\dd t-a\sin^2\theta\, \dd\varphi]
    \right.
    \nonumber\\
    &\qquad\qquad\left.r\sin\theta\, \dd\theta\wedge[a\dd t-(r^2+a^2)\dd\varphi]\right).\label{eq: k BL}
\end{align}
In fact, up to the conformal factor these are the usual expressions for the Kerr--Newman black hole in Boyer--Lindquist coordinates. 
Note that, as discussed above, the hidden symmetries do not --- in an obvious way --- generate the explicit ones $\partial_t$ and $\partial_\varphi$.

\subsubsection{Separability}
All the separability results derived above carry over to this particular case. 
As these are scalar equations, one can easily apply the coordinate transformations \eqref{KerrBR_transf_to_gen}, and the corresponding relabelling of the parameters
\begin{equation}
    \omega\to \tilde \omega \,, \quad n\to -a^2 \tilde \omega-a \tilde n\,,
\end{equation}
to obtain the Teukolsky-like equations (\ref{pol_teuk_gen})--(\ref{rad_teuk_gen}) for the Kerr--BR black hole.

We include the explicit results for null geodesics here because the above transformation is a little bit more involved. 
Proceeding as before, with the additive separation of variables ansatz (\ref{separation-ansatz}),
\begin{equation}
  S=W_t(t)+W_r(r)+W_\theta(\theta)+W_\varphi(\varphi)\,,
\end{equation}
the two Killing constants of motion give
\begin{equation}
     W_t=-\tilde{E} \,t\,,\qquad W_\varphi= \tilde{L}\,\varphi\,.
\end{equation}
The contravariant conformal Killing tensor reads
\begin{align}
    K^{\mbox{\tiny (KBR)}}_{(h)}&=r^2\bigg[\frac{1}{\rho^2Q}\big((r^2+a^2)\,\partial_t+a\,\partial_\phi\big)^2-\frac{Q}{\rho^2}\partial_r^2\bigg]
    \nonumber\\
    &\hspace{-9mm}+a^2\cos^2\theta\bigg[\frac{1}{\rho^2\tilde P\sin^2\theta}\big(a\sin^2\theta \,\partial_t+\partial_\varphi\big)^2+\frac{\tilde P}{\rho^2}\partial_\theta^2\bigg].
\end{align}
Carter's constant \eqref{Carter1} thus yields the remaining terms
\begin{align}
   \tilde P\,(\dot{W}_\theta)^2&=\tilde C-\dfrac{\big(\tilde{E}\, a^2\sin^2\theta-\tilde {L}\big)^2}{\tilde P\,\sin^2\theta}\,,\label{eq: W_theta}
  \\
  Q\,(W'_r)^2&=\dfrac{\big(\tilde{E}\, (r^2+a^2)-a\tilde{L}\big)^2}{Q}-\tilde{C}\,.
  \label{eq: W_r}
\end{align}
Consequently, using the expression \eqref{eq: velocity momentum HJ relation} we obtain
\begin{align}
    \dfrac{\rho^2}{\Omega^2}\dfrac{\dd t}{\dd\lambda}&=\dfrac{r^2+a^2}{Q}[\tilde{E}(r^2+a^2)-a\tilde{L}]+\dfrac{a}{\tilde{P}}(\tilde{L}-a \tilde{E}\sin^2\theta)\,,\nonumber\\
    \dfrac{\rho^2}{\Omega^2}\dfrac{\dd\varphi}{\dd\lambda}&=\dfrac{a}{Q}[\tilde{E}(r^2+a^2)-a\tilde{L}]+\dfrac{\tilde{L}}{\tilde{P}\sin^2\theta}-\dfrac{a\tilde{E}}{\tilde{P}}\,,\\
    \dfrac{\rho^2}{\Omega^2}\dfrac{\dd r}{\dd\lambda}&=\pm \sqrt{[\tilde{E}(r^2+a^2)-a\tilde{L}]^2-Q \tilde{C}}\,,\nonumber\\
    \dfrac{\rho^2}{\Omega^2}\dfrac{\dd\theta}{\dd\lambda}&=\pm\dfrac{1}{\sin\theta}\sqrt{\tilde{C}\tilde{P}\sin^2\theta-(\tilde{L}-a\tilde{E}\sin^2\theta)^2}\,.\nonumber
\end{align}
These results are equivalent to the recent calculations in e.g. \cite{Wang:2025vsx, Zeng:2025tji, Ali:2025beh, Vachher:2025jsq}.

\subsection{Schwarzschild--BR solution}\label{sec_Schw_BR}

In the non-twisting case the Kerr rotation parameter~$a$ vanishes, and the metric 
(\ref{metr_Kerr--BR}) simplifies to
\be
    \dd s^2=\frac{1}{\Omega^2}\Bigl[-\mathcal{Q}\,\dd t^2+\frac{\dd r^2}{\mathcal{Q}}+r^2\Big(\frac{\dd\theta^2}{\tilde{P}}+\tilde{P}\sin^2\theta\, \dd\varphi^2 \Big)\Bigr]\,,\label{metr_Schw-BR}
\ee
where 
\begin{align}\label{metric-Schw-BR}
\tilde{P}&=1+B^2m^2 p^2\,, \nonumber\\
{\cal{Q}}&= \,\big(1+B^2r^2\big)\big( 1 - B^2m^2 - 2m/r \big)\,,
     \\
\Omega^2& = 1 + B^2 \big[\,r^2 (1-p^2) + \big( 2m\,r  + B^2m^2 r^2 \big)p^2 \big]\,.\nonumber
\end{align}
The corresponding 4-potential,  obtained as the $a\to 0$ limit of  \eqref{4pot_Kerr-BR}, reads 
\begin{align}
    A=&\dfrac{1}{B}\Big[\cos\gamma\, (\Omega-r\,\Omega_{,r}-1)\,\dd \varphi+\sin \gamma\dfrac{\Omega_{,p}}{r}\,\dd t \Big],
\end{align}
where, as in the case of the Kerr-BR solution, $\gamma$~is the duality rotation parameter (${\gamma=0}$ gives purely magnetic field, whereas ${\gamma=\frac{\pi}{2}}$ gives purely electric field). Note also that in this case, the conicity parameter simply reads 
\be 
C=\frac{1}{1+B^2m^2}\,.
\ee 

Also, the CKY tensors significantly simplify, and are given by 
\be \label{hk_Schwarz}
h=-\frac{r}{\Omega^3} \, \dd r\wedge \dd t\,,\quad k=-\frac{r^3}{\Omega^3} \,\sin\theta\, \dd \theta\wedge \dd \varphi\,. 
\ee
Finally, the corresponding separated equations can be easily obtained from the Kerr--BR results with the parameter $a\to 0$.

\subsection{Penrose charges}
Proceeding as outlined in Sec.~\ref{Sec:HS}, we seek to calculate the Penrose charges \eqref{eq: Penrose Charge}. 
To gain some insight, we first calculate these for the pure Bertotti--Robinson spacetime, obtained from the $m\to0$ limit of the Schwarzschild--BR spacetime, and then turn to the two black hole cases. 

\subsubsection{Bertotti--Robinson spacetime}

To start, let us consider the spacetime  \eqref{metr_Schw-BR}, \eqref{metric-Schw-BR} with ${m=0}$. 
To evaluate the integral \eqref{eq: Penrose Charge}, we use the CKY tensors $h$ and $k$ given by \eqref{hk_Schwarz}. 
Namely, taking the integral over a constant 2-sphere with fixed~$t$ and ${r=r_0}$, we have (starting with $h$): 
\be 
 q_{\mbox{\tiny P}}[h]=\frac{1}{8\pi} \int_{0}^{2\pi C} \dd\varphi\int_0^\pi \dd\theta \,(\star J^h)_{\theta\varphi}\label{qph}\,,
\ee
where ${C=1}$. 
This integral can be evaluated explicitly, and it reads
\ba\label{qh-BR}
q_{\mbox{\tiny P}}[h]&=& \frac{1}{8\pi} \,\frac{B^2}{1+B^2r_0^2} \,\frac{4}{3}\pi r_0^3\nonumber\\
&=&\frac{1}{6}B^2 r_0^3(1-B^2r_0^2)+{\cal O}(B^6)\,,
\ea 
while the analogous integral $q_{\mbox{\tiny P}}[k]$ vanishes for the CKY vector~$k$. 
Obviously, the Penrose charge \eqref{qh-BR} measures, in some sense, the energy stored in the magnetic field determined by~$B$, and it diverges as ${r_0\to \infty}$, reflecting the infinite volume filled with magnetic energy. 
This seems in accordance with the interpretation suggested  in \cite{Hull:2025ivk}, where such charges were introduced, as some kind of ``{\em quasi-local} mass''.

However, this interpretation is a bit more subtle. 
To quadratic order in $B$ only, it agrees with the magnetic self energy, calculated from 
\be 
U_{\mbox {\tiny self}}=-\int T^t{}_t\, \dd V\,,
\ee 
where $T^t{}_t$ simply reads ${T^t{}_t=-B^2/(8\pi)}$, and we consider volume inside the radius $r_0$. 
This yields 
\ba 
U_{\mbox {\tiny self}}&=&\frac{\log(B\,r_0+\sqrt{1+B^2r_0^2}\,)}{2B}-\frac{r_0}{2\sqrt{1+B^2r_0^2}}\,\nonumber\\
&=&{\frac{1}{6}B^2r_0^3\Big(1-\frac{9}{10}B^2r_0^2\Big)}+{\cal O}(B^6)\,.
\ea 
At the same time, it was shown in \cite{Hull:2025ivk} that in Einstein spacetimes, i.e. for spacetimes obeying $G_{\mu\nu}+\Lambda g_{\mu\nu}=0$, the Penrose charges match the Komar charges (up to a constant prefactor) associated to the Killing vector which are generated by their respective CKY tensors. 
In our case, the CKY tensors no longer generate Killing vectors. 
{However, we may consider the Komar mass calculated from  $\xi=\partial_t$ at a given radius $r_0$, given by%
\footnote{One can show that the analogous Komar charge associated with $\partial_\varphi$ vanishes.} 
\be 
q_{\mbox{\tiny K}}[\xi]=-\frac{1}{8\pi} \int_{0}^{2\pi C} \dd\varphi\int_0^\pi \dd\theta \, (\star \dd\xi)_{\theta \varphi}\,,
\ee 
or more explicitly: 
\begin{align}
  q_{\mbox{\tiny K}}[\partial_t]&=  \frac{r_0}{2} - \frac{\text{arcsinh}(B r_0)}{2B \sqrt{1 + B^2 r_0^2}}\nonumber\\
  &=\frac{1}{6}B^2r_0^3 \Big(2-\frac{8}{5}B^2r_0^2\Big)+{\cal O}(B^6)\,.
\end{align}
Up to a factor of 2, this coincides with the above Penrose charge in the leading order of small $B$ expansion. 
}

{To summarize our findings, for the Bertotti--Robinson spacetimes the Penrose charge \eqref{eq: Penrose Charge} associated with the CKY tensor $h$ calculates a some kind of quasi-local energy contained in the magnetic field (whereas it vanishes for the CKY tensor $k$). This energy is related to, but not exactly equal to, those obtained by other methods, the Komar integration for example.  }

We shall now proceed to calculate such a charge for the Schwarzschild--BR and Kerr--BR spacetimes. 
The analogous calculations for the Kerr-NUT and C-metric spacetimes are presented in Appendix~\ref{App:onshell}.\footnote{Interestingly, as shown in Appendix~\ref{App:onshell}, in the presence of the NUT parameter, the Penrose charge derived from the CKY tensor $k$, being proportional to the NUT parameter,  no longer vanishes. 
At the same time, the associated with  $k$ Killing vector $\partial_\varphi$ yields an angular momentum type quantity.}

\subsubsection{Schwarzschild--BR}

Let us now return to the full spacetime \eqref{metr_Schw-BR} and its CKY tensors \eqref{hk_Schwarz}. 
As before, the charge associated to the CKY tensor $k$ vanishes, while the one related to $h$ is given by formula \eqref{qph}. 
Interestingly, even for non-trivial $m$ one can find an exact expression for this charge. However, such an expression is quite large and not illuminating. For this reason, we focus on the following two special cases. 
\vspace{2mm}

i) We evaluate it {\em perturbatively in~$B$} for a general ${r=r_0}$. This yields
\be
   q_{\mbox{\tiny P}}[h]=m+\frac{1}{6} B^2 \big(r_0^3-2m\,r_0^2-6 m^3\big)
    +{\cal O}(B^4)\,. \label{Penrose-Schw-Br}
\ee
Of course, this agrees with \eqref{qh-BR} in the limit ${m\to0}$.
\vspace{2mm}

ii) We evaluate it for a general value of~$B$, but only {\em on the horizon},  that is for ${r=r_+}$ given by the root of ${\Delta(r_+)=0}$, namely  ${r_+=2m/(1-B^2m^2)}$. This simply yields
\begin{equation}
     q_{\mbox{\tiny P}}[h]\big|_{r_0=r_+} = C\,m = \frac{m}{1+B^2m^2}\,.
\end{equation}
Interestingly, this agrees with the ``mass'' of the Schwarzschild--BR spacetime suggested in \cite{Podolsky:2025tle}. In such a case, the Penrose charge evaluated at the horizon would give a measure of the mass of the black hole.

\subsubsection{Kerr--BR}

Finally, let us turn to the Kerr--BR spacetime \eqref{metr_Kerr--BR} and its CKY tensors \eqref{eq: h BL} and \eqref{eq: k BL}. As before, the Penrose charge \eqref{eq: Penrose Charge} vanishes for the CKY tensor~$k$. 
We can evaluate the integral \eqref{qph} arising from the CKY $h$ in this spacetime, but only perturbatively for small values of $B$. 
Namely, taking the integral over a 2-sphere with fixed~$t$ and ${r=r_0}$, we explicitly obtain
\begin{align}
  q_{\mbox{\tiny P}}[h] 
   &=m+B^2 
   \Big[\,\frac{1}{6} a^2 (r_0+2m) \nonumber\\  
   &+\frac{1}{6}(r_0^3-6 m r_0^2+5 m^2 r_0-6 m^3)
   \nonumber\\
   & -\frac{1}{4} m (4 r_0-5 m)\frac{r_0^4}{a^3}\Big(\frac{a}{r_0}-\Big(1+\frac{a^2}{r_0^2}\Big)  \arctan\Big(\frac{a}{r_0}\Big)\Big)\Big]
\nonumber\\
   & +{\cal O}(B^4)\,. \label{Penrose-Kerr-Br}
\end{align}
This generalizes \eqref{Penrose-Schw-Br} to include the Kerr rotation parameter~$a$, and of course reduces to it in the limit $a\to0$.

\section{Conclusions}\label{Sec:Conclusions}

Motivated by the newly discovered class of type D spacetimes with non-aligned electromagnetic field \cite{Podolsky:2025tle, Ovcharenko:2025cpm}, in this paper, we have investigated the hidden symmetries and separability properties of the general off-shell conformal-to-Carter spacetimes \eqref{off-shell-PD}, 
\be \label{g_concl}
g=\frac{1}{\Omega^2}\,g^\C\,,
\ee 
where $\Omega$ is an {\em arbitrary conformal factor} and $g^\C$ is the off-shell Carter's metric \eqref{Carter}, with any ${Q=Q(q)}$ and ${P=P(p)}$. 
In particular, we have focused on spacetimes for which the conformal rescaling preserves the isometries/explicit symmetries, i.e. ${\Omega=\Omega(p,q)}$.
The corresponding hidden symmetries are of a {\em weaker} nature than for the Carter spacetime, as they posses only a {\em conformal} Killing--Yano  tensor $h$ which is no longer closed.
Unsurprisingly, we then find only conformal (not full) Killing tensors, from the square of $h$, and the separability properties carry through only for conformal/massless test field equations.  
The situation is similar to the PD case where these weaker objects were already known, e.g.~\cite{Kubiznak:2007kh}.

However, our results go far beyond the previously discussed case of  PD metrics and cover, in particular, the newly discovered OP spacetimes. 
Interestingly, we have shown that for such spacetimes, and contrary to the Carter and PD case, the hidden symmetry no longer generates explicit symmetries in any straightforward way.

Among the conformal-to-Carter spacetimes, we have also discovered a new geometry for which the conformal factor takes a specific form \eqref{novel}, namely
\be 
\Omega=\frac{pq}{1-pq}\,,
\ee 
and for which the isometries can be inferred. 
It remains to be seen whether this new {\em geometrically preferred} metric remains to solve the Einstein equations with some physically well-motivated matter fields.

We have also explicitly demonstrated {\em separability} of various test field or particle equations with intrinsic conformal invariance (massless/conformally coupled) in spacetimes \eqref{g_concl},
 showing that there is a Teukolsky-type equation combining the spin 0, 1/2, and 2 fields. 
 This could prove useful for investigating the physical properties of the OP spacetimes, including (but not limited to) studying the shadows of, and gravitational waves on, the Kerr--BR black holes. 
 Upcoming work will show that the equations of motion for massless spinning particles separate in these spacetimes as a special case of~\cite{Gray2025}.

Finally, we have shown that the OP spacetimes provide a unique testing-ground for studying the recently defined Penrose charges \cite{Hull:2025ivk} whose existence is intrinsically related to hidden rather than explicit symmetries. 
More specifically, we have calculated such charges for the Bertotti--Robinson, Schwarzchild--BR, and Kerr--BR spacetimes in the expansion of small external magnetic field~$B$, observing that such (quasi-local) charges include the information about the magnetostatic energy. 
As shown in the appendix, such charges are also able to detect the presence of NUT charges and acceleration. 
The precise interpretation of these results, however, remains an important open question for future studies.

In closing, let us mention a few more possible future directions. 
From a mathematical perspective, it would be interesting to try to fully characterize the conformal symmetry operators for the massless test fields in a unified language,  working directly with the fields themselves.
Work towards this has been done in the Geroch--Held--Penrose (extension to the Newman--Penrose) formalism~\cite{Araneda:2018ezs} but it is open how to do this when working directly with the fields. 
Perhaps this can be done using the recent characterization of symmetry operators~\cite{Mei:2023pho, Gray:2024kad,Cynthia}.
Another direction is to look for degenerate limits of OP metrics in the sense of  \cite{Frolov:2017whj}, and to explore whether such spacetimes can be cast in some generalized Kerr--Schild form \cite{Hassaine:2024mfs, Srinivasan:2025hro}.

\subsection*{Acknowledgements}

D.K. is grateful for the support from GA{\v C}R 23-07457S grant of the Czech Science Foundation, and the Charles University Research Center Grant No. UNCE24/SCI/016. H.O. and J.P were supported by the Czech Science Foundation Grant No.~GA\v{C}R 23-05914S, and by the Charles University Grant No.~GAUK 260325.

\appendix

\section{On shell spacetimes}
\label{App:onshell}
In this appendix we gather the on-shell metric functions (that is, exact solutions to the field equations) for various kinds of spacetimes studied in the main text.

\subsection{Carter spacetime}\label{sec_Carter}
In order the Carter metric \eqref{Carter} together with the vector potential \eqref{ACarter} satisfy the Einstein--Maxwell equations, the metric functions $Q$ and $P$ have to take the following specific form:
\ba \label{eq: Cart met funcs}
Q&=&(q^2+a^2)\Bigl(1-\frac{\Lambda}{3}q^2\Bigr)-2m\,q+{\rm e}^2+{\rm g}^2\,,\nonumber\\
P&=&(a^2-p^2)\Bigl(1+\frac{\Lambda}{3}p^2\Bigr)+2{\rm n}\,p\,.
\ea 
Such a solution is thus characterized by the following 3 parameters $\{a, m, {\rm n}\}$, related to rotation, mass, and NUT charge, the cosmological constant $\Lambda$, and the electric charge~${\rm e}$ and the magnetic charge~${\rm g}$. 

The Penrose charges can be calculated for both $h$ and~$k$. We present the calculation here only for ${\Lambda=0}$ for compactness. 
This is best done in Boyer--Lindquist coordinates in which the time and angular coordinates have the appropriate asymptotic properties. 
First, we make the coordinate transformation
\begin{align}
    t&=\eta-(a+{\rm n})^2\,\sigma\,,\qquad
    \varphi=-a\,\sigma\,,\nonumber\\
    a& \cos\theta+{\rm n}=p\,,\qquad r=q\,,
\end{align}
which brings $h$ and $k$ to the form:
\begin{align}
    h=&\,r\,\dd r\wedge(\dd t+2{\rm n}(1+\cos\theta)-a\sin^2\theta)\,\dd\phi
    \nonumber\\
    &+({\rm n}+a\cos\theta)\,\sin\theta \dd\theta\wedge(a\dd t -[r^2+(a-{\rm n})^2]\,\dd\phi)\,,
    \\
    k=&\,({\rm n}+a\cos\theta)\dd r\wedge(\dd t+2{\rm n}(1+\cos\theta)-a\sin^2\theta)\,\dd\phi
    \nonumber\\
    &-r\sin\theta \dd\,\theta\wedge(a\dd t -[r^2+(a-{\rm n})^2]\,\dd\phi)\,.
\end{align}
Upon inserting these into the definition \eqref{eq: Penrose Charge}, the charges take the form
\begin{align}\label{eq: Penrose KAdS}
    q_{\mbox{\tiny P}}[h]&=\frac{1}{8\pi} \int_{0}^{2\pi C} \dd\varphi\int_0^\pi \dd\theta \,(\star J^h)_{\theta\varphi}
    \nonumber\\[2mm]
    &=C\,\frac{\big(r_0^2+a^2-{\rm n}^2\big) \big(m-\tfrac{1}{4 r_0}({\rm e}^2+{\rm g}^2)\big)+2 {\rm n}^2 r_0}{r_0^2+(a+{\rm n})^2}
    \nonumber\\
    &\quad +C\,({\rm e}^2+{\rm g}^2)\,\frac{r_0^2+(a-{\rm n})^2}{8 a r_0^2}
    \nonumber\\
    &\quad \times\left[\arctan\left(\frac{r_0}{a-{\rm n}}\right)+\arctan\left(\frac{r_0}{a+{\rm n}}\right)\right],
    \nonumber \\[2mm]
   q_{\mbox{\tiny P}}[k]&= -C\,{\rm n}\,\frac{\tfrac{1}{2}({\rm e}^2+{\rm g}^2)+ \Delta (r_0)}{r_0^2+(a+{\rm n})^2}\,,
\end{align}
where 
\begin{align}\label{eq: NUT Delta}
\Delta(r)=r^2+a^2-{\rm n}^2-2m\,r+{\rm e}^2+{\rm g}^2\,,
\end{align}
$r_0$ is the Boyer--Lindquist radius of the sphere $t,r$~constant over which the integral is taken, and $C$ is the conical parameter.

If the charges vanish, ${{\rm e}=0={\rm g}}$, this simplifies to
\begin{align}\label{eq: Penrose KAdS no charge}
    q_{\mbox{\tiny P}}[h]
    &=C\,\frac{[\Delta(r_0) +2m\,r_0]\,m+2 {\rm n}^2 r_0}{r_0^2+(a+{\rm n})^2}\,,
    \nonumber \\[2mm]
   q_{\mbox{\tiny P}}[k]&= -C\,{\rm n}\,\frac{\Delta (r_0)}{r_0^2+(a+{\rm n})^2}\,.
\end{align}
When also the NUT parameter vanishes, ${{\rm n}=0}$, we obtain ${q_{\mbox{\tiny P}}[h]=C\,m}$ and ${q_{\mbox{\tiny P}}[k]=0}$, which is the expected result for the Kerr black hole. Interestingly, {\em on the black hole horizon} (determined by the condition ${\Delta(r_+)=0}$) the expressions \eqref{eq: Penrose KAdS no charge} reduce to 
\begin{align}
    q_{\mbox{\tiny P}}[h]
    &=2C\,\frac{(m^2+{\rm n}^2)\,r_+}{r_+^2+(a+{\rm n})^2}\,,
    \nonumber \\[2mm]
    q_{\mbox{\tiny P}}[k]&= 0\,.
\end{align}
Moreover,   
using the expression for the mass in terms of the radius of the horizon $r_+$, $m=(r_+^2+a^2-{\rm n}^2)/(2r_+)$, which comes from $\Delta|_{{\rm e}=0={\rm g}}(r_+)=0$, the last formula can also be rewritten as 
\begin{align}
    q_{\mbox{\tiny P}}[h]
         =C\,m + C\,\frac{{\rm n}\,({\rm n}-a)}{r_+}\,.
\end{align}
Note that in here, the parameters $m$ and $r_+$ are no longer independent quantitities.

Finally, by expanding \eqref{eq: Penrose KAdS} in the powers of ${r_0^{-1}}$, we obtain nice explicit formulas
\begin{align}
   q_{\mbox{\tiny P}}[h]&=\ C\,m+{\cal O}(r_0^{-1})\,,\nonumber
    \\
   q_{\mbox{\tiny P}}[k]&=-C\,{\rm n}+{\cal O}(r_0^{-1})\,,
\end{align}

There are some subtleties around this deficit as it cannot be regular on both axes at the same time, see e.g. \cite{BallonBordo:2019vrn}. 
To do this carefully one needs to account for these singular tubes~\cite{Bordo:2019tyh} in the Penrose charges, see also the recent discussion in \cite{Kolar:2025kle}.

\subsection{Pleba\'nski--Demia\'nski spacetime}\label{sec: PD}
{The Pleba\'nski--Demia\'nski spacetime \eqref{PD} is on-shell, provided 
\ba \label{eq: PB met funcs}
Q&=&\kappa+{\rm e}^2+{\rm g}^2-2m\,q+\epsilon q^2-2{\rm n}\,q^3-(\kappa+\Lambda/3)q^4\,,\nonumber\\
P&=&\kappa+2{\rm n}\,p-\epsilon p^2+2m\,p^3-(\kappa+{\rm e}^2+{\rm g}^2+\Lambda/3)p^4\,.\nonumber\\
\ea

Here, again $\Lambda$ stands for the cosmological constant, ${\rm e}$~is the electric charge parameter, and  ${\rm g}$~is the magnetic charge. The other 4 parameters $\{\kappa, \epsilon, m, {\rm n}\}$ are related to mass, rotation, the NUT charge, and acceleration; see  \cite{Griffiths:2005qp, Ovcharenko:2024yyu, Ovcharenko:2025fxg} for details.
One can also calculate the Penrose charges for the PD spacetime, i.e. by following the same procedure as for the Carter spacetime, i.e. transforming to Boyer--Lindquist coordinates.  
In contrast to the standard Komar charges these are quasi-local quantities and so can be evaluated at a finite radius without needing to worry about the asymptotics.
However the resulting expressions are not very enlightening and so we do present them here.

On the other hand, in the subcase of accelerating charged black holes without the NUT parameter (i.e. the $C$-metric with a cosmological constant~$\Lambda$), using the coordinate transformations and parameter redefinitions as in Section 3 and 4.2 of \cite{Griffiths:2005qp}, we find the relatively simple expressions
\begin{align}\label{eq: Penrose CMet}
    q_{\mbox{\tiny P}}[h]&=C\,m
    \nonumber\\
    &+C\,({\rm e}^2+{\rm g}^2)\,\frac{(a^2+r_0^2) \arctan(r_0/a)-a\, r_0}{4 a\, r_0^2}\,,
     \nonumber\\
     q_{\mbox{\tiny P}}[k]&=-\alpha \,C\,ma
     \nonumber\\
     &+\alpha\,C\,({\rm e}^2+{\rm g}^2) \,\frac{(a^2+r_0^2) \arctan(r_0/a)+a\, r_0}{4 a^2}\,.
\end{align}
Here $C$ is again the conicity parameter, the parameter $a$ denotes Kerr-like rotation, and $\alpha$ the acceleration.
These integrals are also done over slices of constant $t,r$. 
Due to the subtleties concerning the regularization of the axis encoded in~$C$ and the asymptotics, it is unclear exactly how to interpret these general expressions.

Notice however, that in the {\em vacuum case} ${{\rm e}=0={\rm g}}$,
\eqref{eq: Penrose CMet} reduces to
\begin{align}\label{eq: Penrose CMet-vacuum}
q_{\mbox{\tiny P}}[h]=C\,m\,,\qquad
q_{\mbox{\tiny P}}[k]=-\alpha \,C\,ma\,,
\end{align}
{\em independently} of
 the choice of~$r_0$. Moreover, the CKY~$h$ generates the Killing vector ${\xi=\partial_t}$, while the CKY~$k$ generates the Killing vector ${\zeta=\alpha\,\partial_\varphi}$, cf.~Table~\ref{Tab: conform factors}. 
Notice also  that ${q_{\mbox{\tiny P}}[k]=0}$ when ${\alpha=0}$. 
This is because then the CKY $h$ becomes closed, and therefore $k$ becomes a KY tensor. 
In this case the Killing vector it generates vanishes and thence there is no associated secondary charge cf. \eqref{Current1form}. 
Thus, it once more seems reasonable to associate  $q_{\mbox{\tiny P}}[h]$ and  $q_{\mbox{\tiny P}}[k]$ with a `quasi-local mass' and `angular momentum', respectively, although the Komar-based equivalent asymptotic concepts are not well defined. 
A proper analysis would require to compare these expressions with thermodynamic quantities, which are well understood in the asymptotically AdS case~\cite{Anabalon:2018qfv}.

\subsection{Ovcharenko--Podolsk\'y spacetime}
The Ovcharenko--Podolsk\'y spacetime \eqref{OP} is on-shell (necessarily with ${\Lambda=0}$), provided the metric functions $P$ and $Q$ and the conformal factor $\Omega$ take the following form \cite{Ovcharenko:2025cpm}:
\begin{align}
 P=&\ \hat{a}_0+\hat{a}_1\,p+\hat{a}_2\,p^2+\hat{a}_3\,p^3+\hat{a}_4\,p^4\,, \nonumber\\
 Q=&\ \hat{b}_0+\hat{b}_1\,q+\hat{b}_2\,q^2+\hat{b}_3\,q^3+\hat{b}_4\,q^4\,,
\nonumber\\
\Omega^2=&\ 1 - 2\,p\,q + c_{10}\,p +c_{01}\,q +c_{02}\,(q^2-p^2) \nonumber\\
    &+c_{12}\,p\,q^2 +c_{21}\,p^2q+c_{22}\,p^2q^2\,,\label{omega_exp_1}
\end{align}
with the coefficients restricted as follows: 
\begin{align}
    \hat{a}_0&= \dfrac{c_{02}-\frac{1}{4}c_{01}^2}{4c'\bar{c}'}\,, \nonumber\\
    \hat{a}_1&= \dfrac{c_{01}+c_{02}c_{10}+c_{12}}{4c'\bar{c}'}\,, \nonumber\\
    \hat{a}_2&=-\dfrac{1+c_{02}^2-c_{10}c_{12}+\frac{1}{2}c_{01}c_{21}-c_{22}}{4c'\bar{c}'}\,,\label{a-hat-i}\\
    \hat{a}_3&= \dfrac{c_{21}+c_{22}c_{10}-c_{02}c_{12}}{4c'\bar{c}'}\,, \nonumber\\
    \hat{a}_4&=-\dfrac{c_{02}c_{22}+\frac{1}{4}c_{21}^2}{4c'\bar{c}'}\,, \nonumber
\end{align}
and
\begin{align}
    \hat{b}_0&= \dfrac{c_{02}+\frac{1}{4}c_{10}^2}{4c'\bar{c}'}\,, \nonumber\\
    \hat{b}_1&=-\dfrac{c_{10}-c_{02}c_{01}+c_{21}}{4c'\bar{c}'}\,, \nonumber\\
    \hat{b}_2&= \dfrac{1+c_{02}^2+\frac{1}{2}c_{10}c_{12}-c_{01}c_{21}-c_{22}}{4c'\bar{c}'}\,,\label{b-hat-i}\\
    \hat{b}_3&=-\dfrac{c_{12}+c_{22}c_{01}+c_{02}c_{21}}{4c'\bar{c}'}\,, \nonumber\\
    \hat{b}_4&= -\dfrac{c_{02}c_{22}-\frac{1}{4}c_{12}^2}{4c'\bar{c}'}\,. \nonumber
\end{align}
The electromagnetic 4-potential, corresponding to this exact spacetime, is given in \eqref{A_OP}. 
The family of solutions is thus characterized by 6 real parameters $\{ c_{10}, c_{01},c_{02},c_{12}, c_{21}, c_{22}\}$ and one complex parameter $c'$. In this text, we have set ${1/{\bar c'}=2({\rm e}+\im\, {\rm g})}$ to work with real parameters, that is ${1/(4c'{\bar c'})={\rm e}^2+{\rm g}^2}$. 
Note, however, that in this case the constants ${\rm e}$ and ${\rm g}$ are no longer straightforwardly related to electric and magnetic charges.

\section{Coordinate transformations}\label{App:B}

In this appendix, we discuss several coordinate transformations used to simplify the off-shell metrics in the main text. 

\subsection{Relation to the Pleba\'nski--Demia\'nski spacetime}

\begin{statement}
Starting from the conformal-to-Carter spacetime \eqref{off-shell-PD} with $\Omega$ given by
\begin{align}
    \Omega=a+c\, p\,q\,,\qquad a,c\neq 0\,,
\end{align}
there exists a coordinate transformation that brings this metric to the off-shell Pleba\'nski--Demia\'nski form \eqref{PD} with ${\Omega=1-p\,q}$\,.
\end{statement}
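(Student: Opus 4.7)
The plan is to construct an explicit affine rescaling of the four Carter coordinates $(\eta,\sigma,p,q)$ that converts $\Omega = a+cpq$ into $1-\tilde p\tilde q$ while preserving the conformal-to-Carter structure $g=\Omega^{-2}g^\C$. The natural ansatz is the four-parameter family of linear rescalings $p = \lambda\tilde p$, $q = \mu\tilde q$, $\eta = \alpha\tilde\eta$, $\sigma = \beta\tilde\sigma$, with real constants $\lambda,\mu,\alpha,\beta$ to be fixed.

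First I would handle the conformal factor. Writing $\Omega = a\bigl(1+(c/a)pq\bigr)$ and substituting yields $\Omega^2 = a^2\bigl(1+(c\lambda\mu/a)\tilde p\tilde q\bigr)^2$, so the first constraint is $c\lambda\mu=-a$. A real solution exists for any nonzero $a,c$: one takes $\mu=\lambda$ when $a/c<0$ and $\mu=-\lambda$ when $a/c>0$, and the two subcases run in parallel thereafter.

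Next I would enforce that the Carter part retains its form in the tilded coordinates. Since $\rho^2 = p^2+q^2$ must become proportional to $\tilde\rho^2 = \tilde p^2+\tilde q^2$, the requirement $\mu^2 = \lambda^2$ is automatic from the previous step. Demanding that the one-forms $d\eta-p^2 d\sigma$ and $d\eta+q^2 d\sigma$ retain their tilded structural form then fixes $\beta = \alpha/\lambda^2$. At this stage the metric reads $g = (\alpha\lambda/a^2)(1-\tilde p\tilde q)^{-2}\,g^\C_{\tilde Q,\tilde P}$, with $\tilde Q(\tilde q)\propto Q(\mu\tilde q)$ and $\tilde P(\tilde p)\propto P(\lambda\tilde p)$, so the remaining freedom in $\alpha$ is used to kill the overall prefactor: setting $\alpha = a^2/\lambda$ gives $\beta = a^2/\lambda^3$, and one reads off the explicit new potentials $\tilde Q(\tilde q) = a^2 Q(\mu\tilde q)/\lambda^4$ and $\tilde P(\tilde p) = a^2 P(\lambda\tilde p)/\lambda^4$, both single-variable functions and hence legitimate off-shell Carter potentials.

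The main obstacle is the bookkeeping: one must verify that the single remaining freedom in $\alpha$ simultaneously cancels the overall $a^2$ prefactor coming from the conformal factor across all four independent pieces of $g^\C$, namely $(d\tilde\eta-\tilde p^2 d\tilde\sigma)^2$, $(d\tilde\eta+\tilde q^2 d\tilde\sigma)^2$, $d\tilde q^2$, and $d\tilde p^2$. These four pieces involve $\alpha,\lambda,\mu$ in different combinations, so the consistency is nontrivial a priori; it succeeds thanks to the relation $\mu^2 = \lambda^2$ guaranteed by the same combination $\lambda\mu=-a/c$ already forced by the conformal factor. Once checked, the transformed metric is precisely of the off-shell Pleba\'nski--Demia\'nski form $g = (1-\tilde p\tilde q)^{-2}g^\C_{\tilde Q,\tilde P}$, completing the proof.
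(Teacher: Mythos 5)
Your proof is correct and follows essentially the same route as the paper's Lemma~1: a linear rescaling of all four coordinates combined with the redefinitions $\tilde Q = c^2 Q$, $\tilde P = c^2 P$ (your $a^2/\lambda^4 = c^2$), your $\alpha=\sqrt{a^3c}$ and $\beta=\sqrt{ac^3}$ matching the paper's choices exactly. The only difference is presentational — you derive the scalings from a general four-parameter ansatz rather than writing them down outright — and the one loose phrase is calling $\mu^2=\lambda^2$ ``automatic'' when it is really enforced by your explicit choice $\mu=\pm\lambda$ together with $c\lambda\mu=-a$.
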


\begin{proof}
    To start with, let us introduce the following new coordinates $q'$ and $p'$:%
    \footnote{Here we assume that $a,c> 0$. 
    The sign of $a$ is irrelevant and can be assumed to be positive as the metric only depends on  $\Omega^2$. 
    At the same time, $c<0$ case may be obtained by changing $p'\to -p'$ in all the expressions below.}
    \begin{align}
        q=\sqrt{\dfrac{a}{c}}\,q'\,,\qquad
        p=-\sqrt{\dfrac{a}{c}}\,p'\,.
    \end{align}
    After this redefinition, the conformal factor and the combination ${q^2+p^2}$ become 
    \begin{align}
        \Omega=a\,\Omega'\,,\quad
        \Omega'=1-p\,q\,,\quad q^2+p^2=\dfrac{a}{c}(q'^2+p'^2)\,.
    \end{align}
    Now, let us consider each term in the metric separately. 
    That is, let us start with the term $\dd q^2$. If we define $Q'=c^2Q$, then this term becomes
    \begin{align}
        \dfrac{1}{\Omega'^2}\dfrac{q'^2+p'^2}{Q'}\,\dd q'^2\,.
    \end{align}
    The analogous situation is with the $\dd p^2$ term that becomes 
    \begin{align}
        \dfrac{1}{\Omega'^2}\dfrac{q'^2+p'^2}{P'}\,\dd p'^2\,,
    \end{align}
    if we define ${P'=c^2 P}$. 
    Now, let us turn to  the term ${\dd\eta-p^2 \dd\sigma}$. After these definitions and by rescaling the coordinates $\eta$ and $\sigma$ in the following way:
    \begin{align}
        \eta=\sqrt{a^3\, c}\,\eta'\,,\qquad\sigma=\sqrt{ac^3}\,\sigma'\,,
    \end{align}
    this term becomes
    \begin{align}
        -\dfrac{1}{\Omega'^2}\,\dfrac{Q'}{q'^2+p'^2}\,(\dd\eta'-p'^2 \dd\sigma')^2\,.
    \end{align}
    Moreover, the last term becomes
    \begin{align}
        -\dfrac{1}{\Omega'^2}\,\dfrac{P'}{q'^2+p'^2}\,(d\eta'+q'^2 d\sigma')^2\,.
    \end{align}
    Thus, the metric indeed takes the form \eqref{PD} with 
    \begin{align}
        \Omega'=1-p'q'\,,
    \end{align}
    as was required to show.
    \end{proof}

\subsection{Carter spacetime in conformal coordinates}

\begin{statement}
    Starting from the conformal-to-Carter metric \eqref{off-shell-PD} with $\Omega$ given by
\begin{align}
    \Omega = p\,q\,,
\end{align}
there exists a coordinate transformation that brings this metric to the off-shell Carter form \eqref{Carter}.
\end{statement}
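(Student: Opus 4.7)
The plan is to guess the explicit coordinate change from the structural role-reversal of the two CKY 2-forms. As already noted in Sec.~\ref{Sec:Tower}, the conformal factor $\Omega = pq$ is precisely the one that exchanges the roles of $h$ and $k$: the rescaled dual $k = \Omega^{-3}k^\C$ becomes the new (non-degenerate) principal closed CKY tensor of the geometry, while $h = \Omega^{-3}h^\C$ becomes a Killing--Yano tensor. This strongly suggests a coordinate change that interchanges the ``$q$''- and ``$p$''-blocks of the Carter metric while simultaneously inverting $p$ and $q$ so as to absorb the overall conformal factor, together with the swap $\eta \leftrightarrow \sigma$ that exchanges the two Killing vectors $\partial_\eta$ and $\partial_\sigma$.

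Concretely, I would introduce
\begin{equation*}
\tilde q = 1/q\,,\quad \tilde p = 1/p\,,\quad \tilde \eta = \sigma\,,\quad \tilde \sigma = \eta\,,
\end{equation*}
together with the new off-shell metric functions
\begin{equation*}
\tilde Q(\tilde q) = \tilde q^{\,4}\,Q(1/\tilde q)\,,\qquad \tilde P(\tilde p) = \tilde p^{\,4}\,P(1/\tilde p)\,,
\end{equation*}
and then verify block-by-block that $g = (pq)^{-2}\,g^\C$ takes the Carter form \eqref{Carter} with $(q,p,\eta,\sigma,Q,P)$ replaced by $(\tilde q,\tilde p,\tilde\eta,\tilde\sigma,\tilde Q,\tilde P)$. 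The verification rests on the identities
\begin{equation*}
\rho^2/\Omega^2 = \tilde p^{\,2}+\tilde q^{\,2} \equiv \tilde\rho^{\,2}\,,\qquad dq=-d\tilde q/\tilde q^{\,2}\,,\qquad dp=-d\tilde p/\tilde p^{\,2}\,,
\end{equation*}
together with
\begin{equation*}
d\eta - p^2\,d\sigma = -\tilde p^{\,-2}\bigl(d\tilde\eta - \tilde p^{\,2}\,d\tilde\sigma\bigr)\,,\qquad d\eta + q^2\,d\sigma = \tilde q^{\,-2}\bigl(d\tilde\eta + \tilde q^{\,2}\,d\tilde\sigma\bigr)\,,
\end{equation*}
which follow immediately from the swap $\tilde\eta=\sigma,\,\tilde\sigma=\eta$.

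There is no genuine obstacle to the argument: the entire verification is pure bookkeeping of the powers of $\tilde p$ and $\tilde q$ produced by the conformal factor $\Omega^{-2}=\tilde p^{\,2}\tilde q^{\,2}$, by the Jacobians of the inversion, and by the reciprocals of $p^2,q^2$ hidden inside the brackets $d\eta\pm(\,\cdot\,)^2 d\sigma$. Once these are collected, the two kinetic blocks $\rho^2 Q^{-1}dq^2$ and $\rho^2 P^{-1}dp^2$ divided by $\Omega^2$ turn into $\tilde\rho^{\,2}\tilde Q^{-1}d\tilde q^{\,2}$ and $\tilde\rho^{\,2}\tilde P^{-1}d\tilde p^{\,2}$, while the two ``cross'' blocks pick up factors $\tilde p^{\,4}$ and $\tilde q^{\,4}$ that exactly cancel the ones generated by squaring the bracket identities above, leaving $-\tilde Q/\tilde\rho^{\,2}(d\tilde\eta-\tilde p^{\,2}d\tilde\sigma)^2$ and $\tilde P/\tilde\rho^{\,2}(d\tilde\eta+\tilde q^{\,2}d\tilde\sigma)^2$. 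This is consistent with, and provides an explicit coordinate realization of, the abstract uniqueness result of \cite{Krtous:2008tb} invoked in Sec.~\ref{Sec:Tower}: once a non-degenerate PKY tensor is present (here after the swap $h\leftrightarrow k$), the geometry must be off-shell Carter, so the existence of such a transformation is in any case guaranteed on general grounds.
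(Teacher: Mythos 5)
Your proposal is correct and coincides with the paper's own proof (Lemma~2 in Appendix~\ref{App:B}): the same inversion ${q=1/q'}$, ${p=1/p'}$, the same swap ${\eta\leftrightarrow\sigma}$, and the same redefinitions ${Q'=q'^4Q}$, ${P'=p'^4P}$, verified block by block. The added remarks on the $h\leftrightarrow k$ role reversal and the uniqueness theorem are consistent with the discussion already given in Sec.~\ref{Sec:Tower}.
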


\begin{proof}
    The proof of this statement is quite simple. Let us consider the following  transformation of coordinates: 
    \begin{align}
        q=\dfrac{1}{q'}\,,\qquad p=\dfrac{1}{p'}\,,
    \end{align}
    upon which the conformal factor and the combination $q^2+p^2$ become 
    \begin{align}
        \Omega=\dfrac{1}{p'q'}\,,\qquad q^2+p^2=\dfrac{q'^2+p'^2}{p'^2q'^2}\,.
    \end{align}
    The term $\dd q^2$, if we define ${Q'=Q\,q'^4}$, becomes
    \begin{align}
        \dfrac{q'^2+p'^2}{Q'}\,\dd q'^2\,.
    \end{align}
    The term $\dd p^2$, if we define ${P'=P\, p'^4}$, becomes
    \begin{align}
        \dfrac{q'^2+p'^2}{P'}\,\dd p'^2\,.
    \end{align}
    Let us next  consider the term  $\dd\eta-p^2 \dd\sigma$. 
    If we define 
    \begin{align}
        \eta=\sigma'\,,\qquad\sigma=\eta'\,,
    \end{align}
    it becomes
    \begin{align}
        -\dfrac{Q'}{q'^2+p'^2}\,({\dd\eta'-p'^2 \dd\sigma'})^2\,.
    \end{align}
    The term ${\dd\eta+q^2 \dd\sigma}$, analogously, becomes
    \begin{align}
        \dfrac{P'}{q'^2+p'^2}\,(\dd\eta'+q'^2\dd\sigma')^2\,.
    \end{align}
    Thus, the metric takes the form \eqref{Carter}, as we wanted to show.
\end{proof}

\subsection{Canonical form of the novel geometry N}

\begin{statement}
Starting from the conformal-to-Carter metric \eqref{off-shell-PD} with  $\Omega$ given by
\begin{align}
    \Omega=\dfrac{p\, q}{a+c\,p\,q}\,,\qquad a,c\neq 0\,,
\end{align}
there exists a coordinate transformation that brings it to the canonical form
\begin{align}
    \Omega=\dfrac{p\,q}{1-p\,q}\,,
\end{align}
while preserving the form of the metric \eqref{off-shell-PD}.
\end{statement}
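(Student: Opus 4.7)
The proof parallels Lemmas~1 and~2: one performs a linear rescaling of the four coordinates together with a redefinition of the metric functions $P$ and $Q$, chosen so as to absorb the two parameters $a$ and $c$. Concretely, I would set
\begin{equation*}
p = \alpha p',\qquad q = \beta q',\qquad \eta = A\eta',\qquad \sigma = B\sigma',
\end{equation*}
and fix the four constants $(\alpha,\beta,A,B)$ from the requirements that $\Omega$ takes the canonical form, $\rho^2$ keeps the shape $p'^2+q'^2$ (up to an overall constant), the 1-forms $\dd\eta\mp p^2\dd\sigma$ and $\dd\eta+q^2\dd\sigma$ transform into multiples of their primed analogues, and the resulting metric is still of the conformal-to-Carter type \eqref{off-shell-PD} with some new functions $P'(p')$ and $Q'(q')$.

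The first three conditions are immediate: $\Omega = \frac{\alpha\beta\, p'q'}{a+c\alpha\beta\, p'q'}$ collapses to $-\frac{1}{c}\,\frac{p'q'}{1-p'q'}$ precisely when $\alpha\beta = -a/c$; $\rho^2 = \alpha^2 p'^2+\beta^2 q'^2$ has the right shape when $\alpha^2=\beta^2$; and the off-diagonal 1-forms behave uniformly provided $A = \alpha^2 B$. Depending on the sign of $a/c$ these are realized either by $\alpha=-\beta$ (for $a/c>0$) or by $\alpha=\beta$ (for $a/c<0$), with $\alpha^2=|a/c|$, exactly as in Lemma~1. The overall factor $-1/c$ in $\Omega$ is immaterial because only $\Omega^2$ enters the metric, and the resulting $c^2$ prefactor will be absorbed into the rescaled $P',Q'$.

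With these substitutions the metric $g = \Omega^{-2}g^{\C} = (c^2/\Omega'^2)\,g^{\C}$ decomposes block by block, each block picking up a constant prefactor. Reading off $P'(p')$ from both the $\dd p'^2$ block and from the $(\dd\eta'+q'^2\dd\sigma')^2$ block, and likewise $Q'(q')$ from the corresponding $q$-blocks, yields a single scalar consistency condition, which I expect to take the form $A^2 = 1/(c^4\alpha^2)$. With this value (and $B = A/\alpha^2$), the new functions
\begin{equation*}
P'(p') = \tfrac{1}{c^2\alpha^4}\,P(\alpha p'),\qquad Q'(q') = \tfrac{1}{c^2\alpha^4}\,Q(\beta q')
\end{equation*}
turn the metric into a conformal-to-Carter spacetime in the primed variables with $\Omega'=p'q'/(1-p'q')$, as required.

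The proof is essentially a bookkeeping exercise with four unknowns against four algebraic conditions, so the system is just-determined; the only place where it could conceivably fail is the simultaneous compatibility of the two ways of reading off $P'$ (and $Q'$) from the different Carter blocks, and this is exactly what the final condition on $A^2$ takes care of. Sign subtleties in the case $ac<0$ are handled by one extra coordinate sign flip, just as in Lemma~1.
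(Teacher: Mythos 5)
Your proposal is correct and follows essentially the same route as the paper's proof of Lemma~3: a linear rescaling of all four coordinates (with $p\to\alpha p'$, $q\to\beta q'$, $\alpha=-\beta$, $\alpha^2=a/c$ up to the sign of $a/c$) together with constant redefinitions of $P$ and $Q$, and your values $A^2=1/(ac^3)$, $B=A/\alpha^2$ and $P',Q'\propto a^{-2}P,a^{-2}Q$ agree with the transformation written out explicitly in Appendix~B. The only difference is presentational — you derive the constants as the solution of a just-determined system rather than exhibiting them directly — so nothing further is needed.
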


\begin{proof}
    To show this, let us start by introducing the new coordinates $q'$ and $p'$ as
    \begin{align}
        q=\sqrt{\dfrac{a}{c}}\,q'\,,\qquad
        p=-\sqrt{\dfrac{a}{c}}\,p',
    \end{align}
    Here, again, we have assumed W.L.O.G. that $a,c> 0$, see above. 
    Then
    \begin{align}
        \Omega=-\dfrac{1}{c}\,\Omega'\,,\quad
        \Omega'=\dfrac{p\,q}{1-p\,q}\,,\quad
        q^2+p^2=\dfrac{a}{c}(q'^2+p'^2)\,.
    \end{align}
    If we define ${Q'=a^2Q}$, the first term $\dd q^2$ becomes
    \begin{align}
        \dfrac{1}{\Omega'^2}\,\dfrac{q'^2+p'^2}{Q'}\,\dd q'^2\,.
    \end{align}
    Similarly, the term $\dd p^2$, upon setting ${P'=a^2 P}$,   becomes 
    \begin{align}
        \dfrac{1}{\Omega'^2}\,\dfrac{q'^2+p'^2}{P'}\,\dd p'^2\,.
    \end{align}
    Considering next the term $\dd\eta-p^2 \dd\sigma$, and rescaling the coordinates $\eta$ and $\sigma$ as follows: 
    \begin{align}
        \eta=\dfrac{\eta'}{\sqrt{a c^3}}\,,~~~\sigma=\dfrac{\sigma'}{\sqrt{a^3 c}}\,,
    \end{align}
    we arrive at 
    \begin{align}
        -\dfrac{1}{\Omega'^2}\,\dfrac{Q'}{q'^2+p'^2}\,(\dd\eta'-p'^2 \dd\sigma')^2\,.
    \end{align}
    Moreover, the last term becomes
    \begin{align}
        \dfrac{1}{\Omega'^2}\,\dfrac{P'}{q'^2+p'^2}\,(\dd\eta'+q'^2 \dd\sigma')^2\,.
    \end{align}
    Thus, the metric indeed takes the form \eqref{PD} with 
    \begin{align}
        \Omega'=\dfrac{p'\,q'}{1-p'\,q'}\,,
    \end{align}
    as we wished to show. 
    \end{proof}

\bibliography{ref.bib}

\begin{thebibliography}{69}%
\makeatletter
\providecommand \@ifxundefined [1]{%
 \@ifx{#1\undefined}
}%
\providecommand \@ifnum [1]{%
 \ifnum #1\expandafter \@firstoftwo
 \else \expandafter \@secondoftwo
 \fi
}%
\providecommand \@ifx [1]{%
 \ifx #1\expandafter \@firstoftwo
 \else \expandafter \@secondoftwo
 \fi
}%
\providecommand \natexlab [1]{#1}%
\providecommand \enquote  [1]{``#1''}%
\providecommand \bibnamefont  [1]{#1}%
\providecommand \bibfnamefont [1]{#1}%
\providecommand \citenamefont [1]{#1}%
\providecommand \href@noop [0]{\@secondoftwo}%
\providecommand \href [0]{\begingroup \@sanitize@url \@href}%
\providecommand \@href[1]{\@@startlink{#1}\@@href}%
\providecommand \@@href[1]{\endgroup#1\@@endlink}%
\providecommand \@sanitize@url [0]{\catcode `\\12\catcode `\$12\catcode
  `\&12\catcode `\#12\catcode `\^12\catcode `\_12\catcode `\%12\relax}%
\providecommand \@@startlink[1]{}%
\providecommand \@@endlink[0]{}%
\providecommand \url  [0]{\begingroup\@sanitize@url \@url }%
\providecommand \@url [1]{\endgroup\@href {#1}{\urlprefix }}%
\providecommand \urlprefix  [0]{URL }%
\providecommand \Eprint [0]{\href }%
\providecommand \doibase [0]{http://dx.doi.org/}%
\providecommand \selectlanguage [0]{\@gobble}%
\providecommand \bibinfo  [0]{\@secondoftwo}%
\providecommand \bibfield  [0]{\@secondoftwo}%
\providecommand \translation [1]{[#1]}%
\providecommand \BibitemOpen [0]{}%
\providecommand \bibitemStop [0]{}%
\providecommand \bibitemNoStop [0]{.\EOS\space}%
\providecommand \EOS [0]{\spacefactor3000\relax}%
\providecommand \BibitemShut  [1]{\csname bibitem#1\endcsname}%
\let\auto@bib@innerbib\@empty
\bibitem [{\citenamefont {Carter}(1968{\natexlab{a}})}]{Carter:1968rr}%
  \BibitemOpen
  \bibfield  {author} {\bibinfo {author} {\bibfnamefont {B.}~\bibnamefont
  {Carter}},\ }\href {\doibase 10.1103/PhysRev.174.1559} {\bibfield  {journal}
  {\bibinfo  {journal} {Phys. Rev.}\ }\textbf {\bibinfo {volume} {174}},\
  \bibinfo {pages} {1559} (\bibinfo {year} {1968}{\natexlab{a}})}\BibitemShut
  {NoStop}%
\bibitem [{\citenamefont {Kerr}(1963)}]{Kerr:1963ud}%
  \BibitemOpen
  \bibfield  {author} {\bibinfo {author} {\bibfnamefont {R.~P.}\ \bibnamefont
  {Kerr}},\ }\href {\doibase 10.1103/PhysRevLett.11.237} {\bibfield  {journal}
  {\bibinfo  {journal} {Phys. Rev. Lett.}\ }\textbf {\bibinfo {volume} {11}},\
  \bibinfo {pages} {237} (\bibinfo {year} {1963})}\BibitemShut {NoStop}%
\bibitem [{\citenamefont {Carter}(1968{\natexlab{b}})}]{carter1968new}%
  \BibitemOpen
  \bibfield  {author} {\bibinfo {author} {\bibfnamefont {B.}~\bibnamefont
  {Carter}},\ }\href@noop {} {\bibfield  {journal} {\bibinfo  {journal}
  {Physics Letters A}\ }\textbf {\bibinfo {volume} {26}},\ \bibinfo {pages}
  {399} (\bibinfo {year} {1968}{\natexlab{b}})}\BibitemShut {NoStop}%
\bibitem [{\citenamefont {Carter}(1968{\natexlab{c}})}]{Carter:1968ks}%
  \BibitemOpen
  \bibfield  {author} {\bibinfo {author} {\bibfnamefont {B.}~\bibnamefont
  {Carter}},\ }\href {\doibase 10.1007/BF03399503} {\bibfield  {journal}
  {\bibinfo  {journal} {Commun. Math. Phys.}\ }\textbf {\bibinfo {volume}
  {10}},\ \bibinfo {pages} {280} (\bibinfo {year}
  {1968}{\natexlab{c}})}\BibitemShut {NoStop}%
\bibitem [{\citenamefont {Chen}\ \emph {et~al.}(2006)\citenamefont {Chen},
  \citenamefont {Lu},\ and\ \citenamefont {Pope}}]{Chen:2006xh}%
  \BibitemOpen
  \bibfield  {author} {\bibinfo {author} {\bibfnamefont {W.}~\bibnamefont
  {Chen}}, \bibinfo {author} {\bibfnamefont {H.}~\bibnamefont {Lu}}, \ and\
  \bibinfo {author} {\bibfnamefont {C.~N.}\ \bibnamefont {Pope}},\ }\href
  {\doibase 10.1088/0264-9381/23/17/013} {\bibfield  {journal} {\bibinfo
  {journal} {Class. Quant. Grav.}\ }\textbf {\bibinfo {volume} {23}},\ \bibinfo
  {pages} {5323} (\bibinfo {year} {2006})},\ \Eprint
  {http://arxiv.org/abs/hep-th/0604125} {arXiv:hep-th/0604125} \BibitemShut
  {NoStop}%
\bibitem [{\citenamefont {Krtou\v{s}}\ \emph {et~al.}(2008)\citenamefont
  {Krtou\v{s}}, \citenamefont {Frolov},\ and\ \citenamefont {Kubiz{\v
  n}{\'a}k}}]{Krtous:2008tb}%
  \BibitemOpen
  \bibfield  {author} {\bibinfo {author} {\bibfnamefont {P.}~\bibnamefont
  {Krtou\v{s}}}, \bibinfo {author} {\bibfnamefont {V.~P.}\ \bibnamefont
  {Frolov}}, \ and\ \bibinfo {author} {\bibfnamefont {D.}~\bibnamefont
  {Kubiz{\v n}{\'a}k}},\ }\href {\doibase 10.1103/PhysRevD.78.064022}
  {\bibfield  {journal} {\bibinfo  {journal} {Phys. Rev. D}\ }\textbf {\bibinfo
  {volume} {78}},\ \bibinfo {pages} {064022} (\bibinfo {year} {2008})},\
  \Eprint {http://arxiv.org/abs/0804.4705} {arXiv:0804.4705 [hep-th]}
  \BibitemShut {NoStop}%
\bibitem [{\citenamefont {Frolov}\ \emph
  {et~al.}(2017{\natexlab{a}})\citenamefont {Frolov}, \citenamefont
  {Krtou\v{s}},\ and\ \citenamefont {Kubiz{\v n}{\'a}k}}]{Frolov:2017kze}%
  \BibitemOpen
  \bibfield  {author} {\bibinfo {author} {\bibfnamefont {V.~P.}\ \bibnamefont
  {Frolov}}, \bibinfo {author} {\bibfnamefont {P.}~\bibnamefont {Krtou\v{s}}},
  \ and\ \bibinfo {author} {\bibfnamefont {D.}~\bibnamefont {Kubiz{\v
  n}{\'a}k}},\ }\href {\doibase 10.1007/s41114-017-0009-9} {\bibfield
  {journal} {\bibinfo  {journal} {Living Rev. Rel.}\ }\textbf {\bibinfo
  {volume} {20}},\ \bibinfo {pages} {6} (\bibinfo {year}
  {2017}{\natexlab{a}})},\ \Eprint {http://arxiv.org/abs/1705.05482}
  {arXiv:1705.05482 [gr-qc]} \BibitemShut {NoStop}%
\bibitem [{\citenamefont {Pleba\'nski}\ and\ \citenamefont
  {Demia\'nski}(1976)}]{Plebanski:1976gy}%
  \BibitemOpen
  \bibfield  {author} {\bibinfo {author} {\bibfnamefont {J.~F.}\ \bibnamefont
  {Pleba\'nski}}\ and\ \bibinfo {author} {\bibfnamefont {M.}~\bibnamefont
  {Demia\'nski}},\ }\href {\doibase 10.1016/0003-4916(76)90240-2} {\bibfield
  {journal} {\bibinfo  {journal} {Annals Phys.}\ }\textbf {\bibinfo {volume}
  {98}},\ \bibinfo {pages} {98} (\bibinfo {year} {1976})}\BibitemShut {NoStop}%
\bibitem [{\citenamefont {Kubiz{\v n}{\'a}k}\ and\ \citenamefont
  {Krtou\v{s}}(2007)}]{Kubiznak:2007kh}%
  \BibitemOpen
  \bibfield  {author} {\bibinfo {author} {\bibfnamefont {D.}~\bibnamefont
  {Kubiz{\v n}{\'a}k}}\ and\ \bibinfo {author} {\bibfnamefont {P.}~\bibnamefont
  {Krtou\v{s}}},\ }\href {\doibase 10.1103/PhysRevD.76.084036} {\bibfield
  {journal} {\bibinfo  {journal} {Phys. Rev. D}\ }\textbf {\bibinfo {volume}
  {76}},\ \bibinfo {pages} {084036} (\bibinfo {year} {2007})},\ \Eprint
  {http://arxiv.org/abs/0707.0409} {arXiv:0707.0409 [gr-qc]} \BibitemShut
  {NoStop}%
\bibitem [{\citenamefont {Ovcharenko}\ and\ \citenamefont
  {Podolsk{\'y}}(2025)}]{Ovcharenko:2025cpm}%
  \BibitemOpen
  \bibfield  {author} {\bibinfo {author} {\bibfnamefont {H.}~\bibnamefont
  {Ovcharenko}}\ and\ \bibinfo {author} {\bibfnamefont {J.}~\bibnamefont
  {Podolsk{\'y}}},\ }\href@noop {} {\bibfield  {journal} {\bibinfo  {journal}
  {Phys. Rev. D}\ }\textbf {\bibinfo {volume} {112}},\ \bibinfo {pages}
  {064076} (\bibinfo {year} {2025})},\ \Eprint
  {http://arxiv.org/abs/2508.04850} {arXiv:2508.04850 [gr-qc]} \BibitemShut
  {NoStop}%
\bibitem [{\citenamefont {Podolsk\'y}\ and\ \citenamefont
  {Ovcharenko}(2025)}]{Podolsky:2025tle}%
  \BibitemOpen
  \bibfield  {author} {\bibinfo {author} {\bibfnamefont {J.}~\bibnamefont
  {Podolsk\'y}}\ and\ \bibinfo {author} {\bibfnamefont {H.}~\bibnamefont
  {Ovcharenko}},\ }\href@noop {} {\bibfield  {journal} {\bibinfo  {journal}
  {Phys. Rev. Lett.}\ }\textbf {\bibinfo {volume} {135}},\ \bibinfo {pages}
  {181401} (\bibinfo {year} {2025})},\ \Eprint
  {http://arxiv.org/abs/2507.05199} {arXiv:2507.05199 [gr-qc]} \BibitemShut
  {NoStop}%
\bibitem [{\citenamefont {Zeng}\ and\ \citenamefont
  {Wang}(2025)}]{Zeng:2025olq}%
  \BibitemOpen
  \bibfield  {author} {\bibinfo {author} {\bibfnamefont {X.-X.}\ \bibnamefont
  {Zeng}}\ and\ \bibinfo {author} {\bibfnamefont {K.}~\bibnamefont {Wang}},\
  }\href {\doibase 10.1103/vc96-snjm} {\bibfield  {journal} {\bibinfo
  {journal} {Phys. Rev. D}\ }\textbf {\bibinfo {volume} {112}},\ \bibinfo
  {pages} {064032} (\bibinfo {year} {2025})},\ \Eprint
  {http://arxiv.org/abs/2507.21777} {arXiv:2507.21777 [gr-qc]} \BibitemShut
  {NoStop}%
\bibitem [{\citenamefont {Wang}\ \emph {et~al.}(2025)\citenamefont {Wang},
  \citenamefont {Hou}, \citenamefont {Wan}, \citenamefont {Guo},\ and\
  \citenamefont {Chen}}]{Wang:2025vsx}%
  \BibitemOpen
  \bibfield  {author} {\bibinfo {author} {\bibfnamefont {X.}~\bibnamefont
  {Wang}}, \bibinfo {author} {\bibfnamefont {Y.}~\bibnamefont {Hou}}, \bibinfo
  {author} {\bibfnamefont {X.}~\bibnamefont {Wan}}, \bibinfo {author}
  {\bibfnamefont {M.}~\bibnamefont {Guo}}, \ and\ \bibinfo {author}
  {\bibfnamefont {B.}~\bibnamefont {Chen}},\ }\href@noop {} {\  (\bibinfo
  {year} {2025})},\ \Eprint {http://arxiv.org/abs/2507.22494} {arXiv:2507.22494
  [gr-qc]} \BibitemShut {NoStop}%
\bibitem [{\citenamefont {Zeng}\ \emph {et~al.}(2025)\citenamefont {Zeng},
  \citenamefont {Yang},\ and\ \citenamefont {Yu}}]{Zeng:2025tji}%
  \BibitemOpen
  \bibfield  {author} {\bibinfo {author} {\bibfnamefont {X.-X.}\ \bibnamefont
  {Zeng}}, \bibinfo {author} {\bibfnamefont {C.-Y.}\ \bibnamefont {Yang}}, \
  and\ \bibinfo {author} {\bibfnamefont {H.}~\bibnamefont {Yu}},\ }\href
  {\doibase 10.1140/epjc/s10052-025-14989-y} {\bibfield  {journal} {\bibinfo
  {journal} {Eur. Phys. J. C}\ }\textbf {\bibinfo {volume} {85}},\ \bibinfo
  {pages} {1242} (\bibinfo {year} {2025})},\ \Eprint
  {http://arxiv.org/abs/2508.03020} {arXiv:2508.03020 [gr-qc]} \BibitemShut
  {NoStop}%
\bibitem [{\citenamefont {Wang}(2025)}]{Wang:2025bjf}%
  \BibitemOpen
  \bibfield  {author} {\bibinfo {author} {\bibfnamefont {T.}~\bibnamefont
  {Wang}},\ }\href@noop {} {\  (\bibinfo {year} {2025})},\ \Eprint
  {http://arxiv.org/abs/2508.04684} {arXiv:2508.04684 [gr-qc]} \BibitemShut
  {NoStop}%
\bibitem [{\citenamefont {Astorino}(2025)}]{Astorino:2025lih}%
  \BibitemOpen
  \bibfield  {author} {\bibinfo {author} {\bibfnamefont {M.}~\bibnamefont
  {Astorino}},\ }\href {\doibase 10.1103/c5lw-53yd} {\bibfield  {journal}
  {\bibinfo  {journal} {Phys. Rev. D}\ }\textbf {\bibinfo {volume} {112}},\
  \bibinfo {pages} {104077} (\bibinfo {year} {2025})},\ \Eprint
  {http://arxiv.org/abs/2508.12908} {arXiv:2508.12908 [gr-qc]} \BibitemShut
  {NoStop}%
\bibitem [{\citenamefont {Ali}\ and\ \citenamefont
  {Ghosh}(2025)}]{Ali:2025beh}%
  \BibitemOpen
  \bibfield  {author} {\bibinfo {author} {\bibfnamefont {H.}~\bibnamefont
  {Ali}}\ and\ \bibinfo {author} {\bibfnamefont {S.~G.}\ \bibnamefont
  {Ghosh}},\ }\href@noop {} {\  (\bibinfo {year} {2025})},\ \Eprint
  {http://arxiv.org/abs/2508.15862} {arXiv:2508.15862 [gr-qc]} \BibitemShut
  {NoStop}%
\bibitem [{\citenamefont {Vachher}\ \emph {et~al.}(2025)\citenamefont
  {Vachher}, \citenamefont {Kumar},\ and\ \citenamefont
  {Ghosh}}]{Vachher:2025jsq}%
  \BibitemOpen
  \bibfield  {author} {\bibinfo {author} {\bibfnamefont {A.}~\bibnamefont
  {Vachher}}, \bibinfo {author} {\bibfnamefont {A.}~\bibnamefont {Kumar}}, \
  and\ \bibinfo {author} {\bibfnamefont {S.~G.}\ \bibnamefont {Ghosh}},\ }\href
  {\doibase 10.1088/1475-7516/2025/11/021} {\bibfield  {journal} {\bibinfo
  {journal} {JCAP}\ }\textbf {\bibinfo {volume} {11}},\ \bibinfo {pages} {021}
  (\bibinfo {year} {2025})},\ \Eprint {http://arxiv.org/abs/2508.21100}
  {arXiv:2508.21100 [gr-qc]} \BibitemShut {NoStop}%
\bibitem [{\citenamefont {Zhang}\ and\ \citenamefont
  {Wei}(2025)}]{Zhang:2025ole}%
  \BibitemOpen
  \bibfield  {author} {\bibinfo {author} {\bibfnamefont {Y.-K.}\ \bibnamefont
  {Zhang}}\ and\ \bibinfo {author} {\bibfnamefont {S.-W.}\ \bibnamefont
  {Wei}},\ }\href@noop {} {\  (\bibinfo {year} {2025})},\ \Eprint
  {http://arxiv.org/abs/2510.07914} {arXiv:2510.07914 [gr-qc]} \BibitemShut
  {NoStop}%
\bibitem [{\citenamefont {Rueda}\ \emph {et~al.}(2026)\citenamefont {Rueda},
  \citenamefont {Ruffini},\ and\ \citenamefont {Wang}}]{Rueda:2025lgq}%
  \BibitemOpen
  \bibfield  {author} {\bibinfo {author} {\bibfnamefont {J.~A.}\ \bibnamefont
  {Rueda}}, \bibinfo {author} {\bibfnamefont {R.}~\bibnamefont {Ruffini}}, \
  and\ \bibinfo {author} {\bibfnamefont {Y.}~\bibnamefont {Wang}},\ }\href
  {\doibase 10.1016/j.jheap.2025.100464} {\bibfield  {journal} {\bibinfo
  {journal} {JHEAp}\ }\textbf {\bibinfo {volume} {50}},\ \bibinfo {pages}
  {100464} (\bibinfo {year} {2026})},\ \Eprint
  {http://arxiv.org/abs/2509.08172} {arXiv:2509.08172 [astro-ph.HE]}
  \BibitemShut {NoStop}%
\bibitem [{\citenamefont {Liu}\ \emph {et~al.}(2025)\citenamefont {Liu},
  \citenamefont {Liu}, \citenamefont {Wu},\ and\ \citenamefont
  {Liu}}]{Liu:2025wwq}%
  \BibitemOpen
  \bibfield  {author} {\bibinfo {author} {\bibfnamefont {W.}~\bibnamefont
  {Liu}}, \bibinfo {author} {\bibfnamefont {Y.}~\bibnamefont {Liu}}, \bibinfo
  {author} {\bibfnamefont {D.}~\bibnamefont {Wu}}, \ and\ \bibinfo {author}
  {\bibfnamefont {Y.-X.}\ \bibnamefont {Liu}},\ }\href@noop {} {\  (\bibinfo
  {year} {2025})},\ \Eprint {http://arxiv.org/abs/2511.06017} {arXiv:2511.06017
  [gr-qc]} \BibitemShut {NoStop}%
\bibitem [{\citenamefont {Ahmed}\ \emph {et~al.}(2025)\citenamefont {Ahmed},
  \citenamefont {Sakall{\i}},\ and\ \citenamefont {Al-Badawi}}]{Ahmed:2025ril}%
  \BibitemOpen
  \bibfield  {author} {\bibinfo {author} {\bibfnamefont {F.}~\bibnamefont
  {Ahmed}}, \bibinfo {author} {\bibfnamefont {{\.I}.}~\bibnamefont
  {Sakall{\i}}}, \ and\ \bibinfo {author} {\bibfnamefont {A.}~\bibnamefont
  {Al-Badawi}},\ }\href@noop {} {\  (\bibinfo {year} {2025})},\ \Eprint
  {http://arxiv.org/abs/2511.11792} {arXiv:2511.11792 [gr-qc]} \BibitemShut
  {NoStop}%
\bibitem [{\citenamefont {Ortaggio}(2025)}]{Ortaggio:2025sip}%
  \BibitemOpen
  \bibfield  {author} {\bibinfo {author} {\bibfnamefont {M.}~\bibnamefont
  {Ortaggio}},\ }\href@noop {} {\  (\bibinfo {year} {2025})},\ \Eprint
  {http://arxiv.org/abs/2511.13665} {arXiv:2511.13665 [gr-qc]} \BibitemShut
  {NoStop}%
\bibitem [{\citenamefont {Hull}\ \emph {et~al.}(2025)\citenamefont {Hull},
  \citenamefont {Lindstr{\"o}m},\ and\ \citenamefont {Vel{\'a}squez
  Cotini~Hutt}}]{Hull:2025ivk}%
  \BibitemOpen
  \bibfield  {author} {\bibinfo {author} {\bibfnamefont {C.}~\bibnamefont
  {Hull}}, \bibinfo {author} {\bibfnamefont {U.}~\bibnamefont {Lindstr{\"o}m}},
  \ and\ \bibinfo {author} {\bibfnamefont {M.~L.}\ \bibnamefont {Vel{\'a}squez
  Cotini~Hutt}},\ }\href@noop {} {\  (\bibinfo {year} {2025})},\ \Eprint
  {http://arxiv.org/abs/2508.16723} {arXiv:2508.16723 [hep-th]} \BibitemShut
  {NoStop}%
\bibitem [{\citenamefont {Kashiwada}(1968)}]{Kashiwada}%
  \BibitemOpen
  \bibfield  {author} {\bibinfo {author} {\bibfnamefont {T.}~\bibnamefont
  {Kashiwada}},\ }\href@noop {} {\bibfield  {journal} {\bibinfo  {journal}
  {Natural Science Report}\ }\textbf {\bibinfo {volume} {19}},\ \bibinfo
  {pages} {68} (\bibinfo {year} {1968})}\BibitemShut {NoStop}%
\bibitem [{\citenamefont {Tachibana}(1969)}]{Tachibana}%
  \BibitemOpen
  \bibfield  {author} {\bibinfo {author} {\bibfnamefont {S.-i.}\ \bibnamefont
  {Tachibana}},\ }\href@noop {} {\bibfield  {journal} {\bibinfo  {journal}
  {Tohoku Mathematical Journal, Second Series}\ }\textbf {\bibinfo {volume}
  {21}},\ \bibinfo {pages} {56} (\bibinfo {year} {1969})}\BibitemShut {NoStop}%
\bibitem [{\citenamefont {Yano}(1952)}]{Yano}%
  \BibitemOpen
  \bibfield  {author} {\bibinfo {author} {\bibfnamefont {K.}~\bibnamefont
  {Yano}},\ }\href@noop {} {\bibfield  {journal} {\bibinfo  {journal} {Annals
  of Mathematics}\ }\textbf {\bibinfo {volume} {55}},\ \bibinfo {pages} {328}
  (\bibinfo {year} {1952})}\BibitemShut {NoStop}%
\bibitem [{\citenamefont {Jezierski}\ and\ \citenamefont
  {{\L}ukasik}(2006)}]{Jezierski_2006}%
  \BibitemOpen
  \bibfield  {author} {\bibinfo {author} {\bibfnamefont {J.}~\bibnamefont
  {Jezierski}}\ and\ \bibinfo {author} {\bibfnamefont {M.}~\bibnamefont
  {{\L}ukasik}},\ }\href {\doibase 10.1088/0264-9381/23/9/008} {\bibfield
  {journal} {\bibinfo  {journal} {Classical and Quantum Gravity}\ }\textbf
  {\bibinfo {volume} {23}},\ \bibinfo {pages} {2895} (\bibinfo {year}
  {2006})}\BibitemShut {NoStop}%
\bibitem [{\citenamefont {Lindstr{\"o}m}\ and\ \citenamefont
  {Sar{\i}o{\u{g}}lu}(2022)}]{Lindstrom:2021dpm}%
  \BibitemOpen
  \bibfield  {author} {\bibinfo {author} {\bibfnamefont {U.}~\bibnamefont
  {Lindstr{\"o}m}}\ and\ \bibinfo {author} {\bibfnamefont {{\"O}.}~\bibnamefont
  {Sar{\i}o{\u{g}}lu}},\ }\href {\doibase 10.1007/JHEP03(2022)029} {\bibfield
  {journal} {\bibinfo  {journal} {JHEP}\ }\textbf {\bibinfo {volume} {03}},\
  \bibinfo {pages} {029} (\bibinfo {year} {2022})},\ \Eprint
  {http://arxiv.org/abs/2110.03470} {arXiv:2110.03470 [hep-th]} \BibitemShut
  {NoStop}%
\bibitem [{\citenamefont {Lindstr{\"o}m}\ and\ \citenamefont
  {Sar{\i}o{\u{g}}lu}(2023)}]{Lindstrom:2022qjx}%
  \BibitemOpen
  \bibfield  {author} {\bibinfo {author} {\bibfnamefont {U.}~\bibnamefont
  {Lindstr{\"o}m}}\ and\ \bibinfo {author} {\bibfnamefont {{\"O}.}~\bibnamefont
  {Sar{\i}o{\u{g}}lu}},\ }\href {\doibase 10.1007/JHEP05(2023)176} {\bibfield
  {journal} {\bibinfo  {journal} {JHEP}\ }\textbf {\bibinfo {volume} {05}},\
  \bibinfo {pages} {176} (\bibinfo {year} {2023})},\ \Eprint
  {http://arxiv.org/abs/2206.08037} {arXiv:2206.08037 [gr-qc]} \BibitemShut
  {NoStop}%
\bibitem [{\citenamefont {Krtou\v{s}}\ \emph {et~al.}(2007)\citenamefont
  {Krtou\v{s}}, \citenamefont {Kubiz{\v n}{\'a}k}, \citenamefont {Page},\ and\
  \citenamefont {Frolov}}]{Krtous:2006qy}%
  \BibitemOpen
  \bibfield  {author} {\bibinfo {author} {\bibfnamefont {P.}~\bibnamefont
  {Krtou\v{s}}}, \bibinfo {author} {\bibfnamefont {D.}~\bibnamefont {Kubiz{\v
  n}{\'a}k}}, \bibinfo {author} {\bibfnamefont {D.~N.}\ \bibnamefont {Page}}, \
  and\ \bibinfo {author} {\bibfnamefont {V.~P.}\ \bibnamefont {Frolov}},\
  }\href {\doibase 10.1088/1126-6708/2007/02/004} {\bibfield  {journal}
  {\bibinfo  {journal} {JHEP}\ }\textbf {\bibinfo {volume} {02}},\ \bibinfo
  {pages} {004} (\bibinfo {year} {2007})},\ \Eprint
  {http://arxiv.org/abs/hep-th/0612029} {arXiv:hep-th/0612029} \BibitemShut
  {NoStop}%
\bibitem [{\citenamefont {Wald}(1984)}]{Wald1984general}%
  \BibitemOpen
  \bibfield  {author} {\bibinfo {author} {\bibfnamefont {R.~M.}\ \bibnamefont
  {Wald}},\ }\href@noop {} {\emph {\bibinfo {title} {{General Relativity}}}}\
  (\bibinfo  {publisher} {Chicago Univ. Pr.},\ \bibinfo {address} {Chicago,
  USA},\ \bibinfo {year} {1984})\BibitemShut {NoStop}%
\bibitem [{\citenamefont {Frolov}\ \emph
  {et~al.}(2017{\natexlab{b}})\citenamefont {Frolov}, \citenamefont
  {Krtou\v{s}},\ and\ \citenamefont {Kubiz{\v n}{\'a}k}}]{Frolov:2017bdq}%
  \BibitemOpen
  \bibfield  {author} {\bibinfo {author} {\bibfnamefont {V.~P.}\ \bibnamefont
  {Frolov}}, \bibinfo {author} {\bibfnamefont {P.}~\bibnamefont {Krtou\v{s}}},
  \ and\ \bibinfo {author} {\bibfnamefont {D.}~\bibnamefont {Kubiz{\v
  n}{\'a}k}},\ }\href {\doibase 10.1016/j.physletb.2017.05.041} {\bibfield
  {journal} {\bibinfo  {journal} {Phys. Lett. B}\ }\textbf {\bibinfo {volume}
  {771}},\ \bibinfo {pages} {254} (\bibinfo {year} {2017}{\natexlab{b}})},\
  \Eprint {http://arxiv.org/abs/1705.00943} {arXiv:1705.00943 [gr-qc]}
  \BibitemShut {NoStop}%
\bibitem [{\citenamefont {Hamilton}\ and\ \citenamefont
  {McMaken}(2022)}]{Hamilton:2022sgf}%
  \BibitemOpen
  \bibfield  {author} {\bibinfo {author} {\bibfnamefont {A.~J.~S.}\
  \bibnamefont {Hamilton}}\ and\ \bibinfo {author} {\bibfnamefont
  {T.}~\bibnamefont {McMaken}},\ }\href {\doibase 10.1103/PhysRevD.106.124031}
  {\bibfield  {journal} {\bibinfo  {journal} {Phys. Rev. D}\ }\textbf {\bibinfo
  {volume} {106}},\ \bibinfo {pages} {124031} (\bibinfo {year} {2022})},\
  \Eprint {http://arxiv.org/abs/2209.14752} {arXiv:2209.14752 [gr-qc]}
  \BibitemShut {NoStop}%
\bibitem [{\citenamefont {Acevedo}\ and\ \citenamefont
  {Breton}(2025)}]{Acevedo:2025zrq}%
  \BibitemOpen
  \bibfield  {author} {\bibinfo {author} {\bibfnamefont {A.~S.}\ \bibnamefont
  {Acevedo}}\ and\ \bibinfo {author} {\bibfnamefont {N.}~\bibnamefont
  {Breton}},\ }\href@noop {} {\  (\bibinfo {year} {2025})},\ \Eprint
  {http://arxiv.org/abs/2506.17856} {arXiv:2506.17856 [gr-qc]} \BibitemShut
  {NoStop}%
\bibitem [{\citenamefont {Gray}\ \emph {et~al.}(2021)\citenamefont {Gray},
  \citenamefont {Houri}, \citenamefont {Kubiz{\v n}{\'a}k},\ and\ \citenamefont
  {Yasui}}]{Gray:2021wzf}%
  \BibitemOpen
  \bibfield  {author} {\bibinfo {author} {\bibfnamefont {F.}~\bibnamefont
  {Gray}}, \bibinfo {author} {\bibfnamefont {T.}~\bibnamefont {Houri}},
  \bibinfo {author} {\bibfnamefont {D.}~\bibnamefont {Kubiz{\v n}{\'a}k}}, \
  and\ \bibinfo {author} {\bibfnamefont {Y.}~\bibnamefont {Yasui}},\ }\href
  {\doibase 10.1103/PhysRevD.104.084042} {\bibfield  {journal} {\bibinfo
  {journal} {Phys. Rev. D}\ }\textbf {\bibinfo {volume} {104}},\ \bibinfo
  {pages} {084042} (\bibinfo {year} {2021})},\ \Eprint
  {http://arxiv.org/abs/2101.06700} {arXiv:2101.06700 [gr-qc]} \BibitemShut
  {NoStop}%
\bibitem [{\citenamefont {Gray}\ \emph {et~al.}()\citenamefont {Gray},
  \citenamefont {Oncea},\ and\ \citenamefont {Anderson}}]{Gray2025}%
  \BibitemOpen
  \bibfield  {author} {\bibinfo {author} {\bibfnamefont {F.}~\bibnamefont
  {Gray}}, \bibinfo {author} {\bibfnamefont {M.~A.}\ \bibnamefont {Oncea}}, \
  and\ \bibinfo {author} {\bibfnamefont {L.}~\bibnamefont {Anderson}},\
  }\href@noop {} {\bibinfo  {journal} {in preparation}\ }\BibitemShut {NoStop}%
\bibitem [{\citenamefont {Mino}(2003)}]{Mino:2003yg}%
  \BibitemOpen
\bibfield  {journal} {  }\bibfield  {author} {\bibinfo {author} {\bibfnamefont
  {Y.}~\bibnamefont {Mino}},\ }\href {\doibase 10.1103/PhysRevD.67.084027}
  {\bibfield  {journal} {\bibinfo  {journal} {Phys. Rev. D}\ }\textbf {\bibinfo
  {volume} {67}},\ \bibinfo {pages} {084027} (\bibinfo {year} {2003})},\
  \Eprint {http://arxiv.org/abs/gr-qc/0302075} {arXiv:gr-qc/0302075}
  \BibitemShut {NoStop}%
\bibitem [{\citenamefont {Perlick}\ and\ \citenamefont
  {Tsupko}(2022)}]{Perlick:2021aok}%
  \BibitemOpen
  \bibfield  {author} {\bibinfo {author} {\bibfnamefont {V.}~\bibnamefont
  {Perlick}}\ and\ \bibinfo {author} {\bibfnamefont {O.~Y.}\ \bibnamefont
  {Tsupko}},\ }\href {\doibase 10.1016/j.physrep.2021.10.004} {\bibfield
  {journal} {\bibinfo  {journal} {Phys. Rept.}\ }\textbf {\bibinfo {volume}
  {947}},\ \bibinfo {pages} {1} (\bibinfo {year} {2022})},\ \Eprint
  {http://arxiv.org/abs/2105.07101} {arXiv:2105.07101 [gr-qc]} \BibitemShut
  {NoStop}%
\bibitem [{\citenamefont {Gray}\ \emph {et~al.}(2020)\citenamefont {Gray},
  \citenamefont {Holst}, \citenamefont {Kubiz{\v n}{\'a}k}, \citenamefont
  {Odak}, \citenamefont {Pirvu},\ and\ \citenamefont {Perche}}]{Gray:2020rtr}%
  \BibitemOpen
  \bibfield  {author} {\bibinfo {author} {\bibfnamefont {F.}~\bibnamefont
  {Gray}}, \bibinfo {author} {\bibfnamefont {I.}~\bibnamefont {Holst}},
  \bibinfo {author} {\bibfnamefont {D.}~\bibnamefont {Kubiz{\v n}{\'a}k}},
  \bibinfo {author} {\bibfnamefont {G.}~\bibnamefont {Odak}}, \bibinfo {author}
  {\bibfnamefont {D.~M.}\ \bibnamefont {Pirvu}}, \ and\ \bibinfo {author}
  {\bibfnamefont {T.~R.}\ \bibnamefont {Perche}},\ }\href {\doibase
  10.1103/PhysRevD.101.084031} {\bibfield  {journal} {\bibinfo  {journal}
  {Phys. Rev. D}\ }\textbf {\bibinfo {volume} {101}},\ \bibinfo {pages}
  {084031} (\bibinfo {year} {2020})},\ \Eprint
  {http://arxiv.org/abs/2002.05221} {arXiv:2002.05221 [hep-th]} \BibitemShut
  {NoStop}%
\bibitem [{\citenamefont {Michel}(2014)}]{Michel_2014}%
  \BibitemOpen
  \bibfield  {author} {\bibinfo {author} {\bibfnamefont {J.-P.}\ \bibnamefont
  {Michel}},\ }\href {\doibase 10.3842/sigma.2014.016} {\bibfield  {journal}
  {\bibinfo  {journal} {Symmetry, Integrability and Geometry: Methods and
  Applications}\ } (\bibinfo {year} {2014}),\
  10.3842/sigma.2014.016}\BibitemShut {NoStop}%
\bibitem [{\citenamefont {Konoplya}\ and\ \citenamefont
  {Zhidenko}(2011)}]{Konoplya2011}%
  \BibitemOpen
  \bibfield  {author} {\bibinfo {author} {\bibfnamefont {R.~A.}\ \bibnamefont
  {Konoplya}}\ and\ \bibinfo {author} {\bibfnamefont {A.}~\bibnamefont
  {Zhidenko}},\ }\href {\doibase 10.1103/revmodphys.83.793} {\bibfield
  {journal} {\bibinfo  {journal} {Reviews of Modern Physics}\ }\textbf
  {\bibinfo {volume} {83}},\ \bibinfo {pages} {793–836} (\bibinfo {year}
  {2011})}\BibitemShut {NoStop}%
\bibitem [{\citenamefont {McLenaghan}\ and\ \citenamefont
  {Spindel}(1979)}]{McLenaghan1979}%
  \BibitemOpen
  \bibfield  {author} {\bibinfo {author} {\bibfnamefont {R.~G.}\ \bibnamefont
  {McLenaghan}}\ and\ \bibinfo {author} {\bibfnamefont {P.}~\bibnamefont
  {Spindel}},\ }\href {\doibase 10.1103/physrevd.20.409} {\bibfield  {journal}
  {\bibinfo  {journal} {Physical Review D}\ }\textbf {\bibinfo {volume} {20}},\
  \bibinfo {pages} {409–413} (\bibinfo {year} {1979})}\BibitemShut {NoStop}%
\bibitem [{\citenamefont {Collas}\ and\ \citenamefont
  {Klein}(2019)}]{Collas2019}%
  \BibitemOpen
  \bibfield  {author} {\bibinfo {author} {\bibfnamefont {P.}~\bibnamefont
  {Collas}}\ and\ \bibinfo {author} {\bibfnamefont {D.}~\bibnamefont {Klein}},\
  }\href {\doibase 10.1007/978-3-030-14825-6} {\emph {\bibinfo {title} {The
  Dirac Equation in Curved Spacetime: A Guide for Calculations}}}\ (\bibinfo
  {publisher} {Springer International Publishing},\ \bibinfo {year}
  {2019})\BibitemShut {NoStop}%
\bibitem [{\citenamefont {Chandrasekhar}(1998)}]{Chandrasekhar1998}%
  \BibitemOpen
  \bibfield  {author} {\bibinfo {author} {\bibfnamefont {S.}~\bibnamefont
  {Chandrasekhar}},\ }\href {\doibase 10.1093/oso/9780198503705.001.0001}
  {\emph {\bibinfo {title} {The Mathematical Theory of Black Holes}}}\
  (\bibinfo  {publisher} {Oxford University Press, Oxford},\ \bibinfo {year}
  {1998})\BibitemShut {NoStop}%
\bibitem [{\citenamefont {Lunin}(2017)}]{Lunin:2017drx}%
  \BibitemOpen
  \bibfield  {author} {\bibinfo {author} {\bibfnamefont {O.}~\bibnamefont
  {Lunin}},\ }\href {\doibase 10.1007/JHEP12(2017)138} {\bibfield  {journal}
  {\bibinfo  {journal} {JHEP}\ }\textbf {\bibinfo {volume} {12}},\ \bibinfo
  {pages} {138} (\bibinfo {year} {2017})},\ \Eprint
  {http://arxiv.org/abs/1708.06766} {arXiv:1708.06766 [hep-th]} \BibitemShut
  {NoStop}%
\bibitem [{\citenamefont {Frolov}\ \emph
  {et~al.}(2018{\natexlab{a}})\citenamefont {Frolov}, \citenamefont
  {Krtou{\v{s}}},\ and\ \citenamefont {Kubiz{\v{n}}{\'a}k}}]{Frolov:2018pys}%
  \BibitemOpen
  \bibfield  {author} {\bibinfo {author} {\bibfnamefont {V.~P.}\ \bibnamefont
  {Frolov}}, \bibinfo {author} {\bibfnamefont {P.}~\bibnamefont
  {Krtou{\v{s}}}}, \ and\ \bibinfo {author} {\bibfnamefont {D.}~\bibnamefont
  {Kubiz{\v{n}}{\'a}k}},\ }\href {\doibase 10.1103/PhysRevD.97.101701}
  {\bibfield  {journal} {\bibinfo  {journal} {Phys. Rev. D}\ }\textbf {\bibinfo
  {volume} {97}},\ \bibinfo {pages} {101701} (\bibinfo {year}
  {2018}{\natexlab{a}})},\ \Eprint {http://arxiv.org/abs/1802.09491}
  {arXiv:1802.09491 [hep-th]} \BibitemShut {NoStop}%
\bibitem [{\citenamefont {{Teukolsky}}(1973)}]{Teukolsky1973}%
  \BibitemOpen
  \bibfield  {author} {\bibinfo {author} {\bibfnamefont {S.~A.}\ \bibnamefont
  {{Teukolsky}}},\ }\href {\doibase 10.1086/152444} {\bibfield  {journal}
  {\bibinfo  {journal} {\apj}\ }\textbf {\bibinfo {volume} {185}},\ \bibinfo
  {pages} {635} (\bibinfo {year} {1973})}\BibitemShut {NoStop}%
\bibitem [{\citenamefont {Torres~del Castillo}(1988)}]{TorresdelCastillo1988}%
  \BibitemOpen
  \bibfield  {author} {\bibinfo {author} {\bibfnamefont {G.~F.}\ \bibnamefont
  {Torres~del Castillo}},\ }\href {\doibase 10.1063/1.527993} {\bibfield
  {journal} {\bibinfo  {journal} {Journal of Mathematical Physics}\ }\textbf
  {\bibinfo {volume} {29}},\ \bibinfo {pages} {971–977} (\bibinfo {year}
  {1988})}\BibitemShut {NoStop}%
\bibitem [{\citenamefont {Arias}\ \emph {et~al.}()\citenamefont {Arias},
  \citenamefont {Kofro{\v n}},\ and\ \citenamefont {Kubiz{\v
  n}{\'a}k}}]{Cynthia}%
  \BibitemOpen
  \bibfield  {author} {\bibinfo {author} {\bibfnamefont {C.}~\bibnamefont
  {Arias}}, \bibinfo {author} {\bibfnamefont {D.}~\bibnamefont {Kofro{\v n}}},
  \ and\ \bibinfo {author} {\bibfnamefont {D.}~\bibnamefont {Kubiz{\v
  n}{\'a}k}},\ }\href@noop {} {\bibinfo  {journal} {in preparation}\
  }\BibitemShut {NoStop}%
\bibitem [{\citenamefont {Gray}\ and\ \citenamefont {Kubiz{\v
  n}{\'a}k}(2024)}]{Gray:2024kad}%
  \BibitemOpen
\bibfield  {journal} {  }\bibfield  {author} {\bibinfo {author} {\bibfnamefont
  {F.}~\bibnamefont {Gray}}\ and\ \bibinfo {author} {\bibfnamefont
  {D.}~\bibnamefont {Kubiz{\v n}{\'a}k}},\ }\href {\doibase
  10.1103/PhysRevD.109.084027} {\bibfield  {journal} {\bibinfo  {journal}
  {Phys. Rev. D}\ }\textbf {\bibinfo {volume} {109}},\ \bibinfo {pages}
  {084027} (\bibinfo {year} {2024})},\ \Eprint
  {http://arxiv.org/abs/2401.03553} {arXiv:2401.03553 [gr-qc]} \BibitemShut
  {NoStop}%
\bibitem [{\citenamefont {Kokkotas}(1993)}]{Kokkotas1993}%
  \BibitemOpen
  \bibfield  {author} {\bibinfo {author} {\bibfnamefont {K.~D.}\ \bibnamefont
  {Kokkotas}},\ }\href {\doibase 10.1007/bf02822861} {\bibfield  {journal}
  {\bibinfo  {journal} {Il Nuovo Cimento B}\ }\textbf {\bibinfo {volume}
  {108}},\ \bibinfo {pages} {991–998} (\bibinfo {year} {1993})}\BibitemShut
  {NoStop}%
\bibitem [{\citenamefont {Berti}\ and\ \citenamefont
  {Kokkotas}(2005)}]{Berti2005}%
  \BibitemOpen
  \bibfield  {author} {\bibinfo {author} {\bibfnamefont {E.}~\bibnamefont
  {Berti}}\ and\ \bibinfo {author} {\bibfnamefont {K.~D.}\ \bibnamefont
  {Kokkotas}},\ }\href {\doibase 10.1103/physrevd.71.124008} {\bibfield
  {journal} {\bibinfo  {journal} {Physical Review D}\ }\textbf {\bibinfo
  {volume} {71}} (\bibinfo {year} {2005}),\
  10.1103/physrevd.71.124008}\BibitemShut {NoStop}%
\bibitem [{\citenamefont {Saha}\ and\ \citenamefont {Silva}(2025)}]{saha2025}%
  \BibitemOpen
  \bibfield  {author} {\bibinfo {author} {\bibfnamefont {S.}~\bibnamefont
  {Saha}}\ and\ \bibinfo {author} {\bibfnamefont {H.~O.}\ \bibnamefont
  {Silva}},\ }\href {https://arxiv.org/abs/2510.05354} {\enquote {\bibinfo
  {title} {Quasinormal modes of {Kerr-Newman} black holes: revisiting the
  {Dudley-Finley} approximation},}\ } (\bibinfo {year} {2025}),\ \Eprint
  {http://arxiv.org/abs/2510.05354} {arXiv:2510.05354 [gr-qc]} \BibitemShut
  {NoStop}%
\bibitem [{\citenamefont {Dias}\ \emph {et~al.}(2022)\citenamefont {Dias},
  \citenamefont {Godazgar},\ and\ \citenamefont {Santos}}]{Dias2022}%
  \BibitemOpen
  \bibfield  {author} {\bibinfo {author} {\bibfnamefont {{\' O}.~J.~C.}\
  \bibnamefont {Dias}}, \bibinfo {author} {\bibfnamefont {M.}~\bibnamefont
  {Godazgar}}, \ and\ \bibinfo {author} {\bibfnamefont {J.~E.}\ \bibnamefont
  {Santos}},\ }\href {\doibase 10.1007/jhep07(2022)076} {\bibfield  {journal}
  {\bibinfo  {journal} {Journal of High Energy Physics}\ }\textbf {\bibinfo
  {volume} {2022}} (\bibinfo {year} {2022}),\
  10.1007/jhep07(2022)076}\BibitemShut {NoStop}%
\bibitem [{\citenamefont {Dias}\ \emph {et~al.}(2015)\citenamefont {Dias},
  \citenamefont {Godazgar},\ and\ \citenamefont {Santos}}]{Dias2015}%
  \BibitemOpen
  \bibfield  {author} {\bibinfo {author} {\bibfnamefont {{\' O}.~J.}\
  \bibnamefont {Dias}}, \bibinfo {author} {\bibfnamefont {M.}~\bibnamefont
  {Godazgar}}, \ and\ \bibinfo {author} {\bibfnamefont {J.~E.}\ \bibnamefont
  {Santos}},\ }\href {\doibase 10.1103/physrevlett.114.151101} {\bibfield
  {journal} {\bibinfo  {journal} {Physical Review Letters}\ }\textbf {\bibinfo
  {volume} {114}} (\bibinfo {year} {2015}),\
  10.1103/physrevlett.114.151101}\BibitemShut {NoStop}%
\bibitem [{\citenamefont {Dudley}\ and\ \citenamefont
  {Finley}(1977)}]{dudley1977}%
  \BibitemOpen
  \bibfield  {author} {\bibinfo {author} {\bibfnamefont {A.~L.}\ \bibnamefont
  {Dudley}}\ and\ \bibinfo {author} {\bibfnamefont {J.~D.}\ \bibnamefont
  {Finley}},\ }\href {\doibase 10.1103/PhysRevLett.38.1505} {\bibfield
  {journal} {\bibinfo  {journal} {Phys. Rev. Lett.}\ }\textbf {\bibinfo
  {volume} {38}},\ \bibinfo {pages} {1505} (\bibinfo {year}
  {1977})}\BibitemShut {NoStop}%
\bibitem [{\citenamefont {Araneda}(2018)}]{Araneda:2018ezs}%
  \BibitemOpen
  \bibfield  {author} {\bibinfo {author} {\bibfnamefont {B.}~\bibnamefont
  {Araneda}},\ }\href {\doibase 10.1088/1361-6382/aad13b} {\bibfield  {journal}
  {\bibinfo  {journal} {Class. Quant. Grav.}\ }\textbf {\bibinfo {volume}
  {35}},\ \bibinfo {pages} {175001} (\bibinfo {year} {2018})},\ \Eprint
  {http://arxiv.org/abs/1805.11600} {arXiv:1805.11600 [gr-qc]} \BibitemShut
  {NoStop}%
\bibitem [{\citenamefont {Mei}(2023)}]{Mei:2023pho}%
  \BibitemOpen
  \bibfield  {author} {\bibinfo {author} {\bibfnamefont {J.}~\bibnamefont
  {Mei}},\ }\href@noop {} {\  (\bibinfo {year} {2023})},\ \Eprint
  {http://arxiv.org/abs/2311.18409} {arXiv:2311.18409 [gr-qc]} \BibitemShut
  {NoStop}%
\bibitem [{\citenamefont {Frolov}\ \emph
  {et~al.}(2018{\natexlab{b}})\citenamefont {Frolov}, \citenamefont
  {Krtou\v{s}},\ and\ \citenamefont {Kubiz{\v n}{\'a}k}}]{Frolov:2017whj}%
  \BibitemOpen
  \bibfield  {author} {\bibinfo {author} {\bibfnamefont {V.~P.}\ \bibnamefont
  {Frolov}}, \bibinfo {author} {\bibfnamefont {P.}~\bibnamefont {Krtou\v{s}}},
  \ and\ \bibinfo {author} {\bibfnamefont {D.}~\bibnamefont {Kubiz{\v
  n}{\'a}k}},\ }\href {\doibase 10.1103/PhysRevD.97.104071} {\bibfield
  {journal} {\bibinfo  {journal} {Phys. Rev. D}\ }\textbf {\bibinfo {volume}
  {97}},\ \bibinfo {pages} {104071} (\bibinfo {year} {2018}{\natexlab{b}})},\
  \Eprint {http://arxiv.org/abs/1712.08070} {arXiv:1712.08070 [gr-qc]}
  \BibitemShut {NoStop}%
\bibitem [{\citenamefont {Hassaine}\ \emph {et~al.}(2025)\citenamefont
  {Hassaine}, \citenamefont {Kubiznak},\ and\ \citenamefont
  {Srinivasan}}]{Hassaine:2024mfs}%
  \BibitemOpen
  \bibfield  {author} {\bibinfo {author} {\bibfnamefont {M.}~\bibnamefont
  {Hassaine}}, \bibinfo {author} {\bibfnamefont {D.}~\bibnamefont {Kubiznak}},
  \ and\ \bibinfo {author} {\bibfnamefont {A.}~\bibnamefont {Srinivasan}},\
  }\href {\doibase 10.1103/PhysRevD.111.L061502} {\bibfield  {journal}
  {\bibinfo  {journal} {Phys. Rev. D}\ }\textbf {\bibinfo {volume} {111}},\
  \bibinfo {pages} {L061502} (\bibinfo {year} {2025})},\ \Eprint
  {http://arxiv.org/abs/2411.17805} {arXiv:2411.17805 [gr-qc]} \BibitemShut
  {NoStop}%
\bibitem [{\citenamefont {Srinivasan}(2025)}]{Srinivasan:2025hro}%
  \BibitemOpen
  \bibfield  {author} {\bibinfo {author} {\bibfnamefont {A.}~\bibnamefont
  {Srinivasan}},\ }\href {\doibase 10.1103/PhysRevD.111.064061} {\bibfield
  {journal} {\bibinfo  {journal} {Phys. Rev. D}\ }\textbf {\bibinfo {volume}
  {111}},\ \bibinfo {pages} {064061} (\bibinfo {year} {2025})},\ \Eprint
  {http://arxiv.org/abs/2501.00847} {arXiv:2501.00847 [gr-qc]} \BibitemShut
  {NoStop}%
\bibitem [{\citenamefont {Ballon~Bordo}\ \emph {et~al.}(2019)\citenamefont
  {Ballon~Bordo}, \citenamefont {Gray}, \citenamefont {Hennigar},\ and\
  \citenamefont {Kubiz{\v{n}}{\'a}k}}]{BallonBordo:2019vrn}%
  \BibitemOpen
  \bibfield  {author} {\bibinfo {author} {\bibfnamefont {A.}~\bibnamefont
  {Ballon~Bordo}}, \bibinfo {author} {\bibfnamefont {F.}~\bibnamefont {Gray}},
  \bibinfo {author} {\bibfnamefont {R.~A.}\ \bibnamefont {Hennigar}}, \ and\
  \bibinfo {author} {\bibfnamefont {D.}~\bibnamefont {Kubiz{\v{n}}{\'a}k}},\
  }\href {\doibase 10.1016/j.physletb.2019.134972} {\bibfield  {journal}
  {\bibinfo  {journal} {Phys. Lett. B}\ }\textbf {\bibinfo {volume} {798}},\
  \bibinfo {pages} {134972} (\bibinfo {year} {2019})},\ \Eprint
  {http://arxiv.org/abs/1905.06350} {arXiv:1905.06350 [hep-th]} \BibitemShut
  {NoStop}%
\bibitem [{\citenamefont {Bordo}\ \emph {et~al.}(2019)\citenamefont {Bordo},
  \citenamefont {Gray}, \citenamefont {Hennigar},\ and\ \citenamefont
  {Kubiz{\v{n}}{\'a}k}}]{Bordo:2019tyh}%
  \BibitemOpen
  \bibfield  {author} {\bibinfo {author} {\bibfnamefont {A.~B.}\ \bibnamefont
  {Bordo}}, \bibinfo {author} {\bibfnamefont {F.}~\bibnamefont {Gray}},
  \bibinfo {author} {\bibfnamefont {R.~A.}\ \bibnamefont {Hennigar}}, \ and\
  \bibinfo {author} {\bibfnamefont {D.}~\bibnamefont {Kubiz{\v{n}}{\'a}k}},\
  }\href {\doibase 10.1088/1361-6382/ab3d4d} {\bibfield  {journal} {\bibinfo
  {journal} {Class. Quant. Grav.}\ }\textbf {\bibinfo {volume} {36}},\ \bibinfo
  {pages} {194001} (\bibinfo {year} {2019})},\ \Eprint
  {http://arxiv.org/abs/1905.03785} {arXiv:1905.03785 [hep-th]} \BibitemShut
  {NoStop}%
\bibitem [{\citenamefont {Kol{\'a}{\v{r}}}\ \emph {et~al.}(2025)\citenamefont
  {Kol{\'a}{\v{r}}}, \citenamefont {Krtou{\v{s}}},\ and\ \citenamefont
  {Ossowski}}]{Kolar:2025kle}%
  \BibitemOpen
  \bibfield  {author} {\bibinfo {author} {\bibfnamefont {I.}~\bibnamefont
  {Kol{\'a}{\v{r}}}}, \bibinfo {author} {\bibfnamefont {P.}~\bibnamefont
  {Krtou{\v{s}}}}, \ and\ \bibinfo {author} {\bibfnamefont {M.}~\bibnamefont
  {Ossowski}},\ }\href {\doibase 10.1103/w2ct-wjh5} {\bibfield  {journal}
  {\bibinfo  {journal} {Phys. Rev. D}\ }\textbf {\bibinfo {volume} {112}},\
  \bibinfo {pages} {104021} (\bibinfo {year} {2025})},\ \Eprint
  {http://arxiv.org/abs/2507.21238} {arXiv:2507.21238 [gr-qc]} \BibitemShut
  {NoStop}%
\bibitem [{\citenamefont {Griffiths}\ and\ \citenamefont
  {Podolsk\'y}(2006)}]{Griffiths:2005qp}%
  \BibitemOpen
  \bibfield  {author} {\bibinfo {author} {\bibfnamefont {J.~B.}\ \bibnamefont
  {Griffiths}}\ and\ \bibinfo {author} {\bibfnamefont {J.}~\bibnamefont
  {Podolsk\'y}},\ }\href {\doibase 10.1142/S0218271806007742} {\bibfield
  {journal} {\bibinfo  {journal} {Int. J. Mod. Phys. D}\ }\textbf {\bibinfo
  {volume} {15}},\ \bibinfo {pages} {335} (\bibinfo {year} {2006})},\ \Eprint
  {http://arxiv.org/abs/gr-qc/0511091} {arXiv:gr-qc/0511091} \BibitemShut
  {NoStop}%
\bibitem [{\citenamefont {Ovcharenko}\ \emph
  {et~al.}(2025{\natexlab{a}})\citenamefont {Ovcharenko}, \citenamefont
  {Podolsk\'y},\ and\ \citenamefont {Astorino}}]{Ovcharenko:2024yyu}%
  \BibitemOpen
  \bibfield  {author} {\bibinfo {author} {\bibfnamefont {H.}~\bibnamefont
  {Ovcharenko}}, \bibinfo {author} {\bibfnamefont {J.}~\bibnamefont
  {Podolsk\'y}}, \ and\ \bibinfo {author} {\bibfnamefont {M.}~\bibnamefont
  {Astorino}},\ }\href {\doibase 10.1103/PhysRevD.111.024038} {\bibfield
  {journal} {\bibinfo  {journal} {Phys. Rev. D}\ }\textbf {\bibinfo {volume}
  {111}},\ \bibinfo {pages} {024038} (\bibinfo {year} {2025}{\natexlab{a}})},\
  \Eprint {http://arxiv.org/abs/2409.02308} {arXiv:2409.02308 [gr-qc]}
  \BibitemShut {NoStop}%
\bibitem [{\citenamefont {Ovcharenko}\ \emph
  {et~al.}(2025{\natexlab{b}})\citenamefont {Ovcharenko}, \citenamefont
  {Podolsk\'y},\ and\ \citenamefont {Astorino}}]{Ovcharenko:2025fxg}%
  \BibitemOpen
  \bibfield  {author} {\bibinfo {author} {\bibfnamefont {H.}~\bibnamefont
  {Ovcharenko}}, \bibinfo {author} {\bibfnamefont {J.}~\bibnamefont
  {Podolsk\'y}}, \ and\ \bibinfo {author} {\bibfnamefont {M.}~\bibnamefont
  {Astorino}},\ }\href {\doibase 10.1103/PhysRevD.111.084016} {\bibfield
  {journal} {\bibinfo  {journal} {Phys. Rev. D}\ }\textbf {\bibinfo {volume}
  {111}},\ \bibinfo {pages} {084016} (\bibinfo {year} {2025}{\natexlab{b}})},\
  \Eprint {http://arxiv.org/abs/2501.07537} {arXiv:2501.07537 [gr-qc]}
  \BibitemShut {NoStop}%
\bibitem [{\citenamefont {Anabal{\'o}n}\ \emph {et~al.}(2019)\citenamefont
  {Anabal{\'o}n}, \citenamefont {Gray}, \citenamefont {Gregory}, \citenamefont
  {Kubiz{\v{n}}{\'a}k},\ and\ \citenamefont {Mann}}]{Anabalon:2018qfv}%
  \BibitemOpen
  \bibfield  {author} {\bibinfo {author} {\bibfnamefont {A.}~\bibnamefont
  {Anabal{\'o}n}}, \bibinfo {author} {\bibfnamefont {F.}~\bibnamefont {Gray}},
  \bibinfo {author} {\bibfnamefont {R.}~\bibnamefont {Gregory}}, \bibinfo
  {author} {\bibfnamefont {D.}~\bibnamefont {Kubiz{\v{n}}{\'a}k}}, \ and\
  \bibinfo {author} {\bibfnamefont {R.~B.}\ \bibnamefont {Mann}},\ }\href
  {\doibase 10.1007/JHEP04(2019)096} {\bibfield  {journal} {\bibinfo  {journal}
  {JHEP}\ }\textbf {\bibinfo {volume} {04}},\ \bibinfo {pages} {096} (\bibinfo
  {year} {2019})},\ \Eprint {http://arxiv.org/abs/1811.04936} {arXiv:1811.04936
  [hep-th]} \BibitemShut {NoStop}%
\end{thebibliography}%
\bibliographystyle{apsrev4-1}

\end{document}